\documentclass[a4paper,UKenglish]{lipics-v2019}

\bibliographystyle{plainurl}

\usepackage{subcaption}
\usepackage[arrow,matrix,frame,curve,cmtip]{xy}\CompileMatrices
\usepackage{graphicx,epsfig,color}

\usepackage{tikz}
\usetikzlibrary{automata,arrows,shapes.geometric,positioning}
\usepackage{tikz-cd}
\usetikzlibrary{decorations.pathmorphing}
\tikzstyle{enode}=[rectangle, draw]
\tikzstyle{anode}=[diamond, shape aspect=2, draw]

\usepackage{amsmath,stmaryrd,bm,faktor,hyperref}
\usepackage{amsthm,mathtools}
\usepackage[most]{tcolorbox}


\usepackage[appendix=append,bibliography=common]{apxproof}
\newtheoremrep{theorem}{Theorem}[section]
\newtheoremrep{lemma}[theorem]{Lemma}
\newtheoremrep{proposition}[theorem]{Proposition}
\newtheoremrep{corollary}[theorem]{Corollary}

\newcommand{\nat}{\ensuremath{\mathbb{N}}}
\renewcommand{\phi}{\varphi} 
\renewcommand{\vec}[1]{\ensuremath{\boldsymbol{#1}}}
\newcommand{\subvec}[3]{\ensuremath{\vec{#1}_{{#2},{#3}}}}

\newcommand{\sol}[2][]{\ensuremath{\mathit{sol}_{#1}({#2})}}

\newcommand{\sysupto}[2]{\ensuremath{d({#1},{#2})}}
\newcommand{\Emoves}[1]{\ensuremath{\mathbf{E}({#1})}}
\newcommand{\Amoves}[1]{\ensuremath{\mathbf{A}({#1})}}

\newcommand{\moves}[1]{\ensuremath{\mathsf{M}({#1})}}
\newcommand{\owner}[1]{\ensuremath{\mathsf{P}({#1})}}
\newcommand{\oppo}[1]{\ensuremath{\overline{#1}}}
\newcommand{\confs}{\ensuremath{\mathit{Pos}}}
\newcommand{\Econfs}{\ensuremath{\mathit{Pos}_\exists}}
\newcommand{\Aconfs}{\ensuremath{\mathit{Pos}_\forall}}
\newcommand{\priori}[1]{\ensuremath{\mathsf{i}({#1})}}

\newcommand{\prd}[1]{{#1}^{\scriptscriptstyle \times}}

\newcommand{\interval}[1]{\ensuremath{\underline{#1}}}

\newcommand{\subst}[3]{\ensuremath{{#1}[{#2}:={#3}]}}

\newcommand{\lub}{\ensuremath{\bigsqcup}}
\newcommand{\glb}{\ensuremath{\bigsqcap}}

\newcommand{\cone}[1]{\ensuremath{\mathop{\downarrow\!{#1}}}}

\newcommand{\filter}[1]{\ensuremath{\mathop{\uparrow\!{#1}}}}
\newcommand{\filtersub}[2]{\ensuremath{\mathop{\uparrow_{#2}\!{#1}}}}
\newcommand{\Pow}[1]{\ensuremath{\mathbf{2}^{#1}}}
\newcommand{\Rel}[1]{\ensuremath{\mathsf{Rel}({#1})}}
\newcommand{\asc}[1]{\ensuremath{\lambda_{#1}}}

\newcommand{\ceil}[1]{\ensuremath{\lceil{#1}\rceil}}
\newcommand{\true}{\ensuremath{\mathbf{t}}}
\newcommand{\false}{\ensuremath{\mathbf{f}}}
\newcommand{\PVar}{\ensuremath{\mathit{PVar}}}
\newcommand{\Prop}{\ensuremath{\mathit{Prop}}}
\newcommand{\sem}[2][]{\ensuremath{|\!|{#2}|\!|^{#1}}}
\newcommand{\distr}[1]{\ensuremath{\mathcal{D}({#1})}}
\newcommand{\semdia}{\ensuremath{\blacklozenge}}
\newcommand{\sembox}{\ensuremath{\blacksquare}}

\usepackage{algpseudocode}

\newcommand{\cmd}[1]{\ensuremath{\mathbf{#1}}}
\newcommand{\fnname}[1]{\textsc{#1}}

\title{Abstraction, Up-To Techniques and Games for Systems of Fixpoint Equations}
\titlerunning{Abstraction, Up-To Techniques and Games for Systems of Fixpoint Equations}

\author{Paolo Baldan}{Universit\`a Padova, Italy}{baldan@math.unipd.it}{https://orcid.org/0000-0001-9357-5599}{}
\author{Barbara K\"onig}{Universit\"at Duisburg-Essen, Germany}{barbara\_koenig@uni-due.de}{https://orcid.org/0000-0002-4193-2889}{}      
\author{Tommaso Padoan}{Universit\`a di Padova, Italy} {padoan@math.unipd.it}{}{}

\funding{Supported by the PRIN Project Analysis of Program Analyses
  (ASPRA), the University of Padova project ASTA and the DFG projects
  BEMEGA and SpeQt.}

\authorrunning{P. Baldan and B. K\"onig and T. Padoan}

\Copyright{Paolo Baldan, Barbara König, and Tommaso Padoan}

\keywords{fixpoint equation systems, complete lattices, parity games,
  abstract interpretation, up-to techniques, local algorithms,
  $\mu$-calculus, bisimilarity}

\begin{CCSXML}
<ccs2012>
<concept>
<concept_id>10003752.10003790.10011192</concept_id>
<concept_desc>Theory of computation~Verification by model checking</concept_desc>
<concept_significance>500</concept_significance>
</concept>
<concept>
<concept_id>10011007.10010940.10010992.10010998.10003791</concept_id>
<concept_desc>Software and its engineering~Model checking</concept_desc>
<concept_significance>500</concept_significance>
</concept>
</ccs2012>
\end{CCSXML}

\ccsdesc[500]{Theory of computation~Verification by model checking}
\ccsdesc[500]{Software and its engineering~Model checking}


\nolinenumbers

\EventEditors{Igor Konnov and Laura Kov\'{a}cs}
\EventNoEds{2}
\EventLongTitle{31st International Conference on Concurrency Theory (CONCUR 2020)}
\EventShortTitle{CONCUR 2020}
\EventAcronym{CONCUR}
\EventYear{2020}
\EventDate{September 1--4, 2020}
\EventLocation{Vienna, Austria}
\EventLogo{}
\SeriesVolume{2017}
\ArticleNo{33}

\begin{document}

\maketitle

\begin{abstract}
  Systems of fixpoint equations over complete lattices, consisting of
  (mixed) least and greatest fixpoint equations, allow one to express
  many verification tasks such as model-checking of various kinds
  of specification logics or the check of coinductive behavioural
  equivalences.
  In this paper we develop a theory of approximation for systems of
  fixpoint equations in the style of abstract interpretation: a system
  over some concrete domain is abstracted to a system in a suitable
  abstract domain, with conditions ensuring that the abstract solution
  represents a sound/complete overapproximation of the concrete
  solution.
  Interestingly, up-to techniques, a classical approach used in
  coinductive settings to obtain easier or feasible proofs, can be
  interpreted as abstractions in a way that they naturally
  fit into our framework and extend to systems of equations.
  Additionally, relying on the approximation theory, we can 
  characterise the solution of systems of fixpoint equations
  over complete lattices in terms of a suitable parity game,
  generalising some recent work that was restricted to continuous
  lattices.
  The game view opens the way for the development of local algorithms
  for characterising the solution of such equation systems. We
  describe a local algorithm for checking the winner on specific game
  positions. This corresponds to answering the associated verification
  question (i.e., for model checking, whether a state satisfies a
  formula or, for equivalence checking, whether two states are
  behaviourally equivalent).  The algorithm can be combined with
  abstraction and up-to techniques, thus providing ways of speeding up
  the computation.
\end{abstract}

\section{Introduction}

Systems of fixpoint equations over complete lattices, consisting of
(mixed) least and greatest fixpoint equations, allow one to uniformly
express many verification tasks.  Notable examples come from the area
of model-checking. In fact, in order to express properties of infinite
computations, specification logics almost invariably rely on some
notion of recursion which leads to the use of fixpoints as key
mathematical tool.

Invariant/safety properties can be characterised as greatest
fixpoints, while liveness/reachability properties as least
fixpoints. Using both least and greatest fixpoints leads to expressive
specification logics. The $\mu$-calculus~\cite{k:prop-mu-calculus} is
a prototypical example, encompassing various other logics such as LTL
and CTL.  Another area of special interest for the present paper is
that of behavioural equivalences, which typically arise as solutions
of greatest fixpoint equations. The most famous example is
bisimilarity that can be seen
as the greatest fixpoint of a suitable operator over the lattice of
binary relations on
states (see, e.g.,~\cite{s:bisimulation-coinduction}).

In the first part of the paper we develop a theory of approximation
for systems of equations in the style of abstract interpretation.
The general idea of abstract
interpretation~\cite{cc:ai-unified-lattice-model,CC:SDPAF} consists of
extracting properties of programs by defining an approximated program
semantics over a so-called abstract domain, usually a complete
lattice.
Concrete and abstract
semantics are typically
expressed in terms of (systems of) least fixpoint equations, 
with conditions ensuring that the approximation
obtained is sound, i.e., that properties derived from
the abstract semantics are also valid at the concrete level. In an
ideal situation also the converse holds and the abstract interpretation is called complete (see
e.g.~\cite{GRS:MAIC}).
Abstract interpretation has been applied also for the model checking of various kinds of mu-calculi and temporal logics (see, e.g.,~\cite{glls:abstraction-full-mu-calculus,lgsbb:property-preserving-abstractions,CC:TLA,s:relations-abstraction-refinement,dgg:ai-reactive,lt:modal-process-logic}).

We generalise this idea to systems of fixpoint equations, where least
and greatest fixpoints can coexist (\S\ref{se:abstractions}).
A system over some concrete domain $C$ is abstracted by a system over
some abstract domain $A$. Suitable conditions are
identified that ensure the soundness and completeness of the
approximation.
This enables the use of the approximation theory on a number of
verification tasks.
We show how to recover some results on property
preserving abstractions for the
$\mu$-calculus~\cite{lgsbb:property-preserving-abstractions}. We also discuss a fixpoint extension of {\L}ukasiewicz logic,
considered in~\cite{MS:MS} as a precursor to model-checking PCTL or
probabilistic $\mu$-calculi.

When dealing with greatest fixpoints, a key proof technique relies on
the coinduction principle, which uses the fact that a
monotone function $f$ over a complete lattice has a greatest fixpoint
$\nu f$, which is the join of all post-fixpoints, i.e., the elements
$l$ such that $l \sqsubseteq f(l)$. As a consequence proving
$l \sqsubseteq f(l)$ suffices to conclude that $l \sqsubseteq \nu
f$.

Up-to techniques have been proposed for
``simplifying''
proofs~\cite{Mil:CC,SM:PBUT,ps:enhancements-coinductive,p:complete-lattices-up-to}
and for reducing the search space in
verification (e.g., in~\cite{BP:NFA}, up-to techniques applied to language equivalence of NFAs
are shown to provide in many cases an exponential speed-up).
A sound up-to function is a function $u$ on the lattice such
that $\nu (f \circ u) \sqsubseteq \nu f$ and hence
$l \sqsubseteq f(u(l))$ implies
$l \sqsubseteq \nu (f \circ u) \sqsubseteq \nu f$. The
characteristics of $u$ (typically, extensiveness) make it
easier to show that an element is a post-fixpoint of $f \circ u$
rather than a post-fixpoint of $f$.

We show that
up-to techniques admit a natural interpretation as abstractions in our
framework
(\S\ref{se:up-to-from-abstraction}). This allows us to generalise
the theory of up-to techniques to systems of fixpoint equations and contributes to the understanding of the relation
between abstract interpretation and up-to techniques, a theme that
received some recent
attention~\cite{bggp:sound-up-to-complete-abstract}.

We have recently shown in~\cite{BKMP:FPCL} that the solution of
systems of fixpoint equations can be characterised in terms of a
parity game when working in a suitable subclass of complete lattices,
the so-called continuous lattices~\cite{Scott:CL}.
Here, relying on our approximation theory, we get rid of continuity and design a game
that works for
general complete lattices
(\S\ref{se:game-characterization}).

The above results open the way to the development of game-theoretical
algorithms, possibly integrating abstraction and up-to techniques, for
solving systems of equations over complete lattices.  While global
algorithms deciding the game at all positions, based on progress
measures~\cite{j:progress-measures-parity}, have already been studied
in~\cite{hsc:lattice-progress-measures,BKMP:FPCL}, here we focus on
local algorithms, confining the attention to specific positions.
For instance, in the case of the $\mu$-calculus, rather than computing
the set of states satisfying some formula $\varphi$, one could be
interested in checking whether a specific state satisfies or does not
satisfy $\varphi$. For probabilistic logics, rather than determining
the full evaluation of $\varphi$, we could be interested in
determining the value for a specific state or only in establishing a
bound for such a value. Similarly, in the case of behavioural
equivalences, rather than computing the full behavioural relation, one
could be interested in determining whether two specific states are
equivalent.
Taking inspiration from backtracking methods for
bisimilarity~\cite{h:proving-up-to} and for the
$\mu$-calculus~\cite{s:local-modcheck-games,ss:practical-modcheck-games},
we first design a local  (also called on-the-fly) algorithm for the
case of a single equation (\S\ref{ss:algorithmic-view}). The algorithm is then extended to general
systems~\S\ref{se:on-the-fly}.
We also show how these algorithms can be
enhanced with up-to techniques.

This also establishes a link with some recent work relating abstract
interpretation and up-to
techniques~\cite{bggp:sound-up-to-complete-abstract} and exploiting
up-to techniques for language equivalence on NFAs~\cite{BP:NFA}.

This paper is the full version of
\cite{bkp:abstraction-up-to-games-fixpoint}, extended with additional
examples, containing all proofs and a description of the local
algorithm for general systems.

\section{Preliminaries and Notation}
\label{sec:preliminaries}

A preordered or partially ordered set $\langle P, \sqsubseteq \rangle$
is often denoted simply as $P$, omitting the (pre)order
relation. Given $X \subseteq P$, we denote by
$\cone{X} = \{p\in P\mid \exists x\in X.\ p \sqsubseteq x\}$ the
\emph{downward-closure} and by
$\filter{X} = \{p\in P\mid \exists x\in X.\ x \sqsubseteq p\}$ the
\emph{upward-closure} of $X$.
The \emph{join} and the \emph{meet} of a
subset $X \subseteq P$ (if they exist) are denoted $\lub X$
and $\glb X$, respectively.
 
\begin{definition}[complete lattice, basis]
  A \emph{complete lattice} is a partially ordered set
  $(L, \sqsubseteq)$ such that each subset $X \subseteq L$ admits a
  join $\lub X$ and a meet $\glb X$. A complete lattice
  $(L, \sqsubseteq)$ always has a least element
  $\bot = \lub \emptyset$ and a greatest element
  $\top = \glb \emptyset$.
  A \emph{basis} for a complete lattice is a subset $B_L \subseteq L$ such that
  for each $l \in L$ it holds that
  $l = \lub (\cone{l}\, \cap\, B_L)$.
\end{definition}

For instance, the powerset of any set $X$, ordered by subset inclusion
$(\Pow{X}, \subseteq)$ is a complete lattice. Join is union, meet is
intersection, top ($\top$) is
$X$ and bottom ($\bot$) is $\emptyset$. A basis is the set of singletons
$B_{\Pow{X}} = \{ \{x\} \mid x \in X\}$.
Another complete lattice used in the paper is the real interval
$[0,1]$ with the usual order $\leq$. Join and meet are the sup and inf
over the reals, $0$ is bottom and $1$ is top. Any dense subset,
e.g., the set of rationals $\mathbb{Q} \cap (0,1]$, is a basis.
  
A function $f\colon L \to L$ is \emph{monotone} if for all
$l, l' \in L$, if $l \sqsubseteq l'$ then $f(l) \sqsubseteq f(l')$. By
Knaster-Tarski's theorem~\cite[Theorem 1]{t:lattice-fixed-point}, any
monotone function $f$ on a complete lattice has a least fixpoint
arising as the meet of all pre-fixpoints
$\mu f = \glb \{ l \mid f(l) \sqsubseteq l \}$ and a greatest fixpoint
arising as the join of all post-fixpoints
$\nu f = \lub \{ l \mid l \sqsubseteq f(l) \}$.

The least and greatest fixpoint can also be obtained by iterating the
function on the bottom and top elements of the lattice. This is often
referred to as Kleene's theorem (at least for continuous functions)
and it is one of the pillars of abstract
interpretation~\cite{CC:CCVTFP}. Given a complete lattice $L$, define its
\emph{height} $\asc{L}$ as the supremum of the length of any strictly
ascending, possibly transfinite, chain. Then we have the following result.

\begin{theorem}[Kleene's iteration~\cite{CC:CCVTFP}]
  Let $L$ be a complete lattice and let $f \colon L \to L$ be a monotone
  function.  Consider the (transfinite) ascending chain
  $(f^{\beta}(\bot))_{\beta}$ where $\beta$ ranges over the ordinals,
  defined by $f^0(\bot) = \bot$,
  $f^{\alpha+1}(\bot) = f(f^\alpha(\bot))$ for any ordinal $\alpha$
  and $f^{\alpha}(\bot) = \lub_{\beta < \alpha} f^{\beta}(\bot)$ for
  any limit ordinal $\alpha$. Then $\mu f = f^{\gamma}(\bot)$ for some
  ordinal $\gamma \leq \asc{L}$. The greatest fixpoint $\nu f$ can be
  characterised dually, via the (transfinite) descending chain
  $(f^{\alpha}(\top))_{\alpha}$.
\end{theorem}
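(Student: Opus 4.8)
The plan is a standard transfinite-iteration argument in four moves. First I would establish that the sequence $(f^\beta(\bot))_\beta$ is (weakly) ascending, i.e.\ $f^\alpha(\bot) \sqsubseteq f^\beta(\bot)$ whenever $\alpha \le \beta$. The cleanest route is to prove first, by transfinite induction on $\alpha$, that $f^\alpha(\bot) \sqsubseteq f^{\alpha+1}(\bot)$: the base case is $\bot \sqsubseteq f(\bot)$; the successor case applies monotonicity of $f$ to the inductive hypothesis; and at a limit ordinal $\alpha$ one observes that $f(f^\alpha(\bot))$ is an upper bound of $\{f^\beta(\bot) \mid \beta < \alpha\}$ — using the inductive hypothesis together with $f^\beta(\bot) \sqsubseteq f^\alpha(\bot)$ and monotonicity — hence it dominates the join $f^\alpha(\bot)$. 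A second, routine transfinite induction then upgrades this to full monotonicity of the chain.

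Second, I would argue that the chain stabilises at some ordinal $\gamma \le \asc{L}$, meaning $f^\gamma(\bot) = f^{\gamma+1}(\bot)$. If no such ordinal existed, then, since the chain is weakly ascending and consecutive terms always differ, it would be \emph{strictly} ascending along all ordinals, yielding an injection of the (proper) class of ordinals into the set $L$, which is impossible. So a stabilisation ordinal exists; taking $\gamma$ minimal, every earlier consecutive step $f^\beta(\bot) \sqsubseteq f^{\beta+1}(\bot)$ with $\beta < \gamma$ is strict, so the initial segment of the chain up to $\gamma$ is strictly ascending, and hence $\gamma \le \asc{L}$ by definition of the lattice height (distinguishing whether $\gamma$ is a successor or a limit when reading off the order type of that segment).

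Third, stabilisation gives $f^\gamma(\bot) = f(f^\gamma(\bot))$, so $f^\gamma(\bot)$ is a fixpoint of $f$. To see that it is the \emph{least} one, I would show by transfinite induction that $f^\beta(\bot) \sqsubseteq l$ for every pre-fixpoint $l$ (with $f(l) \sqsubseteq l$): the base and limit cases are immediate, and the successor case uses monotonicity of $f$ together with $f(l) \sqsubseteq l$. Instantiating $l := \mu f$ yields $f^\gamma(\bot) \sqsubseteq \mu f$, while $\mu f \sqsubseteq f^\gamma(\bot)$ holds because $\mu f$ is the meet of all pre-fixpoints and $f^\gamma(\bot)$ is one; hence $\mu f = f^\gamma(\bot)$. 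Finally, the dual characterisation of $\nu f$ via the descending chain $(f^\alpha(\top))_\alpha$ follows by running the whole argument in $L^{\mathit{op}}$, i.e.\ replacing $\sqsubseteq, \bot, \lub, \mu$ throughout by $\sqsupseteq, \top, \glb, \nu$.

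I expect the only genuinely delicate point to be the stabilisation step and the precise bound $\gamma \le \asc{L}$: one has to be careful about the order type of the strictly ascending initial segment of the iteration (in particular whether the least stabilisation ordinal is a successor or a limit) in order to match it cleanly against the definition of the height $\asc{L}$. Everything else is a straightforward transfinite induction, using only monotonicity of $f$ and completeness of $L$.
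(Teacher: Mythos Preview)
Your proposal is correct and follows the standard transfinite-induction argument for Kleene iteration. Note, however, that the paper does not actually give its own proof of this theorem: it is stated as a known result with a citation to~\cite{CC:CCVTFP}, so there is nothing in the paper to compare your argument against beyond the fact that your approach is the expected one.
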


Given a complete lattice $L$, a subset
$X \subseteq L$ is \emph{directed} if $X \neq \emptyset$ and every
pair of elements in $X$ has an upper bound in $X$. If $L, L'$ are
complete lattices, a function $f : L \to L'$ is
\emph{(directed-)continuous} if for any directed set $X \subseteq L$
it holds that $f(\lub X) = \lub f(X)$. The function $f$ is called \emph{strict} if $f(\bot) = \bot$. \emph{Co-continuity} and \emph{co-strictness} are defined dually.

\begin{definition}[Galois connection]
  Let $(C, \sqsubseteq)$, $(A, \leq)$ be complete lattices. A
  \emph{Galois connection} (or \emph{adjunction}) is a pair of
  monotone functions $\langle \alpha, \gamma\rangle$ such that
  $\alpha : C \to A$, $\gamma : A \to C$ and for all $a \in A$ and
  $c \in C$ it holds that $\alpha(c) \leq a$ iff
  $c \sqsubseteq \gamma (a)$.

  Equivalently, for all
  $a \in A$ and $c \in C$, (i)
  $c \sqsubseteq \gamma (\alpha(c))$ and
  (ii)~$\alpha(\gamma(a)) \leq a$.
  In this case we will write
  $\langle \alpha, \gamma\rangle : C \to A$.
  The Galois connection is called an \emph{insertion} when
  $\alpha \circ \gamma = id_A$.
\end{definition}

For a Galois connection
$\langle \alpha, \gamma \rangle: C \to A$, the function
$\alpha$ is called the left (or lower) adjoint and $\gamma$ the right
(or upper) adjoint.
The left adjoint $\alpha$ preserves all joins and the right adjoint
$\gamma$ preserves all meets. Hence, in particular, the left adjoint is
strict and continuous, while the right adjoint is co-strict and co-continuous.

A function $f : L \to L$ is \emph{idempotent} if $f \circ f = f$ and
\emph{extensive} if $l \sqsubseteq f(l)$ for all $l \in L$. When $f$
is monotone, extensive and idempotent it is called an \emph{(upper)
  closure}. In this case, $\langle f, i\rangle : L \to f(L)$, where
$i$ is the inclusion, is an insertion
and
$f(L) = \{ f(l) \mid l \in L \}$ is a complete lattice.

We will often consider tuples of elements.
Given a set $A$, an $n$-tuple in $A^n$ is denoted by a
boldface letter $\vec{a}$ and its components are
denoted as $\vec{a} = (a_1, \ldots, a_n)$. For an index
$n \in \mathbb{N}$ we write $\interval{n}$ for the
integer interval $\{1, \ldots, n\}$. Given $\vec{a} \in A^n$ and
$i, j \in \interval{n}$ we write $\subvec{a}{i}{j}$ for the subtuple
$(a_i, a_{i+1}, \ldots, a_j)$.
The empty tuple is denoted by $()$.
Given two tuples $\vec{a} \in A^m$ and $\vec{a}' \in A^n$ we denote by $(\vec{a}, \vec{a'})$ or simply by $\vec{a} \vec{a'}$ their concatenation in $A^{m+n}$.

Given a complete lattice $(L,\sqsubseteq)$ we will denote by
$(L^n,\sqsubseteq)$ the set of $n$-tuples endowed with the
\emph{pointwise order} defined, for $\vec{l}, \vec{l'} \in L^n$, by
$\vec{l} \sqsubseteq \vec{l'}$ if $l_i \sqsubseteq l_i'$ for all
$i \in \interval{n}$.
The structure $(L^n, \sqsubseteq)$ is a complete lattice.
More generally, for any
set $X$, the set of functions $L^X = \{ f \mid f\colon X \to L \}$, endowed with pointwise order, is a complete lattice.

A tuple of functions $\vec{f} = (f_1, \ldots, f_m)$ with
$f_i : X \to Y$, will be seen itself as a function
$\vec{f} : X \to Y^m$, defined by
$\vec{f}(x) = ( f_1(x), \ldots, f_m(x) )$. We will also need to
consider the \emph{product function} $\prd{\vec{f}} : X^m \to Y^m$, defined
by $\prd{\vec{f}}(x_1, \ldots, x_m) = ( f_1(x_1), \ldots, f_m(x_m) )$.

\section{Systems of Fixpoint Equations over Complete Lattices}
\label{sec:system}

We deal with systems of (fixpoint) equations over
some complete lattice, where, for each equation one can be interested
either in the least or in the greatest solution. We define systems, their solutions and we provide some examples that will be used as running examples.

\begin{definition}[system of equations]
  \label{de:system}
  Let $L$ be a complete lattice. A system of equations $E$ over $L$ is
  an ordered list of $m$ equations of the form
  $x_i =_{\eta_i} f_i(x_1,\ldots,x_m)$, where $f_i\colon L^m\to L$ are
  monotone functions (with respect to the pointwise order on $L^m$)
  and $\eta_i\in\{\mu,\nu\}$.
  The system will often be denoted as
  $\vec{x} =_{\vec{\eta}} \vec{f} (\vec{x})$, where $\vec{x}$,
  $\vec{\eta}$ and $\vec{f}$ are the obvious tuples.
  We denote by
  $\emptyset$ the system with no equations.
\end{definition}

Systems
of this kind have been often considered
in connection to verification problems 
(see e.g.,~\cite{cks:faster-modcheck-mu,s:fast-simple-nested-fixpoints,hsc:lattice-progress-measures,BKMP:FPCL}).
In particular,~\cite{hsc:lattice-progress-measures,BKMP:FPCL} work on general classes of complete
lattices.

Note that $\vec{f}$ can be seen as a function $\vec{f}\colon L^m \to L^m$. The solution of the system is a selected fixpoint of such function.
We first need some auxiliary notation.

\begin{definition}[substitution]
  \label{de:substitution}
  Given a system $E$ of $m$ equations over a complete lattice $L$ of the kind
  $\vec{x} =_{\vec{\eta}} \vec{f} (\vec{x})$, an index
  $i \in \interval{m}$ and $l \in L$ we write
  $\subst{E}{x_i}{l}$ for the system of $m-1$ equations obtained from
  $E$ by removing the $i$-th equation and replacing $x_i$ by $l$ in the other equations,
  i.e., if $\vec{x} = \vec{x}' x_i \vec{x}''$,
  $\vec{\eta} = \vec{\eta}' \eta_i \vec{\eta}''$ and
  $\vec{f} = \vec{f}' f_i \vec{f}''$ then $\subst{E}{x_i}{l}$ is
  $\vec{x}' \vec{x}'' =_{\vec{\eta}' \vec{\eta}''} \vec{f}'
  \vec{f}''(\vec{x}', l, \vec{x}'')$.
\end{definition}

For solving a system of $m$ equations
$\vec{x} =_{\vec{\eta}} \vec{f}(\vec{x})$, the last variable $x_m$ is
considered as a fixed parameter $x$ and the system of $m-1$ equations
$\subst{E}{x_m}{x}$ that arises from dropping the last equation is recursively
solved. This produces an $(m-1)$-tuple parametric on $x$, i.e., we get
$\subvec{s}{1}{m-1}(x) = \sol{\subst{E}{x_m}{x}}$. Inserting this
parametric solution into the last equation, we get an equation in a
single variable $x =_{\eta_m} f_m(\subvec{s}{1}{m-1}(x), x)$ that can
be solved by taking for the
function
$\lambda x.\, f_m(\subvec{s}{1}{m-1}(x), x)$, the least or greatest
fixpoint, depending on whether the last equation is a $\mu$- or
$\nu$-equation.
This provides the $m$-th component of the solution
$s_m = \eta_m (\lambda x.\, f_m(\subvec{s}{1}{m-1}(x), x))$. The
remaining components
are obtained inserting $s_m$ in
the parametric solution $\subvec{s}{1}{m-1}(x)$ previously computed, i.e.,
$\subvec{s}{1}{m-1} = \subvec{s}{1}{m-1}(s_m)$.

\begin{definition}[solution]
  \label{de:solution}
  Let $L$ be a complete lattice and let $E$ be a system of $m$ equations over
  $L$ of the kind $\vec{x} =_{\vec{\eta}} \vec{f}(\vec{x})$.
  The \emph{solution} of $E$, denoted $\sol{E} \in L^m$, is defined
  inductively:
  \begin{center}
    $
      \sol{\emptyset} = () \qquad\qquad
      \sol{E} = (\sol{\subst{E}{x_m}{s_m}},\, s_m) 
    $
  \end{center}
  where $s_m = \eta_m (\lambda x.\, f_m(\sol{\subst{E}{x_m}{x}},x))$.
\end{definition}

In words, for solving a system of $m$ equations, the last
variable is considered as a fixed parameter $x$ and the system of
$m-1$ equations that arises from dropping the last equation is
recursively solved. This produces an $(m-1)$-tuple parametric on $x$,
i.e., we get $\subvec{s}{1}{m-1}(x) = \sol{\subst{E}{x_m}{x}}$. Inserting
this parametric solution into the last equation, 
we get an equation in a single variable
\begin{center}
  $x  =_{\eta_m} f_m(\subvec{s}{1}{m-1}(x), x)$
\end{center}
that can be solved by taking for the 
function
$\lambda x.\, f_m(\subvec{s}{1}{m-1}(x), x)$, the least or greatest
fixpoint, depending on whether the last equation is a $\mu$- or
$\nu$-equation.
This provides the $m$-th component of the solution
$s_m = \eta_m (\lambda x.\, f_m(\subvec{s}{1}{m-1}(x), x))$. The
remaining components of the solution are obtained inserting $s_m$ in
the parametric solution $\subvec{s}{1}{m-1}(x)$ previously computed, i.e.,
$\subvec{s}{1}{m-1} = \subvec{s}{1}{m-1}(s_m)$.

The order of equations matters: changing the order
typically leads to a different solution.

\begin{example}[solving a simple system of equations]
  \label{ex:order-of-eqns}
  Consider the powerset lattice $\Pow{S}$ of any non-empty set $S$ and the system of equations $E$ consisting of the following two equations
   \begin{center}
     $x =_\mu x \cup y$\\
     $y =_\nu x \cap y$
  \end{center}
  In order to solve the system $E$, initially we need to compute the
  solution of the first equation $x =_\mu x \cup y$ parametric in $y$,
  that is, $s_x(y) = \mu(\lambda x. (x \cup y)) = y$.
  Now we can solve the second equation $y =_\nu x \cap y$ replacing
  $x$ with the parametric solution, obtaining an equation in
  a single variable whose solution is
  $\nu (\lambda y. (s_x(y) \cap y)) = \nu (\lambda y. y) = S$. Finally, the
  solution of the first equation is obtained by inserting $y = S$ in
  the parametric solution $x = s_x(S) =S$.

  Observe that even in this simple example the order of the equations
  matters. Indeed, if we consider the system where the two
  equations above are swapped the solution is $x = y = \emptyset$.
\end{example}

\begin{example}[$\mu$-calculus formulae as fixpoint equations]
  \label{ex:mu}
  We adopt a standard $\mu$-calculus syntax. For fixed disjoint sets
  $\PVar$ of propositional variables, ranged over by $x, y, z, \ldots$
  and $\Prop$ of propositional symbols, ranged over by
  $p,q,r, \ldots$, formulae are defined by
  \begin{center}
    $\varphi \ ::=\ \true\ \mid\ \false\ \mid\ p \mid\ x\ \mid\ \varphi
    \land \varphi\ \mid\ \varphi \lor \varphi\ \mid\ \Box \varphi\ \mid\
    \Diamond \varphi\ \mid\ \eta x.\, \varphi$
  \end{center}
  where $p \in \Prop$, $x \in \PVar$ and $\eta \in \{ \mu, \nu
  \}$.

  The semantics of a formula is given with respect to an unlabelled
  transition system (or Kripke structure) $T = (\mathbb{S}_T, \to_T)$ where
  $\mathbb{S}_T$ is the set of states and
  $\to_T\ \subseteq \mathbb{S}_T \times \mathbb{S}_T$ is the transition
  relation. Given a formula $\varphi$ and an environment
  $\rho\colon \Prop \cup \PVar \to \Pow{\mathbb{S}_T}$ mapping each
  proposition or propositional variable to the set of states where it
  holds, we denote by $\sem[T]{\varphi}_\rho$ the semantics of $\varphi$
  defined as usual (see, e.g.,~\cite{bw:mu-calculus-modcheck}).

  \begin{figure}
    \captionsetup[subfigure]{justification=centering}
    \begin{subfigure}[b]{.22\textwidth}
      \centering
      \begin{tikzpicture}[->, >=stealth, nodes={draw, circle, minimum size=1.8em}, node distance=0.5cm]
        \node (a) {$a$};
        \node (c) [below=of a]{$c$};
        \node (b) [right=0.7 of a,fill=black!10] {$b$};
        \node (d) [below left=0.7 and 0 of b,fill=black!10]{$d$};
        \node (e) [right=of d,fill=black!10]{$e$};
        \path
        (a) edge (b)
        (a) edge [loop left] ()
        (a) edge (c) 
        (b) edge (d)
        (b) edge (e) 
        (c) edge [loop left] ()
        (d) edge [in=150, out=120, loop] ()
        (e) edge [in=30, out=60, loop] ();
      \end{tikzpicture}
      \caption{}
      \label{fi:running-ts}
    \end{subfigure}
    \begin{subfigure}[b]{.22\textwidth}
      \centering
      \begin{tikzpicture}[->, >=stealth, nodes={draw, circle, minimum size=1.6em, inner sep=0.3em}, node distance=0.5cm]
        \node (a) {$a$};
        \node (c) [below=of a]{$c$};
        \node (bd) [right=of a,fill=black!10] {{\footnotesize $bde$}};
        \path
        (a) edge (bd)
        (a) edge [loop left] ()
        (a) edge (c) 
        (c) edge [loop left] ()        
        (bd) edge [loop below] ();
      \end{tikzpicture}
      \caption{}
      \label{fi:running-ts-bis}
    \end{subfigure}
    \begin{subfigure}[b]{.22\textwidth}
      \centering
      $
        \begin{array}{lcl}
          x_1 & =_{\nu} & p \land \Box x_1\\
          x_2 & =_{\mu} & x_1 \lor \Diamond x_2
        \end{array}
      $
      \caption{}
      \label{fi:running-eqf}
    \end{subfigure}
    \hspace{2mm}
    \begin{subfigure}[b]{.22\textwidth}
      \centering
      $
        \begin{array}{lcl}
          x_1 & =_{\nu} & \{ b, d, e\} \cap \sembox_T x_1\\
          x_2 & =_{\mu} & x_1 \cup \semdia_T x_2
        \end{array}
      $
      \caption{}
      \label{fi:running-eq}
    \end{subfigure}
    \caption{}
  \end{figure}

  As observed by several authors
  (see, e.g.,~\cite{cks:faster-modcheck-mu,s:fast-simple-nested-fixpoints}),
  a $\mu$-calculus formula can be seen as a system of equations, with
  an equation for each fixpoint subformula.
  For instance, consider
  $\phi = \mu x_2.((\nu x_1.(p \land\Box x_1))\lor\Diamond x_2)$ that
  requires that a state is eventually reached from which $p$ always
  holds. The equational form is reported in Fig.~\ref{fi:running-eqf}.
  Consider a transition system $T = (\mathbb{S}_T, \to_T)$ where
  $\mathbb{S}_T = \{a,b,c,d, e\}$ and $\to_T$ is as depicted in
  Fig.~\ref{fi:running-ts}, with $p$ that holds in the grey states
  $b$, $d$ and $e$.
  Define the semantic counterpart of the modal operators as follows:
  given a
  relation $R \subseteq X \times X$ let 
  $\semdia_R, \sembox_R : \Pow{X} \to \Pow{X}$ be the functions
  defined,  for $Y \subseteq X$,  by
  $\semdia_R(Y) = \{x \in X \mid \exists y \in Y.\ (x,y) \in R\}$,
  $\sembox_R(Y) = \{x \in X \mid \forall y \in X. (x, y) \in R
  \Rightarrow y \in Y \}$.
  Then the formula $\varphi$ interpreted over the transition system
  $T$ leads to the system of equations over the lattice
  $\Pow{\mathbb{S}_T}$ in Fig.~\ref{fi:running-eq}, where we write
  $\semdia_T$ and $\sembox_T$ for $\semdia_{\to_T}$ and
  $\sembox_{\to_T}$.

  The solution is $x_1 = \{ b, d, e\}$ (states where $p$
  always holds) and $x_2 = \{ a, b, d, e \}$ (states where
  the formula $\varphi$ holds).
\end{example}

\begin{example}[{\L}ukasiewicz $\mu$-terms]
  \label{ex:reals}
  Systems of equations over the real interval $[0,1]$ have been
  considered in~\cite{MS:MS} as a precursor to model-checking PCTL or
  probabilistic $\mu$-calculi. More precisely, the authors study a
  fixpoint extension of {\L}ukasiewicz logic, referred to as
  {\L}ukasiewicz $\mu$-terms, whose syntax is as follows:
  \begin{center}
    $t \ ::=\ \mathbf{1}\ \mid\
    \mathbf{0}\ \mid\ x\ \mid\
    r \cdot t\ \mid\
    t \sqcup t\ \mid\
    t \sqcap t\ \mid\
    t \oplus t\ \mid\
    t \odot t\ \mid\
    \eta x. t$
  \end{center}
  where $x \in \PVar$ is a variable (ranging over $[0,1]$),
  $r \in [0,1]$ and $\eta \in \{\mu, \nu \}$. The various syntactic
  operators have a semantic counterpart, given in
  Fig.~\ref{fi:semantics-mu}.

  Then, each {\L}ukasiewicz $\mu$-term, in an environment
  $\rho : \PVar \to [0,1]$, can be assigned a semantics which is a real
  number in $[0,1]$, denoted as $\sem{t}_\rho$.
  Exactly as for the $\mu$-calculus, a {\L}ukasiewicz $\mu$-term can
  be naturally seen as a system of fixpoint equations over the lattice $[0,1]$.
  For instance, the term
  $\nu x_2.\ (\mu x_1.\ (\frac{5}{8} \oplus \frac{3}{8} x_2) \odot (\frac{1}{2}
  \sqcup (\frac{3}{8} \oplus \frac{1}{2} x_1)))$ from an example
  in~\cite{MS:MS}, can be written as the system:
  \begin{align*}
    x_1 & =_\mu (\frac{5}{8} \oplus \frac{3}{8} x_2) \odot (\frac{1}{2}
  \sqcup (\frac{3}{8} \oplus \frac{1}{2} x_1))\\
    x_2 & =_\nu x_1
  \end{align*}
\end{example}

\begin{example}[{\L}ukasiewicz $\mu$-calculus]
  \label{ex:lukasievicz-modal}
  The {\L}ukasiewicz $\mu$-calculus, as defined in~\cite{MS:MS},
  extends the {\L}ukasiewicz $\mu$-terms with propositions and modal
  operators. The syntax is as follows:
  \begin{center}
    $\varphi \ ::=\
    p\                      \mid\
    \bar{p}\                \mid\
    r \cdot \varphi\        \mid\
    \varphi \sqcup \varphi\ \mid\
    \varphi \sqcap \varphi\ \mid\
    \varphi \oplus \varphi\ \mid\
    \varphi \odot \varphi\  \mid\
    \Diamond \varphi\       \mid\
    \Box \varphi\           \mid\
    \eta x. t$
  \end{center}
  where $x$ ranges in a set $\PVar$ of propositional variables, $p$
  ranges in a set $\Prop$ of propositional symbols, each paired with
  an associated complement $\bar{p}$, and $\eta \in \{ \mu, \nu \}$.

  The {\L}ukasiewicz $\mu$-calculus can be seen as a logic for probabilistic
  transition systems.   It extends the quantitative modal $\mu$-calculus of~\cite{mm:quantitative-mu,hk:quantitative-analysis-mc} and it allows to encode PCTL~\cite{BdA:MCPNS}.
  For a finite set $\mathbb{S}$, the set of (discrete) probability
  distributions over $\mathbb{S}$ is defined as
  $\distr{\mathbb{S}} = \{ d : \mathbb{S} \to [0,1] \mid \sum_{s \in
    \mathbb{S}} d(s) =1 \}$. A formula is interpreted over a
  \emph{probabilistic non-deterministic transition system (PNDT)}
  $N = (\mathbb{S}, \to)$ where
  $\to\ \subseteq \mathbb{S} \times \distr{\mathbb{S}}$ is the
  transition relation.
  An example of PNDT can be found in Fig.~\ref{fi:pndt}. Imagine that
  the aim is to reach state $b$.  State $a$ has two transitions. A
  ``lucky'' one where the probability to get to $b$ is $\frac{1}{3}$
  and an ``unlucky'' one where $b$ is reached with probability
  $\frac{1}{6}$. For both transitions, with probability $\frac{1}{3}$
  one gets back to $a$ and then, with the residual probability, one
  moves to $c$.  Once in
  states $b$ or $c$, the system remains in the same state with
  probability $1$.

  \begin{figure}
    \begin{subfigure}[b]{.49\textwidth}
      \centering
      \begin{tabular}{ll}
        $\mathbf{0}(x) = 0$, $\mathbf{1}(x)=1$ & {\small (constant)}\\
        $r \cdot x = r x$ & {\small (scalar mult.)} \\
        $x \sqcup y = \max(x,y)$ & {\small (weak disj.)}\\
        $x \sqcap y = \min(x, y)$ & {\small (weak conj.)}\\
        $x \oplus y = \min(x+y, 1)$ &  {\small (strong disj.)}\\
        $x \odot y = \max(x+y -1, 0)$ &  {\small (strong conj.)}
      \end{tabular}
      \caption{Semantics of $\mu$-terms ($x, y \in [0,1]$)}
      \label{fi:semantics-mu}
    \end{subfigure}
    \begin{subfigure}[b]{.22\textwidth}
      \centering
      \begin{tikzpicture}[->, >=stealth, state/.style={draw, circle, minimum size=1.8em}, lab/.style={font=\scriptsize}, node distance=0.5cm]
        \node (a)  [state] {$a$};
        \node (t1) [above right=of a] {$\cdot$};
        \node (t2) [below right=of a] {$\cdot$};
        \node (b)  [state, fill=black!10, right=of t1]{$b$};
        \node (c)  [state,right=of t2]{$c$};
        \draw [thick, -]  (a)  to (t1) {};
        \draw [bend right=45, ->]  (t1) to node[lab, above] {$\frac{1}{3}$} (a);
        \draw [->] (t1) to node[lab, above] {$\frac{1}{3}$} (b);
        \draw [->] (t1) to node[lab, near start, left] {$\frac{1}{3}$} (c) ;
        \draw [thick, -] (a) to (t2) {};
        \draw [bend left=45, ->] (t2) to node[lab, below] {$\frac{1}{3}$}  (a);
        \draw [->] (t2) to node[lab, near start, left] {$\frac{1}{6}$} (b);
        \draw [->] (t2) to node[lab, below] {$\frac{1}{2}$} (c);
        \draw [->, loop right] (b) to node[lab,right] {$1$} (c);        
        \draw [->, loop right] (c) to node[lab,right] {$1$} (c);
      \end{tikzpicture}
      \caption{A PNDT}
      \label{fi:pndt}
    \end{subfigure}
    \begin{subfigure}[b]{.23\textwidth}
      \centering
      $
      \begin{array}{ll}
        \phi &
               \left\{               
               \begin{array}{l}
                 x_1 =_\nu p \odot \Box x_1\\
                 x_2 =_\mu x_1 \oplus \Diamond x_2 
               \end{array}
        \right.
        \\[5mm]
        \phi' &
                \left\{
                \begin{array}{l}
                  x_1 =_\nu p \odot \Box x_1\\
                  x_2 =_\mu x_1 \oplus \Box x_2 
                \end{array}
        \right.
      \end{array}      
      $
      \caption{Formulas as systems}
      \label{fi:pndt-formulas}
    \end{subfigure}
    \caption{}
  \end{figure}

  Given a formula $\varphi$ and an environment
  $\rho : \Prop \cup \PVar \to (\mathbb{S}\to [0,1])$ mapping each
  proposition or propositional variable to a real-valued function over
  the states, the semantics of $\varphi$ is a function
  $\sem[N]{\varphi}_\rho\colon \mathbb{S}\to [0,1]$ defined as expected
  using the semantic operators.
  In addition to those already discussed, we have the semantic
  operators for the complement and the modalities: for
  $v\colon \mathbb{S}\to [0,1]$
  \begin{align*}
    \bar{v}(x)
    = 1 - v(x)
    \qquad
    \semdia_N(v)(x)
    = \max_{x \to d} \sum_{y \in \mathbb{S}} d(y)\cdot v(y)
    \qquad
    \sembox_N(v)(x)
    = \min_{x \to d} \sum_{y \in \mathbb{S}} d(y)\cdot v(y)
  \end{align*}
  As it happens for the propositional $\mu$-calculus, also formulas of
  the {\L}ukasiewicz $\mu$-calculus can be seen as systems of equations,
  but on a different complete lattice, i.e., $[0,1]^{\mathbb{S}}$. For
  instance, consider the formulas
  $\varphi = \mu x_2. ( \nu x_1. (p \odot \Box x_1) \oplus \Diamond
  x_2)$ and
  $\varphi' = \mu x_2. ( \nu x_1. (p \odot \Box x_1) \oplus \Box x_2)$,
  rendered as (syntactic) equations in
  Fig.~\ref{fi:pndt-formulas}. Roughly speaking, they capture the
  probability of eventually satisfying forever $p$, with an angelic
  scheduler and a daemonic one, choosing at each step the best
  or worst transition, respectively. Assuming that $p$ holds with
  probability $1$ on $b$ and $0$ on $a$ and $c$, we have
  $\sem[]{\varphi}_\rho(a) = \frac{1}{2}$ and
  $\sem[]{\varphi'}_\rho(a)=\frac{1}{4}$.
\end{example}

\begin{example}[(bi)similarity over transition systems]
  \label{ex:bisimilarity}
  For defining (bi)similarity uniformly with the example on
  $\mu$-calculus, we work on unlabelled transition systems with atoms
  $T = (\mathbb{S}, \to, A)$ where $A \subseteq \Pow{\mathbb{S}}$ is a
  fixed set of atomic properties over the states.
  Everything can be easily adapted to labelled transition systems.

  Given $T = (\mathbb{S}, \to, A)$,
  consider the lattice of relations on $\mathbb{S}$, namely
  $\Rel{\mathbb{S}} = (\Pow{\mathbb{S}\times\mathbb{S}},
  \subseteq)$. We take as basis the set
  of singletons
  $B_{\Rel{\mathbb{S}}} = \{\{(x,y)\}\,\mid\,x,y \in \mathbb{S}\}$. The
  \emph{similarity relation} on $T$, denoted $\precsim_T$, is
  the greatest
  fixpoint of the function
  $\mathit{sim}_T : \Rel{\mathbb{S}} \to \Rel{\mathbb{S}}$, defined by
  \[
    \mathit{sim}_T(R) =
    \left\{ 
    \begin{array}{ll}
      \!\!(x,y) \in R\ \mid\!\!\!
      & \forall a \in A.\, (x \in a \Rightarrow y \in a) \land
      \forall x \to x'.\ \exists y \to y'.\ (x',y') \in R
    \end{array}
    \right\}
  \]
  In other words it can be seen as the solution of a system consisting
  of a single greatest fixpoint equation $x =_\nu \mathit{sim}_T(x)$.

  For instance, consider the transition system $T$ in
  Fig.~\ref{fi:running-ts} and take $p = \{ b, d , e \}$ as the only
  atom.
  Then similarity $\precsim_T$ is the transitive reflexive closure
  of $\{(c,a), (a,b), (b,d), (d,e), (e,b)\}$.

  Bisimilarity $\sim_T$ can be obtained analogously as the greatest
  fixpoint of
  $\mathit{bis}_T(R) = \mathit{sim}_T(R) \cap \mathit{sim}_T(R^{-1})$.
  In the transition system $T$ above, bisimilarity $\sim_T$ is the
  least equivalence such that $b \sim_T d \sim_T e$.
\end{example}

\section{Approximation for Systems of Fixpoint Equations}
\label{se:abstractions}

In this section we design a theory of approximation for systems of
fixpoint equations over complete lattices. The general setup is
borrowed from abstract
interpretation~\cite{cc:ai-unified-lattice-model,CC:SDPAF}, where a
concrete domain $C$ and an abstract domain $A$ are fixed. Semantic
operators on the concrete domain $C$ have a counterpart in the
abstract domain $A$, and suitable conditions can be imposed on such
operators to ensure that the least fixpoints of the abstract
operators
are sound
and/or complete approximations of the fixpoints of their concrete
counterparts.

Similarly, here we will have a system of equations
$\vec{x} =_{\vec{\eta}} \vec{f}^C(\vec{x})$ over a concrete
domain $C$ and its abstract counterpart
$\vec{x} =_{\vec{\eta}} \vec{f}^A(\vec{x})$ over an abstract
domain $A$, and we want that the solution of the latter
provides an approximation of the solution of the former.

Let us first focus on the case of a single equation. Let
$(C, \sqsubseteq)$ and $(A, \leq)$ be complete lattices and let
$f^C : C \to C$ and $f^A : A \to A$ be monotone functions. The fact
that $f^A$ is a sound (over)approximation of $f^C$ can be formulated
in terms of a concretisation function $\gamma : A \to C$, that maps
each abstract element $a \in A$ to a concrete element
$\gamma(a) \in C$, for which, intuitively, $a$ is an
overapproximation.  In the setting of abstract interpretation, where
the interest is for program semantics, typically expressed in terms of
least fixpoints, the desired \emph{soundness} property is
$\mu f^C \sqsubseteq \gamma(\mu f^A)$. A standard sufficient condition
for soundness
(see~\cite{cc:ai-unified-lattice-model,CC:SDPAF,Min:AIT}) is
$f^C \circ \gamma \sqsubseteq \gamma \circ f^A$.  The same condition
ensures soundness also for greatest fixpoints, i.e.,
$\nu f^C \sqsubseteq \gamma(\nu f^A)$, provided that $\gamma$ is
co-continuous and co-strict (see, e.g.,~\cite[Proposition 15]{CC:TLA},
which states the dual result). For clarity we state this result
explicitly in the appendix (see
Lemma~\ref{le:gamma}(\ref{le:gamma:1})).

\begin{toappendix}

\begin{lemma}[concretisation for single fixpoints]
  \label{le:gamma}
  Let $\gamma : A \to C$ be a monotone function.
  \begin{enumerate}
    
  \item \label{le:gamma:1}
    If 
    \begin{equation}
      \label{eq:gamma-sound}
      f_C \circ \gamma \sqsubseteq \gamma \circ f_A
    \end{equation}
    then $\mu f_C \sqsubseteq \gamma(\mu  f_A)$; if, in addition, $\gamma$ is
    co-continuous and co-strict $\nu f_C \sqsubseteq \gamma(\nu  f_A)$.
    
  \item \label{le:gamma:2}
    If 
    \begin{equation}
      \label{eq:gamma-complete}
      \gamma \circ f_A \sqsubseteq f_C \circ \gamma 
    \end{equation}
    then $\gamma(\nu f_A) \sqsubseteq \nu f_C$;  if, in addition, $\gamma$ is
    continuous and strict then
    $\gamma(\mu f_A) \sqsubseteq \mu f_C$.
  \end{enumerate}
\end{lemma}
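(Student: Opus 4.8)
The plan is to treat the two items separately, each time first handling the least fixpoint via Knaster--Tarski's characterisation as a meet of pre-fixpoints (dually, a join of post-fixpoints), and then upgrading to the greatest (dually, least) fixpoint using Kleene iteration under the extra (co)continuity/(co)strictness hypothesis.

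For item~\ref{le:gamma:1}, assume $f_C \circ \gamma \sqsubseteq \gamma \circ f_A$. For the least fixpoint, recall $\mu f_C = \glb \{ c \mid f_C(c) \sqsubseteq c \}$. It suffices to show that $\gamma(\mu f_A)$ is a pre-fixpoint of $f_C$: by monotonicity of $f_C$ and the hypothesis, $f_C(\gamma(\mu f_A)) \sqsubseteq \gamma(f_A(\mu f_A)) = \gamma(\mu f_A)$, since $\mu f_A$ is a fixpoint of $f_A$. Hence $\mu f_C \sqsubseteq \gamma(\mu f_A)$. For the greatest fixpoint under the extra hypothesis that $\gamma$ is co-continuous and co-strict, use Kleene's theorem (the dual version): $\nu f_C = f_C^{\gamma_0}(\top_C)$ for some ordinal $\gamma_0$, and $\nu f_A = f_A^{\gamma_1}(\top_A)$. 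I would prove by transfinite induction on $\beta$ that $f_C^{\beta}(\top_C) \sqsubseteq \gamma(f_A^{\beta}(\top_A))$. The base case $\beta = 0$ is $\top_C \sqsubseteq \gamma(\top_A)$, which holds because $\gamma$ is co-strict, i.e.\ $\gamma(\top_A) = \top_C$. The successor step uses monotonicity of $f_C$, the inductive hypothesis, and the soundness inequality: $f_C(f_C^{\beta}(\top_C)) \sqsubseteq f_C(\gamma(f_A^{\beta}(\top_A))) \sqsubseteq \gamma(f_A(f_A^{\beta}(\top_A)))$. The limit step uses co-continuity of $\gamma$ to commute $\gamma$ with the descending meet: $f_C^{\lambda}(\top_C) = \glb_{\beta < \lambda} f_C^{\beta}(\top_C) \sqsubseteq \glb_{\beta < \lambda} \gamma(f_A^{\beta}(\top_A)) = \gamma(\glb_{\beta<\lambda} f_A^{\beta}(\top_A)) = \gamma(f_A^{\lambda}(\top_A))$. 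Taking $\beta$ large enough to stabilise both chains yields $\nu f_C \sqsubseteq \gamma(\nu f_A)$.

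Item~\ref{le:gamma:2} is entirely dual. Assuming $\gamma \circ f_A \sqsubseteq f_C \circ \gamma$, for the greatest fixpoint recall $\nu f_C = \lub \{ c \mid c \sqsubseteq f_C(c) \}$; it suffices to check $\gamma(\nu f_A)$ is a post-fixpoint of $f_C$: $\gamma(\nu f_A) = \gamma(f_A(\nu f_A)) \sqsubseteq f_C(\gamma(\nu f_A))$, using the hypothesis. Hence $\gamma(\nu f_A) \sqsubseteq \nu f_C$. For the least fixpoint under strictness and continuity of $\gamma$, run the ascending Kleene iteration and prove $\gamma(f_A^{\beta}(\bot_A)) \sqsubseteq f_C^{\beta}(\bot_C)$ by transfinite induction: the base case needs $\gamma(\bot_A) = \bot_C$ (strictness), the successor step uses monotonicity of $f_C$ and the completeness inequality, and the limit step uses continuity of $\gamma$ to commute it with the directed join of the chain. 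Then take $\beta$ large enough.

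The only mildly delicate point — and the one I would be careful about — is the limit step, where one must invoke exactly the (co)continuity/(co)strictness of $\gamma$ (and the fact, noted in the excerpt, that Kleene chains are ascending/descending, hence directed/co-directed) to push $\gamma$ through the transfinite join/meet; everything else is a routine application of Knaster--Tarski and monotonicity. One should also make sure to choose a single ordinal $\beta$ that simultaneously stabilises the concrete and abstract Kleene chains (e.g.\ the maximum of the two stabilisation ordinals, or any ordinal $\geq \max(\asc{C},\asc{A})$), so that the final comparison is between the actual fixpoints.
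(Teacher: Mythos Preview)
Your proposal is correct. The one notable difference from the paper is in how you handle the ``easy'' halves of each item (the least fixpoint in item~\ref{le:gamma:1} and the greatest fixpoint in item~\ref{le:gamma:2}): you use the Knaster--Tarski characterisation directly, exhibiting $\gamma(\mu f_A)$ as a pre-fixpoint of $f_C$ (respectively $\gamma(\nu f_A)$ as a post-fixpoint), whereas the paper runs a transfinite Kleene induction in \emph{all four} cases, observing that for the least-fixpoint case of item~\ref{le:gamma:1} the limit step only needs monotonicity of $\gamma$ (so that $\lub_{\beta'<\beta}\gamma(\cdot)\sqsubseteq\gamma(\lub_{\beta'<\beta}\cdot)$ holds without continuity). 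Your route is slightly slicker for those two cases and makes transparent \emph{why} no (co)continuity is needed there; the paper's route is more uniform and, by treating item~\ref{le:gamma:2} as strictly dual to item~\ref{le:gamma:1}, dispatches half the work in one line. For the two cases that genuinely require (co)continuity and (co)strictness, your transfinite Kleene argument coincides with the paper's.
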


\begin{proof}
  We focus on the soundness results since the completeness results
  follow by duality.
  
  For least fixpoint, we prove that for all ordinals $\beta$ we have
  $f_C^{\beta}(\bot_C) \leq \gamma(f_A^{\beta}(\bot_A))$, whence the
  thesis, since $\mu f_C = f_C^{\beta}(\bot_C)$ and
  $\mu f_A = f_A^{\beta}(\bot_A)$ for some ordinal $\beta$ (just take
  the largest of the ordinals needed to reach the two fixpoints).

  We proceed by transfinite induction:
  
  \begin{itemize}
  \item $(\beta=0)$ We have
    $f_C^{0}(\bot_C) = \bot_C \sqsubseteq \gamma(f_A^{0}(\bot_A))$, as
    desired.
            
  \item $(\beta \to \beta+1)$ Observe that
    \begin{align*}
      f_C^{\beta+1}(\bot_C)
      & = f_C(f_C^{\beta}(\bot_C))\\
      & \sqsubseteq f_C(\gamma(f_C^{\beta}(\bot_C)))
      & \mbox{[by ind. hyp. and monotonicity of $f_C$]}\\
      & \leq \gamma(f_A(f_A^{\beta}(\bot_A)))
      & \mbox{[by (\ref{eq:gamma-sound})]}\\
      & = \gamma(f_A^{\beta+1}(\bot_A))\\
    \end{align*}
      
  \item ($\beta$ limit ordinal) In this case
    \begin{align*}
      f_C^{\beta}(\bot_C)
      & = \lub_{\beta' < \beta} f_C^{\beta'}(\bot_C)\\
      & \sqsubseteq \lub_{\beta' < \beta} \gamma(f_A^{\beta'}(\bot_A))
      & \mbox{[by ind. hyp.]}\\
      & \sqsubseteq \gamma(\lub_{\beta' < \beta} f_A^{\beta'}(\bot_A))
      & \mbox{[by properties of joins]}\\
      & = \gamma(f_A^{\beta}(\bot_A))
    \end{align*}
      
  \end{itemize}

  For greatest fixpoints, we prove that for all ordinals $\beta$ we
  have $f_C^{\beta}(\top_C) \leq \gamma(f_A^{\beta}(\top_A))$, again
  by transfinite induction.

  \begin{itemize}
  \item $(\beta=0)$ We have
    $f_C^{0}(\top_C) = \top_C = \gamma(\top_A) =
    \gamma(f_A^{0}(\top_A))$, since $\gamma$ is assumed to be
    co-strict, hence we have the desired inequality.
   
  \item $(\beta \to \beta+1)$ Observe that
    \begin{align*}
      f_C^{\beta+1}(\top_C)
      & = f_C(f_C^{\beta}(\top_C))\\
      & \sqsubseteq f_C(\gamma(f_A^{\beta}(\top_A)))
      & \mbox{[by ind. hyp. and monotonicity of $f_C$]}\\
      & \sqsubseteq \gamma(f_A(f_A^{\beta}(\top_A)))
      & \mbox{[by (\ref{eq:gamma-sound})]}\\
      & = \gamma(f_A^{\beta+1}(\top_A))
    \end{align*}
   
 \item ($\beta$ limit ordinal) In this case
   \begin{align*}
     f_C^{\beta}(\top_C)
     & = \glb_{\beta' < \beta} f_C^{\beta'}(\top_C))\\
     & \sqsubseteq \glb_{\beta' < \beta} \gamma(f_A^{\beta'}(\top_A))
     & \mbox{[by ind. hyp.]}\\
     & = \gamma(\glb_{\beta' < \beta} f_A^{\beta'}(\top_A))
     & \mbox{[since $\gamma$ is co-continuous]}\\
     & = \gamma(f_A^{\beta}(\top_A))
   \end{align*}
   
 \end{itemize}
\end{proof}

We can get analogous results for abstractions, by duality.

\begin{lemma}[abstraction for single fixpoints]
  \label{le:alpha}
  Let $\alpha : C \to A$ be an abstraction function.
  \begin{enumerate}
    
  \item 
    If 
    \begin{equation}
      \label{eq:alpha-sound}
      \alpha \circ f_C \leq f_A \circ \alpha
    \end{equation}
    then $\alpha(\nu f_C) \leq \nu f_A$; if, in addition, $\alpha$ is
    continuous and strict $\alpha(\mu f_C) \leq \mu f_A$.
    
  \item 
    If 
    \begin{equation}
      \label{eq:alpha-complete}
      f_A \circ \alpha \leq \alpha \circ f_C
    \end{equation}
    then $\mu f_A \leq \alpha(\mu f_C)$;  if, in addition, $\alpha$ is
    co-continuous and co-strict then
    $\nu f_A \leq \alpha(\nu f_C)$.
  \end{enumerate}
\end{lemma}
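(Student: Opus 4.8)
The plan is to obtain this statement from Lemma~\ref{le:gamma} by a pure duality (order-reversal) argument, introducing no new computation. Concretely, I would apply Lemma~\ref{le:gamma} to the opposite lattices: take $A^{\mathit{op}}$ in the role of the ``concrete'' lattice and $C^{\mathit{op}}$ in the role of the ``abstract'' lattice, take $f_A$ and $f_C$ (which stay monotone when both orders are reversed) in the roles of $f_C$ and $f_A$, and take $\alpha$ itself---now viewed as a monotone map $C^{\mathit{op}} \to A^{\mathit{op}}$---in the role of the concretisation $\gamma$. This is legitimate because the opposite of a complete lattice is a complete lattice, monotonicity is invariant under reversing both orders, and the signature $\gamma : A \to C$ of Lemma~\ref{le:gamma} becomes exactly $\alpha : C^{\mathit{op}} \to A^{\mathit{op}}$.

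Next I would translate hypotheses and conclusions across the reversal, using the standard dictionary $\mu \leftrightarrow \nu$, join $\leftrightarrow$ meet, continuity/strictness $\leftrightarrow$ co-continuity/co-strictness, and ``$\sqsubseteq$ in an opposite lattice'' $=$ ``the original $\geq$''. For part~(a): the hypothesis $\alpha \circ f_C \leq f_A \circ \alpha$ is, read in $A^{\mathit{op}}$, exactly condition~\eqref{eq:gamma-sound}, so Lemma~\ref{le:gamma}(\ref{le:gamma:1}) gives $\mu f_A \sqsubseteq_{A^{\mathit{op}}} \alpha(\mu f_C)$ with fixpoints computed in the opposite lattices; unwinding, this is $\nu f_A \geq \alpha(\nu f_C)$, i.e., $\alpha(\nu f_C) \leq \nu f_A$. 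Moreover ``$\alpha$ continuous and strict'' on the original orders coincides with ``$\gamma$ co-continuous and co-strict'' for this instantiation, so the second half of Lemma~\ref{le:gamma}(\ref{le:gamma:1}) yields $\alpha(\mu f_C) \leq \mu f_A$. Part~(b) is symmetric: $f_A \circ \alpha \leq \alpha \circ f_C$ is condition~\eqref{eq:gamma-complete} read in the opposite lattices, so Lemma~\ref{le:gamma}(\ref{le:gamma:2}) gives $\mu f_A \leq \alpha(\mu f_C)$, and the extra hypothesis that $\alpha$ is co-continuous and co-strict (i.e., continuous and strict in the opposite lattices) gives $\nu f_A \leq \alpha(\nu f_C)$.

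I expect the only delicate point---bookkeeping rather than a genuine obstacle---to be checking that each adjective and each fixpoint symbol flips in the right direction when passing to $A^{\mathit{op}}$ and $C^{\mathit{op}}$, so that the statement one lands on is verbatim the one above. As a safeguard, and as a self-contained alternative to invoking duality, I would instead re-run the two transfinite inductions of Lemma~\ref{le:gamma} directly, on the chains $(f_C^{\beta}(\bot_C))_{\beta}$ and $(f_C^{\beta}(\top_C))_{\beta}$, with the roles of $\alpha$ and $\gamma$ interchanged: the successor step uses the relevant inequality \eqref{eq:alpha-sound} or \eqref{eq:alpha-complete}; the base and limit steps of the $\nu$-part of (a) and the $\mu$-part of (b) go through for free, since $\alpha(\top_C) \leq \top_A$, $\bot_A \leq \alpha(\bot_C)$, $\alpha(\glb X) \leq \glb \alpha(X)$ and $\lub \alpha(X) \leq \alpha(\lub X)$ hold by monotonicity of $\alpha$ alone; and strictness/continuity (resp.\ co-strictness/co-continuity) of $\alpha$ are needed precisely for the base and limit steps of the $\mu$-part of (a) (resp.\ the $\nu$-part of (b)), matching exactly the hypotheses in the statement.
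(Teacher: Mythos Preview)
Your proposal is correct and matches the paper's approach: the paper simply remarks ``We can get analogous results for abstractions, by duality'' and states Lemma~\ref{le:alpha} without further proof, so your explicit duality argument from Lemma~\ref{le:gamma} (together with the transfinite-induction fallback) is exactly the intended justification, only spelled out in more detail.
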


\begin{lemma}[Galois insertions]
  \label{le:insertion}
  Let $f_C : C \to C$ and $f_A : A \to A$ be monotone functions and
  let $\langle \alpha, \gamma \rangle : C \to A$ be a Galois
  insertion.
  \begin{enumerate}
  \item
    \label{le:insertion:1}
    Assume soundness for $\alpha$ i.e.,
    (\ref{eq:alpha-sound}) (equivalent to soundness for $\gamma$, i.e.,
    (\ref{eq:gamma-sound})), and completeness for both $\alpha$ and
    $\beta$, i.e., (\ref{eq:alpha-complete}),
    (\ref{eq:gamma-complete}). Then
    \begin{center}
      $\alpha(\eta f_C) = \eta f_A$  for $\eta \in \{\mu,\nu\}$
      \qquad
      $\nu f_C = \gamma(\nu f_A)$
      \qquad
      $\mu f_C \sqsubseteq \gamma(\mu f_A)$
    \end{center}

  \item
    \label{le:insertion:2}
    Assume
    \begin{equation}
      \label{eq:strong-sound-complete}
      f_C = \gamma \circ f_A \circ \alpha
    \end{equation}
    then $\alpha(\eta f_C) = \eta f_A$ and
    $\eta f_C = \gamma(\eta f_A)$ for $\eta \in \{\mu,\nu\}$.
  \end{enumerate}
\end{lemma}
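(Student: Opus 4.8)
The plan is to reduce the statement to the single-fixpoint results of Lemmas~\ref{le:gamma} and~\ref{le:alpha}, exploiting two structural facts: in any Galois connection the left adjoint $\alpha$ is strict and continuous while the right adjoint $\gamma$ is co-strict and co-continuous, and in an \emph{insertion} one moreover has $\alpha \circ \gamma = id_A$. The key point to keep in mind is the $\mu$/$\nu$ asymmetry: $\alpha$ can transfer least fixpoints and $\gamma$ can transfer greatest fixpoints, so each identity has to be derived on the ``correct'' side and the missing ones recovered through $\alpha \circ \gamma = id_A$.

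For item~\ref{le:insertion:1} I would first recall the standard fact that in any Galois connection (\ref{eq:alpha-sound}) and (\ref{eq:gamma-sound}) are equivalent: transposing (\ref{eq:alpha-sound}) evaluated at $\gamma(a)$ along the adjunction and using $\alpha(\gamma(a)) \leq a$ yields $f_C(\gamma(a)) \sqsubseteq \gamma(f_A(a))$, and symmetrically transposing (\ref{eq:gamma-sound}) at $\alpha(c)$ and using $c \sqsubseteq \gamma(\alpha(c))$ yields $\alpha(f_C(c)) \leq f_A(\alpha(c))$. Then I would invoke the single-fixpoint lemmas with hypotheses matched to the side having the needed (co)continuity. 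Lemma~\ref{le:alpha} applied with the soundness condition (\ref{eq:alpha-sound}), using that $\alpha$ is continuous and strict, gives $\alpha(\mu f_C) \leq \mu f_A$ and $\alpha(\nu f_C) \leq \nu f_A$; applied with the completeness condition (\ref{eq:alpha-complete}) it gives $\mu f_A \leq \alpha(\mu f_C)$, whence $\alpha(\mu f_C) = \mu f_A$. Dually, Lemma~\ref{le:gamma} applied with (\ref{eq:gamma-sound}), using that $\gamma$ is co-continuous and co-strict, gives $\mu f_C \sqsubseteq \gamma(\mu f_A)$ and $\nu f_C \sqsubseteq \gamma(\nu f_A)$; applied with (\ref{eq:gamma-complete}) it gives $\gamma(\nu f_A) \sqsubseteq \nu f_C$, whence $\nu f_C = \gamma(\nu f_A)$. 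Finally, applying $\alpha$ to this identity and using $\alpha \circ \gamma = id_A$ yields $\alpha(\nu f_C) = \nu f_A$, which together with the above establishes all three displayed assertions.

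For item~\ref{le:insertion:2} the observation is that $f_C = \gamma \circ f_A \circ \alpha$ together with $\alpha \circ \gamma = id_A$ makes all four conditions of item~\ref{le:insertion:1} hold (even as equalities): $\alpha \circ f_C = \alpha \circ \gamma \circ f_A \circ \alpha = f_A \circ \alpha$ and $f_C \circ \gamma = \gamma \circ f_A \circ \alpha \circ \gamma = \gamma \circ f_A$. Hence item~\ref{le:insertion:1} already delivers $\alpha(\eta f_C) = \eta f_A$ for $\eta \in \{\mu,\nu\}$, $\nu f_C = \gamma(\nu f_A)$, and $\mu f_C \sqsubseteq \gamma(\mu f_A)$. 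To upgrade the last to an equality I would use that $\mu f_C$ is a fixpoint of $f_C = \gamma \circ f_A \circ \alpha$: from $\alpha(\mu f_C) = \mu f_A$ (item~\ref{le:insertion:1}) and $f_A(\mu f_A) = \mu f_A$ we get $\mu f_C = f_C(\mu f_C) = \gamma(f_A(\alpha(\mu f_C))) = \gamma(f_A(\mu f_A)) = \gamma(\mu f_A)$.

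The only real obstacle is bookkeeping rather than mathematics: one must resist transferring $\nu$ along $\alpha$ or $\mu$ along $\gamma$ (the adjoints lack the co-continuity, resp.\ continuity, that would be required), and instead close the remaining gaps using the insertion equation $\alpha \circ \gamma = id_A$ and, in item~\ref{le:insertion:2}, the fixpoint equation for $\mu f_C$. It is also worth double-checking that the hypotheses fed to Lemmas~\ref{le:gamma} and~\ref{le:alpha} are precisely the ones available --- soundness/completeness on the appropriate side plus the automatic (co)strictness and (co)continuity of adjoints --- so that no unjustified assumption is introduced.
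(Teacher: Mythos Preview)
Your proposal is correct and follows essentially the same approach as the paper: both parts are reduced to Lemmas~\ref{le:gamma} and~\ref{le:alpha} using the automatic (co)continuity/(co)strictness of adjoints, with the insertion identity $\alpha\circ\gamma=id_A$ closing the remaining gap for $\alpha(\nu f_C)=\nu f_A$, and in item~\ref{le:insertion:2} the fixpoint equation for $\mu f_C$ yielding $\mu f_C=\gamma(\mu f_A)$. Your chain for the last step, $\mu f_C = f_C(\mu f_C) = \gamma(f_A(\alpha(\mu f_C))) = \gamma(f_A(\mu f_A)) = \gamma(\mu f_A)$, is in fact slightly more direct than the paper's, which routes through an extra $\gamma\circ\alpha\circ\gamma$ expansion that then collapses.
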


\begin{proof}
  \begin{enumerate}
  \item Just using Lemma~\ref{le:gamma} and Lemma~\ref{le:alpha}, we
    obtain
    \begin{center}
      (a) $\alpha(\mu f_C) = \mu f_A$ \quad
      (b) $\nu f_C = \gamma(\nu f_A)$ \quad
      (c) $\alpha(\mu f_C) \leq \mu f_A$ \quad
      (d) $\mu f_C \sqsubseteq \gamma(\mu f_A)$
    \end{center}
    From (b), applying $\alpha$, we obtain
    $\alpha(\nu f_C) = \alpha(\gamma(\nu f_A) = \nu f_A$, and we are
    done.

  \item In this case, from the assumption
    $f_C = \gamma \circ f_A \circ \alpha$ one can easily deduce the
    soundness and completeness conditions for $\alpha$ and $\gamma$,
    i.e., (\ref{eq:alpha-sound}), (\ref{eq:alpha-complete}),
    (\ref{eq:gamma-sound}), (\ref{eq:gamma-complete}). Therefore, by
    the previous point we get all desired inequalities but
    $\gamma(\mu f_A) \sqsubseteq \mu f_C$. For this observe that
    \begin{align*}
      \gamma(\mu f_A)
      & =  \gamma(\alpha(\mu f_C))
      & \mbox{[since $\mu f_A=\alpha(\mu f_C)$]}\\
      & = \gamma(\alpha(f_C(\mu f_C)))
      & \mbox{[since $\mu f_C$ is a fixpoint of $f_C$]}\\
      & = \gamma(\alpha(\gamma(f_A(\alpha(\mu f_C)))))
      & \mbox{[since $f_C = \gamma \circ f_A \circ \alpha$]}\\
      & = \gamma(f_A(\alpha(\mu f_C)))
      & \mbox{[since $\alpha \circ \gamma=id_A$]}\\
      & = f_C(\mu f_C)
      & \mbox{[since $f_C = \gamma \circ f_A \circ \alpha$]}\\
      & = \mu f_C
      & \mbox{[since $\mu f_C$ is a fixpoint of $f_C$]}      
    \end{align*}
  \end{enumerate}
\end{proof}
\end{toappendix}

Then we can suitably combine the conditions for least and greatest
fixpoints.  We will allow a different concretisation function for each
equation.

\begin{theoremrep}[sound concretisation for systems]
  \label{th:gamma-sys}
  Let $(C, \sqsubseteq)$ and $(A, \leq)$ be complete lattices, let
  $E_C$ of the kind $\vec{x} =_{\vec{\eta}} \vec{f}^C(\vec{x})$ and
  $E_A$ of the kind $\vec{x} =_{\vec{\eta}} \vec{f}^A(\vec{x})$ be
  systems of $m$ equations over $C$ and $A$, with solutions
  $\vec{s}^C \in C^m$ and $\vec{s}^A \in A^m$, respectively. Let
  $\vec{\gamma}$ be an $m$-tuple of monotone functions, with 
  $\gamma_i : A \to C$ for $i \in \interval{m}$.
  If $\vec{\gamma}$ satisfies
  $\vec{f}^C \circ \prd{\vec{\gamma}} \sqsubseteq \prd{\vec{\gamma}}
  \circ \vec{f}^A$ with $\gamma_i$ co-continuous and co-strict for
  each $i \in \interval{m}$ such that $\eta_i = \nu$, then
  $\vec{s}^C \sqsubseteq \prd{\vec{\gamma}}(\vec{s}^A)$.
\end{theoremrep}

\begin{proof}
  We proceed by induction on $m$. The case $m=0$
  is trivial.

  For the inductive case, consider systems with $m+1$ equations.
  Recall that, in order to solve the system, the last variable
  $x_{m+1}$ is considered as a fixed parameter $x$ and the system of
  $m$ equations that arises from dropping the last equation is
  recursively solved. This produces an $m$-tuple
  $\subvec{t^z}{1}{m}(x) = \sol{\subst{E_z}{x_{m+1}}{x}}$ parametric
  on $x$, for $z \in \{A,C\}$.
  For all $a \in A$, by inductive hypothesis applied to the systems
  $\subst{E_A}{x_{m+1}}{a}$ and
  $\subst{E_C}{x_{m+1}}{\gamma_{m+1}(a)}$ we obtain
  \begin{equation}
    \label{eq:sys-ind-hyp}
    \subvec{t^C}{1}{m}(\gamma_{m+1}(a))
    \sqsubseteq
    \prd{\subvec{\gamma}{1}{m}} (\subvec{t^A}{1}{m}(a))
  \end{equation}

  Inserting the parametric solution into the last equation, we get an
  equation in a single variable
  \begin{center}
    $a =_{\eta_m} f_{m+1}^A(\subvec{t^A}{1}{m}(a), a)$.
  \end{center}
  This equation can be solved by taking the corresponding fixpoint,
  i.e., if we define
  $f_A(a) = f_{m+1}^A(\subvec{t^A}{1}{m}(a), a)$, then
  $s^A_{m+1} = \eta_{m+1}f_A$.
  In the same way, $s^C_{m+1} = \eta_{m+1}f_C$ where
  $f_C(c) = f_{m+1}^C(\subvec{t^C}{1}{m}(c), c)$.

  Observe that
  $f_C \circ \gamma_{m+1} \sqsubseteq \gamma_{m+1} \circ f_A$. In fact
  \begin{align*}
    & f_C(\gamma_{m+1}(a))=\\
    & = f_{m+1}^C( \subvec{t^C}{1}{m}(\gamma_{m+1}(a)), \gamma_{m+1}(a)))
    & \mbox{[definition of $f_C$]}\\     
    & \sqsubseteq  f_{m+1}^C(\prd{\subvec{\gamma}{1}{m}}(\subvec{t^A}{1}{m}(a)), \gamma_{m+1}(a)))
    & \mbox{[by (\ref{eq:sys-ind-hyp})]}\\
    & \sqsubseteq  f_{m+1}^C(\prd{\vec{\gamma}}(\subvec{t^A}{1}{m}(a), a))
    & \mbox{[application of $\vec{\gamma}$]}\\
    & \sqsubseteq \gamma_{m+1} (f^A_{m+1}(\subvec{t^A}{1}{m}(a), a)) &
    \mbox{[hypothesis
      $\vec{f}^C \circ \prd{\vec{\gamma}} \sqsubseteq
      \prd{\vec{\gamma}}
      \circ \vec{f}^A$]}\\
    & = \gamma_{m+1}(f_A(a))
    & \mbox{[definition of $f_A$]}
  \end{align*}

  Therefore, recalling that when $\eta_{m+1}=\mu$ we are assuming
  co-continuity and co-strictness for $\gamma_{m+1}$, we can apply
  Lemma~\ref{le:gamma}(\ref{le:gamma:1}) and deduce that
  \begin{equation}
    \label{eq:sys-last}
    s^C_{m+1} = \eta_{m+1} f_C \sqsubseteq 
    \gamma_{m+1}(\eta_{m+1} f_A) = \gamma_{m+1}(s^A_{m+1})
  \end{equation}
  
  Finally, recall that the first $m$ components of the solutions are
  $\subvec{s^z}{1}{m} = \subvec{t^z}{1}{m}(s^z_{m+1})$ for
  $z \in \{C,A\}$. Therefore, exploiting (\ref{eq:sys-ind-hyp}), we
  have
  \begin{align*}
    & \subvec{s^C}{1}{m} =\\
    & = \subvec{t^C}{1}{m}(s^C_{m+1})\\
    & \sqsubseteq \subvec{t^C}{1}{m}(\gamma_{m+1}(s^A_{m+1})) & \mbox{[by (\ref{eq:sys-last})]}\\
    & \sqsubseteq \prd{\subvec{\gamma}{1}{m}}(\subvec{t^A}{1}{m}(s^A_{m+1}))
    & \mbox{[by (\ref{eq:sys-ind-hyp})]}\\
    & = \prd{\subvec{\gamma}{1}{m}}(\subvec{s^A}{1}{m})
  \end{align*}
  This concludes the inductive step.
\end{proof}

\begin{toappendix}
Everything can be dually formulated in terms of
abstraction functions.

\begin{theorem}[sound abstraction for systems]
  \label{th:alpha-sys}
  Let $(C, \sqsubseteq)$ and $(A, \leq)$ be complete lattices and let
  $E_C$ of the kind $\vec{x} =_{\vec{\eta}} \vec{f}^C(\vec{x})$ and
  $E_A$ of the kind $\vec{x} =_{\vec{\eta}} \vec{f}^A(\vec{x})$ be
  systems of $m$ equations over $C$ and $A$, with solutions
  $\vec{s}^C \in C^m$ and $\vec{s}^A \in A^m$, respectively. Let
  $\vec{\alpha}$ be an $m$-tuple of monotone functions, with
  $\alpha_i : C \to A$ for $i \in \interval{m}$.
  If $\vec{\alpha}$ satisfies
    \begin{equation*}
      \prd{\vec{\alpha}} \circ \vec{f}^C \leq \vec{f}^A \circ \prd{\vec{\alpha}}
    \end{equation*}
    with $\alpha_i$ continuous and strict for each $i \in \interval{m}$ such
    that $\eta_i = \mu$, then
    $\prd{\vec{\alpha}} (\vec{s}^C) \leq \vec{s}^A$.
\end{theorem}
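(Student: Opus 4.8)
The plan is to obtain the statement from its order-theoretic dual, Theorem~\ref{th:gamma-sys}, by passing to opposite lattices, rather than redoing the induction from scratch. The one preliminary fact I would establish is a \emph{self-duality} property of the solution operator of Definition~\ref{de:solution}: for a system $E$ over a complete lattice $L$ of the form $\vec{x} =_{\vec{\eta}} \vec{f}(\vec{x})$, let $\oppo{E}$ denote the system over the opposite lattice $L^{op}$ with the same functions $f_i$ (still monotone on $L^{op}$) and fixpoint tuple $\oppo{\vec{\eta}}$, where $\oppo{\mu}=\nu$ and $\oppo{\nu}=\mu$ componentwise. I claim $\sol[L^{op}]{\oppo{E}} = \sol[L]{E}$, which I would prove by induction on the number $m$ of equations. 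The base case $m=0$ is immediate. For the step, substitution is purely syntactic, so $\subst{\oppo{E}}{x_m}{l} = \oppo{(\subst{E}{x_m}{l})}$; by the induction hypothesis the parametric solution of the first $m-1$ equations computed in $L^{op}$ coincides with the one computed in $L$; and since the greatest fixpoint of a monotone map in $L^{op}$ is its least fixpoint in $L$ and vice versa, the component $s_m$ built with $\oppo{\eta_m}$ in $L^{op}$ agrees with the one built with $\eta_m$ in $L$. A final appeal to the induction hypothesis on $\subst{E}{x_m}{s_m}$ closes the step.

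With this in hand I would apply Theorem~\ref{th:gamma-sys} with concrete lattice $A^{op}$, abstract lattice $C^{op}$, systems $\oppo{E_A}$ over $A^{op}$ and $\oppo{E_C}$ over $C^{op}$ (which share the fixpoint tuple $\oppo{\vec{\eta}}$), and concretisation tuple $\vec{\gamma} := \vec{\alpha}$ — a monotone map $C \to A$ is the very same thing as a monotone map $C^{op} \to A^{op}$. The soundness premise of Theorem~\ref{th:gamma-sys} for this instance, namely $\vec{f}^A \circ \prd{\vec{\alpha}} \sqsubseteq \prd{\vec{\alpha}} \circ \vec{f}^C$ read in the pointwise order of $A^{op}$, unfolds — since that order is the reverse of the one on $A$ — to exactly our hypothesis $\prd{\vec{\alpha}} \circ \vec{f}^C \leq \vec{f}^A \circ \prd{\vec{\alpha}}$. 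The side condition of Theorem~\ref{th:gamma-sys} asks that $\alpha_i$, viewed as a map $C^{op} \to A^{op}$, be co-continuous and co-strict precisely when the $i$-th equation of $\oppo{E_A}$ is a $\nu$-equation, i.e.\ when $\oppo{\eta_i}=\nu$, i.e.\ when $\eta_i=\mu$; and co-continuity (resp.\ co-strictness) of $\alpha_i : C^{op}\to A^{op}$ is, by unfolding the definitions, nothing but continuity (resp.\ strictness) of $\alpha_i : C\to A$, because codirected meets in $C^{op}$ are directed joins in $C$ and $\top_{C^{op}}=\bot_C$. So the side condition reduces exactly to the one assumed in the statement.

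To finish, by the self-duality property the solutions of $\oppo{E_A}$ and $\oppo{E_C}$ are $\vec{s}^A$ and $\vec{s}^C$ respectively, so Theorem~\ref{th:gamma-sys} delivers $\vec{s}^A \sqsubseteq \prd{\vec{\alpha}}(\vec{s}^C)$ in the order of $A^{op}$, i.e.\ $\prd{\vec{\alpha}}(\vec{s}^C) \leq \vec{s}^A$ in $A$, which is the claim. I expect the only genuine work to be the self-duality lemma of the first paragraph; the rest is a careful matching of hypotheses, and that lemma itself is routine given the recursive shape of Definition~\ref{de:solution}. As an alternative that avoids opposite lattices entirely, one can simply replay the induction in the proof of Theorem~\ref{th:gamma-sys}, replacing $\gamma_{m+1}$ by $\alpha_{m+1}$, reversing every inequality, and invoking Lemma~\ref{le:alpha} wherever that proof invokes Lemma~\ref{le:gamma}(\ref{le:gamma:1}); the inductive skeleton (solve the first $m$ equations parametrically, push $\alpha_{m+1}$ through the last equation, then re-substitute) transfers without change.
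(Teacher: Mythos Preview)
Your proposal is correct and follows the same approach as the paper, whose proof is the one-liner ``This follows from [Theorem~\ref{th:gamma-sys}] by duality.'' You have simply made that duality explicit (passing to opposite lattices, swapping $\mu/\nu$, and verifying that the side conditions and conclusion translate as required), which is a faithful unpacking of what the paper leaves implicit.
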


\begin{proof}
  This follows from Lemma~\ref{th:gamma-sys} by duality.
\end{proof}
\end{toappendix}

The standard abstract interpretation framework of~\cite{CC:CCVTFP}
relies on Galois connections: concretisation functions $\gamma$ are
right adjoints, whose left adjoint, the abstraction function $\alpha$,
intuitively maps each concrete element in $C$ to its ``best''
overapproximation in $A$.
When $\langle \alpha, \gamma\rangle$ is a Galois connection,
$\alpha$ is automatically continuous and strict, while $\gamma$ is
co-continuous and co-strict. This leads to the following result,
where, besides the soundness conditions, we also make explicit the
completeness conditions.

\begin{theoremrep}[abstraction via Galois connections]
  \label{th:galois}
  Let $(C,\sqsubseteq)$ and $(A,\leq)$ be complete
  lattices, let $E_C$ of the kind
  $\vec{x} =_{\vec{\eta}} \vec{f}^C(\vec{x})$ and $E_A$ of the kind
  $\vec{x} =_{\vec{\eta}} \vec{f}^A(\vec{x})$ be systems of $m$
  equations over $C$ and $A$, with solutions $\vec{s}^C \in C^m$ and
  $\vec{s}^A \in A^m$, respectively. Let $\vec{\alpha}$ and
  $\vec{\gamma}$ be $m$-tuples of monotone functions, with
  $\langle \alpha_i , \gamma_i \rangle : C \to A$
  a Galois
  connection for each $i \in \interval{m}$.

  \begin{enumerate}
  \item
    \label{th:galois:1}
    \emph{Soundness}: If $\vec{\gamma}$ satisfies
    $\vec{f}^C \circ \prd{\vec{\gamma}} \sqsubseteq \prd{\vec{\gamma}}
    \circ \vec{f}^A$ or equivalently $\vec{\alpha}$ satisfies
    $\prd{\vec{\alpha}} \circ \vec{f}^C \leq \vec{f}^A \circ
    \prd{\vec{\alpha}}$,
    then $\prd{\vec{\alpha}} (\vec{s}^C) \leq \vec{s}^A$ (equivalent to
      $\vec{s}^C \sqsubseteq \prd{\vec{\gamma}}(\vec{s}^A)$).

  \item
    \label{th:galois:2}
    \emph{Completeness (for abstraction)}: If $\vec{\alpha}$ satisfies
    $\vec{f}^A \circ \prd{\vec{\alpha}} \leq \prd{\vec{\alpha}} \circ
    \vec{f}^C$ with $\alpha_i$ co-continuous and co-strict for each
    $i \in \interval{m}$ such that $\eta_i = \nu$, then
    $\vec{s}^A \leq \prd{\vec{\alpha}} (\vec{s}^C)$.

  \item
    \label{th:galois:3}
    \emph{Completeness (for concretisation)}: If $\vec{\gamma}$
    satisfies
    $\prd{\vec{\gamma}} \circ \vec{f}^A \sqsubseteq \vec{f}^C \circ
    \prd{\vec{\gamma}}$ with $\gamma_i$ continuous and strict for each
    $i \in \interval{m}$ such that $\eta_i = \mu$, then
    $\prd{\vec{\gamma}}(\vec{s}^A) \sqsubseteq \vec{s}^C$.
  \end{enumerate}
\end{theoremrep}

\begin{proof}
  Due to Theorems~\ref{th:gamma-sys} and~\ref{th:alpha-sys} (and the
  fact that we can apply the theorems to lattices with reversed
  order), the only thing to prove is that the conditions
  $\prd{\vec{\alpha}} \circ \vec{f}^C \leq \vec{f}^A \circ
  \prd{\vec{\alpha}}$ and
  $\vec{f}^C \circ \prd{\vec{\gamma}} \sqsubseteq \prd{\vec{\gamma}}
  \circ \vec{f}^A$ are equivalent.
  If we assume
  $\prd{\vec{\alpha}} \circ \vec{f}^C \leq \vec{f}^A \circ
  \prd{\vec{\alpha}}$, by definition of Galois connection, we get
  $ \vec{f}^C \sqsubseteq \prd{\vec{\gamma}} \circ \vec{f}^A \circ
  \prd{\vec{\alpha}}$. Now, post-composing with $\prd{\vec{\gamma}}$
  and exploiting the fact that
  $\prd{\vec{\alpha}} \circ \prd{\vec{\gamma}} \sqsubseteq
  \prd{\vec{id}}$ we obtain
  \begin{center}
    $\vec{f}^C \circ \prd{\vec{\gamma}} \sqsubseteq \prd{\vec{\gamma}} \circ
    \vec{f}^A \circ \prd{\vec{\alpha}} \circ \prd{\vec{\gamma}} \sqsubseteq
    \vec{\gamma} \circ \vec{f}^A $
  \end{center}
  as desired.

  The converse implication is analogous.
\end{proof}

Completeness for the abstraction, i.e.,
$\vec{s}^A \leq \prd{\vec{\alpha}}(\vec{s}^C)$, together with
soundness, leads to $\prd{\vec{\alpha}} (\vec{s}^C) = \vec{s}^A$. This
is a rare but very pleasant situation in which the abstraction does
not lose any information as far as the abstract properties are
concerned.
We remark that here the notion of ``completeness'' slightly deviates from
the standard abstract interpretation terminology where soundness is
normally indispensable, and thus complete abstractions
(see, e.g.,~\cite{GRS:MAIC}) are, by default, also
sound.

Moreover, completeness for the concretisation is normally of
limited interest in abstract interpretation.
Alone, it states that the abstract
solution is an underapproximation of the concrete one, while
typically the interest is for overapproximations. Together with
soundness, it leads to $\vec{s^C} = \prd{\vec{\gamma}}(\vec{s^A})$, a
very strong property which is not meaningful in program analysis.
In our case, keeping the concepts of soundness and completeness
separated and considering also completeness for the concretisation is
helpful in some cases, especially when dealing with up-to functions,
which are designed to provide underapproximations of fixpoints.

\begin{toappendix}

For Galois insertions, we make explicit a very special case where we
get rid of all the (co-)continuity and (co-)strictness requirements,
and get soundness and completeness both for the abstraction and the
concretisation.

\begin{lemma}[Galois insertions for systems]
  \label{le:insertion-sys}
  Let $(C, \sqsubseteq)$ and $(A, \leq)$ be complete lattices, let
  $E_C$ of the kind $\vec{x} =_{\vec{\eta}} \vec{f}^C(\vec{x})$ and
  $E_A$ of the kind $\vec{x} =_{\vec{\eta}} \vec{f}^A(\vec{x})$ be
  systems of $m$ equations over $C$ and $A$, with solutions
  $\vec{s}^C \in C^m$ and $\vec{s}^A \in A^m$, respectively. Let
  $\vec{\alpha}$ and $\vec{\gamma}$ be $m$-tuples of abstraction and
  concretisation functions, with
  $\langle \alpha_i , \gamma_i \rangle : C \to A$ a Galois insertion
  for each $i \in \interval{m}$.
  If
  \begin{equation}
    \label{eq:strong-sound-complete-sys}
    \vec{f}_C = \prd{\vec{\gamma}} \circ \vec{f}^A \circ \vec{\alpha}
  \end{equation}
  then $\prd{\vec{\alpha}} (\vec{s}^C) = \vec{s}^A$ and
    $\vec{s}^C = \prd{\vec{\gamma}}(\vec{s}^A)$.
\end{lemma}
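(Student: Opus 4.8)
The plan is to lift the single‑equation argument of Lemma~\ref{le:insertion}(\ref{le:insertion:2}) to systems by induction on the number $m$ of equations, following the inductive structure of the proof of Theorem~\ref{th:gamma-sys}. The base case $m=0$ is immediate. For the inductive step, given a system of $m+1$ equations, treat $x_{m+1}$ as a parameter and set $\subvec{t^z}{1}{m}(x) = \sol{\subst{E_z}{x_{m+1}}{x}}$ for $z\in\{C,A\}$, so that $s^z_{m+1} = \eta_{m+1}\bigl(\lambda x.\,f^z_{m+1}(\subvec{t^z}{1}{m}(x),x)\bigr)$ and $\subvec{s^z}{1}{m} = \subvec{t^z}{1}{m}(s^z_{m+1})$.

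The crucial observation, and the only point where the \emph{full} strength of (\ref{eq:strong-sound-complete-sys}) (rather than the one‑sided conditions of Theorem~\ref{th:galois}) is used, is the following. The componentwise reading of (\ref{eq:strong-sound-complete-sys}) is $f^C_i(c_1,\dots,c_{m+1}) = \gamma_i\bigl(f^A_i(\alpha_1(c_1),\dots,\alpha_{m+1}(c_{m+1}))\bigr)$, so in the reduced system $\subst{E_C}{x_{m+1}}{c}$ the parameter $c$ enters every function only through $\alpha_{m+1}(c)$. Hence $\subst{E_C}{x_{m+1}}{c}$ and $\subst{E_C}{x_{m+1}}{\gamma_{m+1}(\alpha_{m+1}(c))}$ are literally the same system (using $\alpha_{m+1}\circ\gamma_{m+1}=id_A$), so $\subvec{t^C}{1}{m}(c) = \subvec{t^C}{1}{m}(\gamma_{m+1}(\alpha_{m+1}(c)))$. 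Moreover, again via $\alpha_{m+1}\circ\gamma_{m+1}=id_A$, the reduced systems $\subst{E_C}{x_{m+1}}{\gamma_{m+1}(a)}$ and $\subst{E_A}{x_{m+1}}{a}$ satisfy (\ref{eq:strong-sound-complete-sys}) with the truncated tuples $\subvec{\alpha}{1}{m}$, $\subvec{\gamma}{1}{m}$, so the inductive hypothesis gives, for every $a\in A$,
\[
  \prd{\subvec{\alpha}{1}{m}}\bigl(\subvec{t^C}{1}{m}(\gamma_{m+1}(a))\bigr) = \subvec{t^A}{1}{m}(a)
  \quad\text{and}\quad
  \subvec{t^C}{1}{m}(\gamma_{m+1}(a)) = \prd{\subvec{\gamma}{1}{m}}\bigl(\subvec{t^A}{1}{m}(a)\bigr).
\]

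Next I would check that $f_C := \lambda c.\,f^C_{m+1}(\subvec{t^C}{1}{m}(c),c)$ and $f_A := \lambda a.\,f^A_{m+1}(\subvec{t^A}{1}{m}(a),a)$ satisfy $f_C = \gamma_{m+1}\circ f_A\circ\alpha_{m+1}$: expand $f_C(c)$ by the componentwise hypothesis, replace $\subvec{t^C}{1}{m}(c)$ by $\subvec{t^C}{1}{m}(\gamma_{m+1}(\alpha_{m+1}(c)))$, and apply the first inductive identity with $a=\alpha_{m+1}(c)$; the argument of $f^A_{m+1}$ becomes $(\subvec{t^A}{1}{m}(\alpha_{m+1}(c)),\alpha_{m+1}(c))$, i.e.\ $f_A(\alpha_{m+1}(c))$. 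Then Lemma~\ref{le:insertion}(\ref{le:insertion:2}), applied to $f_C$, $f_A$ and the Galois insertion $\langle\alpha_{m+1},\gamma_{m+1}\rangle$, yields $\alpha_{m+1}(s^C_{m+1}) = s^A_{m+1}$ and $s^C_{m+1} = \gamma_{m+1}(s^A_{m+1})$ for both $\eta_{m+1}\in\{\mu,\nu\}$ — and, importantly, with no (co‑)continuity or (co‑)strictness requirement, which is exactly the point of this lemma. Finally, substituting $s^C_{m+1} = \gamma_{m+1}(s^A_{m+1})$ into $\subvec{s^C}{1}{m} = \subvec{t^C}{1}{m}(s^C_{m+1})$ and invoking the two inductive identities with $a = s^A_{m+1}$ gives $\prd{\subvec{\alpha}{1}{m}}(\subvec{s^C}{1}{m}) = \subvec{s^A}{1}{m}$ and $\subvec{s^C}{1}{m} = \prd{\subvec{\gamma}{1}{m}}(\subvec{s^A}{1}{m})$; concatenating with the $(m+1)$‑th components closes the induction.

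The step I expect to be the main obstacle is the identity $f_C = \gamma_{m+1}\circ f_A\circ\alpha_{m+1}$: the inductive hypothesis only controls $\subvec{t^C}{1}{m}$ on the image $\gamma_{m+1}(A)$, whereas $f_C(c)$ refers to $\subvec{t^C}{1}{m}(c)$ for arbitrary $c\in C$. The observation that the reduced concrete system — hence its solution $\subvec{t^C}{1}{m}$ — factors through $\alpha_{m+1}$ is precisely what bridges this gap, and it is the only place the equality (\ref{eq:strong-sound-complete-sys}) is exploited in full; everything else is bookkeeping parallel to the proof of Theorem~\ref{th:gamma-sys}.
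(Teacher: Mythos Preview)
Your proof is correct. The paper actually states Lemma~\ref{le:insertion-sys} without proof (it appears in the appendix immediately after Lemma~\ref{le:insertion} with no accompanying \texttt{proof} environment), so there is nothing to compare against directly. Your argument --- induction on $m$ following the skeleton of Theorem~\ref{th:gamma-sys}, with Lemma~\ref{le:insertion}(\ref{le:insertion:2}) supplying the single-equation step --- is exactly the natural lift the authors presumably had in mind, and your identification of the key subtlety (that the inductive hypothesis only controls $\subvec{t^C}{1}{m}$ on $\gamma_{m+1}(A)$, resolved by observing that the reduced concrete system factors through $\alpha_{m+1}$) is precisely the point where the full equality~(\ref{eq:strong-sound-complete-sys}) is needed beyond the one-sided inequalities of Theorem~\ref{th:galois}.
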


\end{toappendix}

As in the standard abstract interpretation framework, dealing with
Galois connections, we can consider the best (smallest) sound
abstraction of the concrete system in the abstract domain.

\begin{definition}[best abstraction]
  \label{de:best-abs}
  Let $(C, \sqsubseteq)$ and $(A, \leq)$ be complete lattices, let
  $E_C$ be a system of $m$ equations over $C$ of the kind
  $\vec{x} =_{\vec{\eta}} \vec{f}(\vec{x})$. Let $\vec{\alpha}$ and
  $\vec{\gamma}$ be $m$-tuples of monotone functions, with
  $\langle \alpha_i , \gamma_i \rangle : C \to A$ a Galois
  connection for each $i \in \interval{m}$.  The \emph{best abstraction} of
  $E_C$ is the system over $A$ defined by
  $\vec{x} =_{\vec{\eta}} \vec{f}^\# (\vec{x})$, where
  $\vec{f}^\# = \prd{\vec{\alpha}} \circ \vec{f} \circ \prd{\vec{\gamma}}$.
\end{definition}

Standard arguments shows that $\vec{f}^\#$ is a
sound abstraction of $\vec{f}$ over $A$, and it is the smallest one.

Moreover, sound abstract operators can be obtained compositionally out
of basic ones, preserving soundness.

\begin{example}[abstraction for the $\mu$-calculus]
  \label{ex:abstraction-mu}
  The paper~\cite{lgsbb:property-preserving-abstractions} observes
  that (bi)simulations over transition systems can be seen as Galois
  connections and interpreted as abstractions. Then it characterises
  fragments of the $\mu$-calculus which are preserved and strongly
  preserved by the abstraction.
  We next discuss how this can be derived as an instance of our framework.

  Let $T_C = (\mathbb{S}_C, \to_C)$ and $T_A = (\mathbb{S}_A,\to_A)$
  be transition systems and let
  $\langle \alpha, \gamma \rangle : \Pow{\mathbb{S}_C} \to
  \Pow{\mathbb{S}_A}$ be a Galois connection. It is a
  \emph{simulation}, according
  to~\cite{lgsbb:property-preserving-abstractions}, if it satisfies
  the following condition:
  $\alpha \circ \semdia_{T_C} \circ \gamma \subseteq \semdia_{T_A}$.
  In this case $T_A$ is called a
  $\langle \alpha, \gamma \rangle$-\emph{abstraction} of $T_C$,
  written $T_C \sqsubseteq_{\langle \alpha, \gamma \rangle} T_A$. This
  can be shown to be equivalent to the ordinary notion of simulation
  between transition systems~\cite[Propositions~9 and
  10]{lgsbb:property-preserving-abstractions}. In particular, if
  $R \subseteq \mathbb{S}_C \times \mathbb{S}_A$ is a simulation in
  the ordinary sense then one can consider
  $\langle \semdia_{R^{-1}}, \sembox_R \rangle : \Pow{\mathbb{S}_C}
  \to \Pow{\mathbb{S}_A}$, where $\semdia_{R^{-1}}$ is the function
  $\semdia_{R^{-1}}(X) = \{ y \in \mathbb{S}_A \mid \exists x \in X.\
  (x,y) \in R \}$. This is a Galois connection (in the abstract
  interpretation setting $\semdia_{R^{-1}}$ and $\sembox_R$ are often
  denoted $\widetilde{\mathit{pre}}_R$ and $\mathit{post}_R$,
  respectively~\cite{Cous:PCAFC})
  inducing a simulation in the above
  sense, i.e.,
  $\semdia_{R^{-1}} \circ \semdia_{T_C} \circ \sembox_R \subseteq
  \semdia_{T_A}$.

  When $T_C \sqsubseteq_{\langle \alpha, \gamma \rangle} T_A$,
  by~\cite[Theorem~2]{lgsbb:property-preserving-abstractions},
  one has that
  $\alpha$ ``preserves'' the
  $\mu\Diamond$-calculus, i.e., the
  fragment of the $\mu$-calculus without $\Box$ operators.
  More precisely,
  for any formula $\varphi$ of the $\mu\Diamond$-calculus,
  we have
  $\alpha(\sem[T_C]{\varphi}_\rho) \subseteq
  \sem[T_A]{\varphi}_{\alpha \circ \rho}$. This means that for each
  $s_C \in \mathbb{S}_C$, if $s_C$ satisfies $\varphi$ in the
  concrete system, then all the states in $\alpha(\{s_C\})$ satisfy
  $\varphi$ in the abstract system, provided that each proposition $p$
  is interpreted in $A$ with $\alpha(\rho(p))$, the abstraction of its interpretation in
  $C$.

  This can be obtained as an easy consequence of
  Theorem~\ref{th:galois}, where we use the same function $\alpha$ as
  an abstraction for all equations. The condition $\alpha \circ
  \semdia_{T_C} \circ \gamma \subseteq \semdia_{T_A}$ 
  above can be rewritten as
  $\alpha \circ \semdia_{T_C} \subseteq \semdia_{T_A} \circ \alpha$
  which is the soundness
  condition~($\prd{\vec{\alpha}} \circ \vec{f}^C \leq \vec{f}^A \circ
  \prd{\vec{\alpha}}$) in Theorem~\ref{th:galois} for the semantics
  of the diamond operator. For the other operators the soundness
  condition is trivially shown to hold. In fact,
  \begin{itemize}
  \item for $\true$ and $\false$ we have
    $\alpha(\emptyset) = \emptyset$ and
    $\alpha(\mathbb{S}_C) \subseteq \mathbb{S}_A$;

  \item for $\land$ and $\lor$ we have
    $\alpha(X \cup Y) = \alpha(X) \cup \alpha (Y)$ and
    $\alpha(X \cap Y) \subseteq \alpha(X) \cap \alpha(Y)$;

  \item a proposition $p$ represents the constant function $\rho(p)$ in
    $T_C$ and $\alpha(\rho(p))$ in $T_A$.
  \end{itemize}

  In order to extend the logic by including negation on propositions,
  in~\cite{lgsbb:property-preserving-abstractions}, an additional
  condition is required, called \emph{consistency} of the abstraction
  with respect to the interpretation: 
  $\alpha(\rho(p)) \cap \alpha(\overline{\rho(p)}) = \emptyset$, for all $p$.
  This is easily seen to be equivalent to
  $\alpha(\overline{\rho(p)}) \subseteq \overline{\alpha(\rho(p))}$
  which is
  the soundness
  condition~($\prd{\vec{\alpha}} \circ \vec{f}^C \leq \vec{f}^A \circ
  \prd{\vec{\alpha}}$)  in Theorem~\ref{th:galois} for negated propositions.

  Our theory naturally suggests generalisations of~\cite{lgsbb:property-preserving-abstractions}.
  E.g.,
  by (the dual of) Theorem~\ref{th:gamma-sys}, continuity and strictness of the abstraction $\alpha$ are sufficient to retain the results, hence one could deal with an abstraction
  not being an adjoint, thus going beyond ordinary simulations.
\end{example}

\begin{example}[abstraction for {\L}ukasiewicz $\mu$-terms]
  \label{ex:reals-reprise}
  For {\L}ukasiewicz $\mu$-terms, as introduced in
  Example~\ref{ex:reals}, leading to systems of fixpoint equations
  over the reals, we can consider as an abstraction a form of
  discretisation: for some fixed $n$ define the abstract domain
  $[0,1]_{/n} = \{0\} \cup \{ k/n \mid k \in \interval{n}\}$ and the
  insertion
  $\langle \alpha_n, \gamma_n \rangle : [0,1] \to [0,1]_{/n}$ with
  $\alpha_n$ defined by $\alpha_n(x) = \ceil{n\cdot x}/n$ and $\gamma_n$
  the inclusion. We can consider for all operators $op$, their best
  abstraction $op^\# = \alpha_n \circ op \circ \prd{\vec{\gamma_n}}$, thus
  getting a sound abstraction.

  Note that for all semantic operators, $op^\#$ is
  the restriction
  of $op$ to the abstract domain, with the exception of
  $r \cdot^\# x = \alpha_n(r \cdot x)$ for $x \in [0,1]_{/n}$.
  Moreover,
  for $x, y \in [0,1]$ we have
  \begin{itemize}    
  \item
    $\alpha_n(\mathbf{0}(x)) = \mathbf{0}^\#(\alpha_n(x))$,
    $\alpha_n(\mathbf{1}(x)) = \mathbf{1}^\#(\alpha_n(x))$;

  \item
    $\alpha_n( r \cdot x) \leq r \cdot^\# \alpha_n(x)$;

  \item 
    $\alpha_n(x \sqcup y) = \alpha_n(x) \sqcup^\# \alpha_n(y)$,
    $\alpha_n(x \sqcap y) = \alpha_n(x) \sqcap^\# \alpha_n(y)$;

  \item
    $\alpha_n(x \oplus y) \leq \alpha_n(x) \oplus^\# \alpha_n(y)$,
    $\alpha_n(x \odot y) \leq \alpha_n(x) \odot^\# \alpha_n(y)$
    since $\alpha_n(x+y) \leq \alpha_n(x) + \alpha_n(y)$
  \end{itemize}
  i.e., the abstraction is complete for $\mathbf{0}$, $\mathbf{1}$,
  $\sqcup$, $\sqcap$, while it is just sound for the remaining
  operators.

  For instance, the system in Example~\ref{ex:reals} can be shown to
  have solution $x_1 = x_2 = 0.2$. With abstraction $\alpha_{10}$ we
  get $x_1 = x_2 = 0.8$, with a more precise abstraction
  $\alpha_{100}$ we get $x_1 = x_2 = 0.22$ and with
  $\alpha_{1000}$ we get $x_1 = x_2 = 0.201$.
\end{example}

\begin{example}[abstraction for {\L}ukasiewicz $\mu$-calculus]
  Although space limitations prevent a detailed discussion, observe
  that when dealing with {\L}ukasiewicz $\mu$-calculus over some
  probabilistic transition system $N = (\mathbb{S}, \to)$, we can lift
  the Galois insertion above to $[0,1]^{\mathbb{S}}$. Define
  $\alpha_n^\to : [0,1]^{\mathbb{S}} \to [0,1]_{/n}^{\mathbb{S}}$ by
  letting, $\alpha^\to_n(v) = \alpha_n \circ v$ for
  $v \in [0,1]^{\mathbb{S}}$. Then
  $\langle \alpha^\to_n, \gamma^\to_n\rangle : [0,1]^{\mathbb{S}} \to
  [0,1]_{/n}^{\mathbb{S}}$, where $\gamma^\to_n$ is the inclusion, is
  a Galois insertion and, as in the previous case, we can consider the
  best abstraction for the operators of the {\L}ukasiewicz
  $\mu$-calculus.

  For instance, consider the system for $\phi'$ in
  Example~\ref{ex:lukasievicz-modal}. Recall that the exact solution is $x_2(a)=0.25$. With abstraction $\alpha_{10}$
  we get $x_2(a) = 0.3$, with $\alpha_{15}$ we
  get $x_2(a) = 0.2\bar{6}$.
\end{example}

\section{Up-To Techniques}
\label{se:up-to-from-abstraction}

Up-to techniques have been shown effective in easing the proof of
properties of greatest fixpoints. Originally proposed for coinductive
behavioural equivalences~\cite{Mil:CC,SM:PBUT}, they have been later
studied in the setting of complete
lattices~\cite{p:complete-lattices-up-to,Pou:CAWU}.
Some recent work~\cite{bggp:sound-up-to-complete-abstract} started the
exploration of the relation between up-to techniques and abstract
interpretation. Roughly, they work in a setting where the semantic
function of interest $f^* : L \to L$ admits a left adjoint
$f_* : L \to L$, the intuition being that $f^*$ and $f_*$ are
predicate transformers mapping a condition into, respectively, its
strongest postcondition and weakest precondition. Then complete
abstractions for $f^*$ and sound up-to functions for $f_*$ are shown
to coincide.
This
has a natural interpretation in our game theoretic framework,
as discussed in~\S\ref{ss:algorithmic-view}.

Here we take another view. We work with general semantic functions
and, in \S\ref{ss:up-to-abstraction}, we first argue that up-to
techniques can be naturally interpreted as abstractions where the
concretisation is complete (and sound, if the up-to function is a
closure).
Then, in \S\ref{ss:up-to-systems} we can smoothly
extend up-to techniques from a single fixpoint to systems of fixpoint
equations.

\subsection{Up-To Techniques as Abstractions}
\label{ss:up-to-abstraction}

The general idea of up-to techniques is as follows. Given a monotone
function $f : L \to L$ one is interested in the greatest fixpoint
$\nu f$. In general, the aim is to establish whether some given
element of the lattice $l \in L$ is under the fixpoint, i.e., if
$l \sqsubseteq \nu f$.
In turn, since by Tarski's Theorem,
$\nu f = \lub \{ x \mid x \sqsubseteq f(x) \}$, this amounts to
proving that $l$ is under some post-fixpoint $l'$, i.e.,
$l \sqsubseteq l'\sqsubseteq f(l')$. For instance, consider the
function $\mathit{bis}_T : \Rel{\mathbb{S}} \to \Rel{\mathbb{S}}$ for
bisimilarity on a transition system $T$ in
Example~\ref{ex:bisimilarity}. Given two states
$s_1, s_2 \in \mathbb{S}$, proving
$\{ (s_1, s_2) \} \subseteq \nu \mathit{bis}_T$, i.e., showing the two
states bisimilar, amounts to finding a post-fixpoint, i.e., a relation
$R$ such that $R \subseteq \mathit{bis}_T(R)$ (namely, a bisimulation)
such that $\{ (s_1, s_2) \} \subseteq R$.

\begin{definition}[up-to function]
  Let $L$ be a complete lattice and let $f : L \to L$ be a monotone
  function. A \emph{sound up-to function} for $f$ is any monotone
  function $u : L \to L$ such that
  $\nu (f \circ u) \sqsubseteq \nu f$. It is called \emph{complete} if also
  the converse inequality $\nu f \sqsubseteq \nu (f \circ u)$ holds.
\end{definition}

When $u$ is sound, if $l$ is a post-fixpoint of $f \circ u$, i.e., $l \sqsubseteq f(u(l))$ we have
$l \sqsubseteq \nu (f \circ u) \sqsubseteq \nu f$.
The idea is that the characteristics of $u$ should make it easier to
prove that $l$ is a postfix-point of $f \circ u$ than proving that it
is for $f$. This is clearly the case
when $u$ is
extensive. In fact by extensiveness of $u$ and monotonicity of $f$ we
get $f(l) \sqsubseteq f (u(l))$ and thus obtaining
$l \sqsubseteq f(u(l))$ is ``easier'' than obtaining
$l \sqsubseteq f(l)$.
Note that extensiveness also implies
``completeness'' of the up-to function: since $f \sqsubseteq f \circ u$
clearly $\nu f \sqsubseteq \nu (f \circ u)$.
We remark that for up-to functions, since the interest is for underapproximating fixpoints, the terms soundness and completeness are somehow reversed with respect to their meaning in abstract interpretation.

A common sufficient condition
ensuring
soundness of up-to
functions is compatibility~\cite{p:complete-lattices-up-to}.

\begin{definition}[compatibility]
  \label{def:compatibility}
  Let $L$ be a complete lattice and let $f : L \to L$ be a monotone
  function. A monotone function $u : L \to L$ is \emph{$f$-compatible} if
  $u \circ f \sqsubseteq f \circ u$.
\end{definition}

The soundness of
an $f$-compatible up-to function $u$ can be proved by viewing it as an
abstraction. When $u$ is a closure (i.e., extensive and idempotent),
$u(L)$ is a complete lattice that can be seen as an abstract domain in
a way that $\langle u, i \rangle : L \to u(L)$, with $i$ being the
inclusion, is a Galois insertion. Moreover $f_{|u(L)}$ can be
shown to provide an abstraction of both $f$ and
$f \circ u$ over $L$, sound and complete with respect to the inclusion
$i$, seen as the concretisation.
The formal details are given below. Since we later aim
to apply up-to techniques to
systems of equations, we
deal with not only greatest but also least fixpoints.  

\begin{lemmarep}[compatible up-to functions as sound and complete abstractions]
  \label{le:up-to-closure}
  Let $f : L \to L$ be a monotone function and let $u : L \to L$ be an
  $f$-compatible closure. Consider the Galois insertion
  $\langle u, i \rangle : L \to u(L)$ where $i : u(L) \to L$ is the inclusion. Then
  \begin{enumerate}
  \item $f$ restricts to $u(L)$, i.e., $f_{|u(L)}\colon u(L)\to u(L)$;
  \item $\nu f = i(\nu f_{|u(L)}) = \nu (f\circ u)$.
    If $u$ is continuous and strict then
    $\mu f = i(\mu f_{|u(L)}) = \mu (f\circ u)$.
  \end{enumerate}
  \begin{center}
    \begin{tikzcd}[ampersand replacement=\&]  
      L \arrow[loop above, "f" very near start]
        \arrow[loop left, "f \circ u"]
        \arrow[r, "u" swap, shift right=1ex]
      \&
      u(L)
      \arrow[l, "i" swap, shift right=1ex]
      \arrow[loop right, "f_{|u(L)}"]
    \end{tikzcd}
  \end{center}
\end{lemmarep}

\begin{proof}
  
  \begin{enumerate}
  \item We have that for all $l \in u(L)$, the $f$-image $f(l) \in
    u(L)$. Let $l \in u(L)$, i.e., $l = u(l')$ for some $l'\in
    L$. Observe that
    \begin{align*}
      f(l) 
        & \sqsubseteq u(f(l)) & \mbox{[by extensiveness]}\\
        & \sqsubseteq f(u(l))  & \mbox{[by compatibility]}\\
        & = f(u(u(l')))\\
        & = f(u(l'))
        & \mbox{[by idempotency]}\\
        & = f(l)
    \end{align*}
    Hence $f(l) = u(f(l))$, which means that $f(l) \in u(L)$.

  \item We first prove that $\nu f = \nu f_{|u(L)}$.  Consider
    \begin{center}
      \begin{tikzcd}
        L \arrow[loop below,"f" below] \arrow[r, "\alpha =
        u" above, shift
        left=1ex] & u(L) \arrow[l, "\gamma = \mathit{i}" below, shift left=1ex]
        \arrow[loop below,"f|_{u(L)}" below]
      \end{tikzcd}
    \end{center}
    Note that for all $l \in u(L)$, we have
    $f (\gamma(l)) = f(l) = \gamma(f_{|u(L)}(l))$, i.e., $\gamma$
    satisfies soundness (\ref{eq:gamma-sound}) and completeness
    (\ref{eq:gamma-complete}) in Lemma~\ref{le:gamma}.
    Therefore,
    $\nu f = \gamma (\nu f_{|u(L)}) = \eta f_{|u(L)}$, as
    desired. 

    \medskip

    Next we prove that $\nu (f \circ u) = \nu f_{|u(L)}$
    Consider
    \begin{center}
      \begin{tikzcd}
        L \arrow[loop below,"f\circ u" below] \arrow[r, "\alpha =
        u" above, shift
        left=1ex] & u(L) \arrow[l, "\gamma = \mathit{i}" below, shift left=1ex]
        \arrow[loop below,"f_{|u(L)}" below]
      \end{tikzcd}
    \end{center}

    Again, for all $l \in u(L)$, we have
    $f \circ u (\gamma(l)) = f(u(l)) = f(l) = \gamma(f_{|u(L)}(l))$,
    i.e., $\gamma$ satisfies soundness (\ref{eq:gamma-sound}) and
    completeness (\ref{eq:gamma-complete}) in Lemma~\ref{le:gamma}.
    Therefore,
    $\nu (f \circ u) = \gamma (\nu f_{|u(L)}) = \nu f_{|u(L)}$, as
    desired. 

    \medskip

    Finally, if $u$ is continuous and strict then also
    $\gamma=i$ is so: First, since $\bot=u(\bot)\in u(L)$ and hence
    the inclusion $i$ maps $\bot$ to $\bot$. Second, since $u$ is
    continuous, directed suprema in both lattices coincide: let
    $D\subseteq u(L)$, then $\bigsqcup D = \bigsqcup \{u(d)\mid d\in
    D\} = u(\bigsqcup D)\in u(L)$. Hence $i$ preserves directed
    suprema.

    Hence we get the previous results also for least fixpoints.
  \end{enumerate}
\end{proof}

When the up-to function is just $f$-compatible (hence sound), but
possibly not a closure, we canonically turn $u$ into an $f$-compatible
closure (hence sound and complete) by taking the least closure
$\bar{u}$ above $u$.

\begin{definition}[least upper closure]
  \label{de:lcu}
  Let $L$ be a complete lattice and let $u : L \to L$ be a monotone
  function. We let $\bar{u} : L \to L$ be the function defined by
  $\bar{u}(x) = \mu (\hat{u}_x)$ where  $\hat{u}_x(y) = u(y) \sqcup x$. 
\end{definition}

\begin{lemmarep}[properties of $\bar{u}$]
  \label{le:extension}
  Let $u : L \to L$ be a monotone function. Then
  \begin{enumerate}
  \item \label{le:extension:1}\label{le:extension:2} \label{le:extension:3}
    $\bar{u}$ is the least closure larger than $u$;
    
  \item \label{le:extension:4}
    if $u$ is $f$-compatible then $\bar{u}$ is;
    
  \item \label{le:extension:5}
    if $u$ is continuous and strict then $\bar{u}$ is.
  \end{enumerate}
\end{lemmarep}

\begin{proof}
  \begin{enumerate}
  \item We first observe that $\bar{u}$ is a closure. For
    extensiveness, just observe that
    $\hat{u}_x(y) = u(y) \sqcup x \sqsupseteq x$ for all $y \in L$ and
    thus obviously $\bar{u}(x) = \mu (\hat{u}_x) \sqsupseteq x$.
    
    In order to show that $\bar{u}$ is idempotent, note that, by
    extensiveness, $\bar{u} \sqsubseteq \bar{u} \circ \bar{u}$. Hence
    to conclude, we just need to prove the converse inequality
    $\bar{u} \circ \bar{u} \sqsubseteq \bar{u}$. For all $x \in L$, we
    have
    $\bar{u}(\bar{u}(x)) = \mu (\hat{u}_{\bar{u}(x)}) =
    \hat{u}_{\bar{u}(x)}^\gamma$ for some ordinal $\gamma$.
    We prove, by transfinite induction that for all $\alpha$, that
    $\hat{u}_{\bar{u}(x)}^\alpha \sqsubseteq \bar{u}(x)$.

    \medskip

    ($\alpha=0$) We have that
    $\hat{u}_{\bar{u}(x)}^0 = \bot \sqsubseteq \bar{u}(x)$.

    \medskip

    ($\alpha \to \alpha+1$) We have that
    \begin{align*}
      \hat{u}_{\bar{u}(x)}^{\alpha+1} 
      & = \hat{u}_{\bar{u}(x)}(\hat{u}_{\bar{u}(x)}^{\alpha})\\
      & = u(\hat{u}_{\bar{u}(x)}^{\alpha}) \sqcup \bar{u}(x)  
      & \mbox{[by def. $\hat{u}_{\bar{u}(x)}$ ]}\\
      & \sqsubseteq u(\bar{u}(x)) \sqcup \bar{u}(x) 
      & \mbox{[by ind. hyp.]}\\
      & \sqsubseteq \hat{u}_x(\bar{u}(x)) \sqcup \bar{u}(x)  
      & \mbox{[since $u \sqsubseteq \hat{u}_x$]}\\
      & = \bar{u}(x) \sqcup \bar{u}(x)  
      & \mbox{[since $\hat{u}_x(\bar{u}(x)) = \bar{u}(x)$]}\\
      & = \bar{u}(x) 
    \end{align*}
    
    \medskip

    ($\alpha$ limit) We have that
    \begin{align*}
      \hat{u}_{\bar{u}(x)}^{\alpha}
      & = \lub_{\beta<\alpha} \hat{u}_{\bar{u}(x)}^{\beta}\\
      & \sqsubseteq \lub_{\beta<\alpha} \bar{u}(x)  & \mbox{[by ind. hyp.]}\\
      & = \bar{u}(x)
    \end{align*}
    
    Moreover, $\bar{u}$ is larger than $u$, i.e.,
    $u \sqsubseteq \bar{u}$. In fact,
    \begin{align*}
      \bar{u}(x) 
      & = \hat{u}_x(\bar{u}(x)) & \mbox{[since $\bar{u}(x)$ is a fixpoint of $\hat{u}_x$]}\\
      & = u(\bar{u}(x)) \sqcup x & \mbox{[by def. of  $\hat{u}_x$]}\\
      & \sqsupseteq u(x) \sqcup x & \mbox{[since $\bar{u}$ is extensive]}\\
      & \sqsupseteq u(x)
    \end{align*}

    \medskip

    Finally, let $v$ any closure such that $u \sqsubseteq v$. We show
    that for all $x \in L$, $\hat{u}_x^\alpha \sqsubseteq v(x)$,
    whence $\bar{u}(x) \sqsubseteq v(x)$, as desired.

        ($\alpha=0$) We have that
    $\hat{u}_{\bar{u}(x)}^0 = \bot \sqsubseteq v(x)$.

    \medskip

    ($\alpha \to \alpha+1$) We have that
    \begin{align*}
      \hat{u}_{\bar{u}(x)}^{\alpha+1} 
      & = \hat{u}_{x}(\hat{u}_{x}^{\alpha})\\
      & = u(\hat{u}_{x}^{\alpha}) \sqcup x
      & \mbox{[by def. $\hat{u}_{x}$ ]}\\
      & \sqsubseteq u(v(x)) \sqcup x
      & \mbox{[by ind. hyp.]}\\
      & \sqsubseteq v(v(x)) \sqcup x
      & \mbox{[since $u \sqsubseteq v$]}\\
      & = v(x) \sqcup x
      & \mbox{[by idempotency of $v$]}\\
      & = v(x)
      & \mbox{[by extensiveness of $v$]}
    \end{align*}
    
    \medskip

    ($\alpha$ limit) We have that
    \begin{align*}
      \hat{u}_{x}^{\alpha}
      & = \lub_{\beta<\alpha} \hat{u}_{x}^{\beta}\\
      & \sqsubseteq \lub_{\beta<\alpha} v(x) & \mbox{[by ind. hyp.]}\\
      & = v(x)
    \end{align*}

  \item Observe that for all $x \in L$, we have
    $\bar{u}(f(x)) = \hat{u}_{f(x)}^\gamma$ for some ordinal
    $\gamma$. Hence also here we proceed by transfinite induction,
    showing that for all $\alpha$
    \begin{center}
      $\hat{u}_{f(x)}^\alpha \sqsubseteq f (\bar{u}(x))$
    \end{center}

    ($\alpha=0$) We have that
    $\hat{u}_{f(x)}^0 = \bot \sqsubseteq f(\bar{u}(x))$.

    \medskip

    ($\alpha \to \alpha+1$) We have that
    \begin{align*}
      \hat{u}_{f(x)}^{\alpha+1} 
      & = \hat{u}_{f(x)} (\hat{u}_{f(x)}^{\alpha})\\
      & \sqsubseteq \hat{u}_{f(x)}(f(\bar{u}(x))  & \mbox{[by ind. hyp.]}\\
      & = u(f(\bar{u}(x))) \sqcup f(x) & \mbox{[by def. of $\hat{u}_{f(x)}$]}\\
      & \sqsubseteq f(u(\bar{u}(x))) \sqcup f(x) & \mbox{[by compatibility of $f$]}\\
      & \sqsubseteq f(u(\bar{u}(x)) \sqcup x) & \mbox{[by general properties of $\sqcup$]}\\
      & = f(\hat{u}_x(\bar{u}(x)))) & \mbox{[by def. of $\hat{u}_x$]}\\
      & = f(\bar{u}(x)) & \mbox{[since $\hat{u}(x)$ is a fixpoint]}
    \end{align*}
    
    \medskip

    ($\alpha$ limit) We have that
    \begin{align*}
      \hat{u}_{f(x)}^{\alpha} 
      & = \lub_{\beta<\alpha} \hat{u}_{f(x)}^{\beta} \\
       & \sqsubseteq \lub_{\beta<\alpha} f (\bar{u}(x))  & \mbox{[by ind. hyp.]}\\
      & = f (\bar{u}(x))
    \end{align*}

  \item Assume that $u$ is continuous and strict. Then $\hat{u}_x$ is
    continuous for all $x \in L$. In fact, for each directed set $D \subseteq L$
    we have
    \begin{align*}
      \hat{u}_x(\lub D)
      & = u(\lub D) \sqcup x\\
      & = \lub \{ u(d) \mid d \in D\}) \sqcup x\\
      & = \lub \{ u(d) \sqcup x \mid d \in D\})\\
      & = \lub \{ \hat{u}_x(d) \mid d \in D\})\\
    \end{align*}

    Now, we can show that $\bar{u}$ is continuous. Let $D \subseteq L$
    be a directed set. We have to prove that
    $\bar{u}(\lub D) = \lub_{d \in D} \bar{u}(d)$. It is sufficient to
    prove that
    $\bar{u}(\lub D) \sqsubseteq \lub_{d \in D} \bar{u}(d)$, as the
    other inequality follows by monotonicity and general properties of
    $\lub$.
    As usual, we recall that
    $\bar{u}(\lub D) = \hat{u}_{\lub D}^\gamma$ for some $\gamma$ and
    thus show, by transfinite induction on $\alpha$ that
    \begin{center}
      $\hat{u}_{\lub D}^\alpha \sqsubseteq \lub_{d \in D} \bar{u}(d)$.
    \end{center}

    ($\alpha=0$) We have that
    $\hat{u}_{\lub D}^0 = \bot \sqsubseteq \lub_{d \in D} \bar{u}(d)$.

    \medskip

    ($\alpha \to \alpha+1$) We have that
    \begin{align*}
      \hat{u}_{\lub D}^{\alpha+1} 
      & = \hat{u}_{\lub D} (\hat{u}_{\lub D}^{\alpha})\\
      & \sqsubseteq \hat{u}_{\lub D}(\lub_{d \in D} \bar{u}(d))  & \mbox{[by ind. hyp.]}\\
      & = \lub_{d \in D}  \hat{u}_{\lub D}(\bar{u}(d))  & \mbox{[by continuity of $\hat{u}_{\lub D}$]}\\
      & = \lub_{d \in D} (u(\bar{u}(d)) \sqcup \lub D) & \mbox{[by def. of $\hat{u}_{\lub D}$]}\\
      & \sqsubseteq \lub_{d \in D} (\hat{u}_d(\bar{u}(d)) \sqcup \lub D) & \mbox{[since $u \sqsubseteq \hat{u}_d$]}\\
      & = \lub_{d \in D} (\bar{u}(d) \sqcup \lub D) & \mbox{[since $\hat{u}(d)$ is a fixpoint]}\\
      & = \lub_{d \in D} (\bar{u}(d) \sqcup d)\\
      & = \lub_{d \in D} \bar{u}(d) & \mbox{[by extensiveness of $\bar{u}$]}
    \end{align*} 
    
    \medskip

    ($\alpha$ limit) We have that
    \begin{align*}
      \hat{u}_{\lub D}^{\alpha} 
      & = \lub_{\beta<\alpha} \hat{u}_{\lub D}^{\beta} \\
       & \sqsubseteq \lub_{\beta<\alpha} \lub_{d \in D} \bar{u}(d)  & \mbox{[by ind. hyp.]}\\
      & = \lub_{d \in D} \bar{u}(d) 
    \end{align*}

    Furthermore, $\bar{u}$ is strict since
    $\hat{u}_\bot(\bot) = u(\bot) \sqcup \bot = \bot \sqcup \bot =
    \bot$, and thus $\bar{u}(\bot) = \mu (\hat{u}_\bot) = \bot$.
   
  \end{enumerate}
\end{proof}

The least upper closure above a given function has
been considered already in~\cite{CC:CCLR}, with a slightly different
construction.

Using Lemmas~\ref{le:up-to-closure} and~\ref{le:extension},
whenever $u$ is a compatible up-to function for $f$, we have that
$\bar{u}$ is a sound and complete up-to function for $f$.
The soundness of $u$ then immediately follows.

\begin{corollary}[soundness of compatible up-to functions]
  \label{co:up-to}
  Let $f : L \to L$ be a monotone function, let $u : L \to L$ be an
  $f$-compatible up-to function
  and let $\bar{u}$ be the least closure above $u$. Then
  $\nu (f \circ u) \sqsubseteq \nu (f \circ \bar{u}) = \nu
  f$. If $u$ is continuous and strict, then
  $\mu (f \circ u) \sqsubseteq \mu (f \circ \bar{u}) = \mu f$.
\end{corollary}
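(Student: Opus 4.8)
The plan is to deduce the statement entirely from the two preceding lemmas, the point being that $\bar u$ is precisely the ``closure completion'' of $u$ to which Lemma~\ref{le:up-to-closure} applies verbatim, and that replacing $u$ by $\bar u$ can only enlarge the fixpoints in question.

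First I would record that $\bar u$ is an $f$-compatible closure: by Lemma~\ref{le:extension}(\ref{le:extension:1}) it is the least closure above $u$ (in particular a closure, and $u \sqsubseteq \bar u$), and by Lemma~\ref{le:extension}(\ref{le:extension:4}) it is $f$-compatible because $u$ is. Hence Lemma~\ref{le:up-to-closure} applies with $\bar u$ in place of $u$ and yields, via the Galois insertion $\langle \bar u, i \rangle : L \to \bar u(L)$, the equalities $\nu f = i(\nu f_{|\bar u(L)}) = \nu (f \circ \bar u)$.

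Next I would supply the remaining inequality $\nu (f \circ u) \sqsubseteq \nu (f \circ \bar u)$ using the general monotonicity of the greatest-fixpoint operator in its argument: since $u \sqsubseteq \bar u$ and $f$ is monotone, $f \circ u \sqsubseteq f \circ \bar u$ pointwise; writing $g = f \circ u$ and $h = f \circ \bar u$, the element $\nu g$ satisfies $\nu g = g(\nu g) \sqsubseteq h(\nu g)$, so it is a post-fixpoint of $h$ and therefore $\nu g \sqsubseteq \nu h$ by the Knaster--Tarski characterisation of $\nu h$. Chaining this with the previous paragraph gives $\nu (f \circ u) \sqsubseteq \nu (f \circ \bar u) = \nu f$, which is the first claim.

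For the second claim I would add the hypothesis that $u$ is continuous and strict; then by Lemma~\ref{le:extension}(\ref{le:extension:5}) so is $\bar u$, and the ``moreover'' part of Lemma~\ref{le:up-to-closure} gives $\mu f = \mu (f \circ \bar u)$. The inequality $\mu (f \circ u) \sqsubseteq \mu (f \circ \bar u)$ follows dually: with $g = f \circ u \sqsubseteq h = f \circ \bar u$, the element $\mu h$ satisfies $g(\mu h) \sqsubseteq h(\mu h) = \mu h$, so it is a pre-fixpoint of $g$, whence $\mu g \sqsubseteq \mu h$. There is no real obstacle here; the only points requiring care are that Lemma~\ref{le:up-to-closure} demands the up-to function to be a closure (which is exactly why one routes the argument through $\bar u$ rather than $u$), and that the continuity and strictness hypotheses must be transported to $\bar u$ before invoking the least-fixpoint half of that lemma.
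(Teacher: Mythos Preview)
Your proof is correct and follows exactly the route the paper intends: the paper states just before the corollary that it follows from Lemmas~\ref{le:up-to-closure} and~\ref{le:extension}, and you have spelled out precisely how---lifting compatibility and (when needed) continuity/strictness from $u$ to $\bar u$ via Lemma~\ref{le:extension}, invoking Lemma~\ref{le:up-to-closure} on $\bar u$ for the equalities, and using the pointwise monotonicity of $\nu$ and $\mu$ for the inequalities.
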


In~\cite{p:complete-lattices-up-to} the proof of soundness of a compatible up-to technique $u$ relies on the definition of a function $u^\omega$ defined as $u^\omega(x) = \bigsqcup \{ u^n(x) \mid n \in \nat\}$, where $u^n(x)$ is defined inductively as $u^0(x) = x$ and $u^{n+1}(x) = u(u^n(x))$. The function $u^\omega$ is extensive but not idempotent in general, and it can be easily seen that $u^\omega \sqsubseteq \bar{u}$.
The paper~\cite{Pou:CAWU} shows that for any monotone
function one can consider the largest compatible up-to function, the
so-called companion, which is extensive and idempotent. The companion
could be used in place of $\bar{u}$ for part of the theory. However,
we find it convenient to work with $\bar{u}$ since, despite not
discussed in the present paper, it plays a key role for the
integration of up-to techniques into the verification
algorithms. Furthermore the companion is usually hard to determine.

\subsection{Up-To Techniques for Systems of Equations}
\label{ss:up-to-systems}

Exploiting the view of up-to functions as
abstractions, moving to systems of equations is easy.
As in the case of abstractions, a different up-to
function is allowed for each equation.

\begin{definition}[compatible up-to for systems of equations]
  \label{de:compatible-systems}
  Let $(L, \sqsubseteq)$  be a complete lattice and
  let $E$ be $\vec{x} =_{\vec{\eta}} \vec{f}(\vec{x})$, a
  system of $m$ equations
  over $L$. A \emph{compatible tuple of up-to functions} for $E$ is an
  $m$-tuple of monotone functions $\vec{u}$, with $u_i : L \to L$,
  satisfying compatibility
  ($\prd{\vec{u}} \circ \vec{f} \sqsubseteq \vec{f} \circ \prd{\vec{u}}$)
  with $u_i$ continuous and strict for each $i \in \interval{m}$ such
  that $\eta_i = \mu$.
\end{definition}

We can then generalise Corollary~\ref{co:up-to} to systems of
equations.

\begin{theoremrep}[up-to for systems]
  \label{th:up-to-sys}
  Let $(L, \sqsubseteq)$ be a complete lattice and let $E$ be
  $\vec{x} =_{\vec{\eta}} \vec{f}(\vec{x})$, a system of $m$ equations
  over $L$, with solution $\vec{s} \in L^m$. Let
  $\vec{u}$ be a compatible tuple of up-to functions for $E$
  and let $\bar{\vec{u}} = (\bar{u}_1, \ldots, \bar{u}_m)$ be the
  corresponding tuple of least
  closures.
  Let $\vec{s}'$ and
  $\bar{\vec{s}}$ be the solutions of the systems
  $\vec{x} =_{\vec{\eta}} \vec{f}(\prd{\vec{u}}(\vec{x}))$
  and
  $\vec{x} =_{\vec{\eta}} \vec{f}(\prd{\bar{\vec{u}}}(\vec{x}))$,
  respectively.
  Then $\vec{s}' \sqsubseteq \bar{\vec{s}} = \vec{s}$. Moreover, if
  $\vec{u}$ is extensive then $\vec{s}' = \vec{s}$.
\end{theoremrep}

\begin{proof}
  Immediate extension to systems of the proofs of the
  Lemma~\ref{le:up-to-closure} and
  Corollary~\ref{co:up-to}, exploiting Theorem~\ref{th:gamma-sys}.
\end{proof}

\begin{example}[$\mu$-calculus up-to (bi)similarity]
  \label{ex:up-to-bisim}
  Consider the problem of model-checking the $\mu$-calculus over some
  transition system with atoms $T = (\mathbb{S}, \to, A)$.

  Assuming that we have an a priori knowledge about the similarity
  relation $\precsim$ over some of the states in $T$, then,
  restricting to a suitable fragment of the $\mu$-calculus we can
  avoid checking the same formula on similar states. This intuition
  can be captured in the form of an up-to technique, that we refer to
  as up-to similarity. It is based on an up-to function
  $\vec{u}_\precsim : \Pow{\mathbb{S}} \to \Pow{\mathbb{S}}$ defined,
  for $X \in \Pow{\mathbb{S}}$, by
  $u_\precsim(X) = \{s \in \mathbb{S}\,\mid\, \exists s' \in X.\ s' \precsim s \}.$

  Function $u_\precsim$ is monotone, extensive,
  and idempotent. It is also continuous and strict.
  
  Moreover, $u_\precsim$ is a compatible (and thus sound) up-to
  function for the $\mu\Diamond$-calculus where propositional
  variables are interpreted as atoms.
  In fact, $\precsim$ is a simulation (the largest one) and
  the function $u_\precsim$ is the associated abstraction as defined
  in Example~\ref{ex:abstraction-mu}, namely
  $u_\precsim = \semdia_{\succsim}$.
  Therefore, compatibility
  $u_\precsim \circ f \sqsubseteq f \circ u_\precsim$ corresponds to
  condition
  $\alpha \circ \semdia_{T_C} \circ \gamma \subseteq \semdia_{T_A}$ in
  Example~\ref{ex:abstraction-mu} which has been already observed to
  coincide with soundness in the sense of Theorem~\ref{th:galois} for
  the operators of the $\mu\Diamond$-calculus.
  Concerning propositional variables, in
  Example~\ref{ex:abstraction-mu}, they were interpreted, in the
  target transition system, by the abstraction of their interpretation
  in the source transition system. Since here we have a single
  transition system and a single interpretation
  $\rho : \Prop \to \Pow{\mathbb{S}}$, we must have
  $\rho(p) = u_\precsim (\rho(p))$, i.e., $\rho(p)$
  upward-closed with respect to $\precsim$. This automatically holds
  by the fact that $\precsim$ is a simulation.

  \medskip

  Similarly, we can define up-to bisimilarity via the up-to
  function $u_\sim(X) = \{s \in \mathbb{S}\,\mid\, \exists s' \in X.\ s \sim s' \}$.
  As above, one can see that
  compatibility $u_\sim \circ f \sqsubseteq f \circ u_\sim$ holds for
  the full $\mu$-calculus with propositional variables interpreted
  as atoms.
  For instance, consider the formula $\varphi$ in Example~\ref{ex:mu}
  and the transition system in Fig.~\ref{fi:running-ts}. Using the
  up-to function $u_{\sim}$ corresponds to working in the bisimilarity
  quotient in Fig.~\ref{fi:running-ts-bis}. Note, however, that when
  using a local algorithm (see
  \S\ref{ss:algorithmic-view})
  the quotient does not need to be actually computed. Rather, only the
  bisimilarity over the states explored by the searching procedure is
  possibly exploited.
\end{example}

\begin{example}[bisimilarity up-to transitivity]
  \label{ex:up-to-trans}
  Consider the problem of checking bisimilarity on a transition system
  $T = \langle \mathbb{S}, \to \rangle$. A number of well-known sound
  up-to techniques have been introduced in the
  literature \cite{ps:enhancements-coinductive}. As an example,
  we consider the up-to function
  $u_{\mathit{tr}} : \Rel{\mathbb{S}} \to \Rel{\mathbb{S}}$ performing
  a single step of transitive closure. It is defined as:
  \[u_{\mathit{tr}}(R) = R \circ R =
  \{(x,y)\,\mid\, \exists\, z \in \mathbb{S}.\ (x,z) \in R\ \land\ (z,y) \in R\}.\]

  It is easy to see that $u_{\mathit{tr}}$ is monotone and compatible
  with respect to the function
  $\mathit{bis}_T : \Rel{\mathbb{S}} \to \Rel{\mathbb{S}}$ of
  which bisimilarity is the greatest fixpoint (see
  Example~\ref{ex:bisimilarity}). Since $A$ is deterministic,
  bisimilarity coincides with language equivalence.

  Note that $u_{\mathit{tr}}$ is neither idempotent nor extensive. The
  corresponding closure $\bar{u}_{\mathit{tr}}$ maps a relation to its (full) transitive closure (this is known to be
  itself a sound up-to technique, a fact that we can also derive from
  the compatibility of $u_{\mathit{tr}}$ and
  Corollary~\ref{co:up-to}).
\end{example}

\section{Solving Systems of Equations via Games}
\label{se:powerset-game}

In this section, we first provide a characterisation of the
solution of a system of fixpoint equations over a complete lattice in terms of a parity game.
This generalises a result in~\cite{BKMP:FPCL}. While the
original result was limited to continuous lattices, here, exploiting
the results on abstraction in \S\ref{se:abstractions}, we devise a
game working for any complete lattice.

The game characterisation opens the way to the development of
algorithms for solving the game and thus the associated verification
problem.


\subsection{Game Characterization}
\label{se:game-characterization}

We show that the solution of a system of equations over a complete
lattice can be characterised using
a parity game.

\begin{table}
  \small
  \begin{center}
    \begin{tabular}{l|c|l}
      Position & Player & Moves \\ \hline 
      $(b,i)$ & $\exists$ & $\vec{X}$
                            s.t.
                            $b \sqsubseteq f_i(\lub \vec{X})$ \\[1mm]
      $\vec{X}$ & $\forall$ & $(b',j)$ s.t. $b' \in X_j$
    \end{tabular}
  \end{center}
  \caption{The game on the powerset of the basis}
  \label{tab:powerset-game}
\end{table}

\begin{definition}[powerset game]
  \label{de:powerset-game}
  Let $L$ be a complete lattice with a basis $B_L$. Given
  a system $E$ of $m$ equations over $L$ of the kind
  $\vec{x} =_{\vec{\eta}} \vec{f} (\vec{x})$, the corresponding
  \emph{powerset game} is a parity game, with an existential player $\exists$
  and a universal player $\forall$, defined as follows:

  \begin{itemize}
  \item The positions of $\exists$ are pairs $(b, i)$ where
    $b \in B_L$, $i \in \interval{m}$. Those of $\forall$ are
    tuples of subsets of the basis $\vec{X} = (X_1, \ldots, X_m) \in (\Pow{B_L})^m$.

  \item From position $(b, i)$ the
    moves of $\exists$ are 
    $\Emoves{b,i} = \{ \vec{X} \mid \vec{X} \in (\Pow{B_L})^m\ \land\ 
    b \sqsubseteq f_i(\lub \vec{X})\}$.

  \item From position $\vec{X} \in (\Pow{B_L})^m$ the
    moves of $\forall$ are
    $\Amoves{\vec{X}} = \{ (b, i) \mid i \in \interval{m}\ \land\ b
    \in X_i \}$.
  \end{itemize}
  The game is schematised in Table~\ref{tab:powerset-game}.
  For a finite play, the winner is the player who moved last.
  For an infinite play, let $h$ be the highest index that
  occurs infinitely often in a pair $(b, i)$. If $\eta_h = \nu$ then
  $\exists$ wins, else $\forall$ wins.
\end{definition}

If we instantiate the game to the setting of standard
$\mu$-calculus model-checking, we obtain an alternative encoding of
$\mu$-calculus into parity games, typically resulting in more compact
games.

\begin{example}
  We provide a simple example illustrating the game. Consider the
  infinite lattice $L=\nat\cup \{\omega,\omega+1\}$ (where
  $n\le \omega\le \omega+1$ for every $n\in\nat$) with basis
  $B_L = L$.  Furthermore let $f\colon L\to L$ be a monotone
  function with $f(n) = n+1$ for $n\in\nat$ and $f(\omega)=\omega$,
  $f(\omega+1)=\omega+1$. Hence $\mu f = \omega$.

  We set $b=\omega$ and attempt to show via the game that
  $b\le \mu f$, by exhibiting a winning strategy for $\exists$. Note
  that since we are dealing with a $\mu$-equation, in order to win
  $\exists$ must ensure that $\forall$ eventually has no moves left.
  Since there is only one fixpoint equation, we omit the indices.
  Starting with $b=\omega$, $\exists$ plays $X = \nat$, which is a
  valid move since $\omega \le f(\lub X) = f(\omega)$. Now $\forall$
  has to pick some $n\in X$. In the next move, $\exists$ can play
  $X = \{n-1\}$, which means that $\forall$ picks $n-1$. Hence we
  obtain a descending chain, leading to $1$, which can be covered by
  $\exists$ by choosing $X = \emptyset$, since
  $1\le f(\lub \emptyset) = f(0)$. Now $\forall$ has no moves left
  and $\exists$ wins.

  Instead for $b = \omega+1\not\le \mu f$, $\exists$ has
  no winning strategy since she has to play a set $X$ that contains
  $\omega+1$. Then player $\forall$ can reply by choosing $\omega+1$
  and the game will continue forever. This is won by $\forall$ since
  we are dealing with a $\mu$-equation.
\end{example}

Interestingly, the correctness and completeness of the game can be proved by exploiting the results in
\S\ref{se:abstractions}. The crucial observation is
that there is a Galois insertion between $L$ and the powerset lattice
of its basis (which is algebraic hence continuous)
$\langle \alpha , \gamma \rangle : \Pow{B_L} \to L$
where abstraction $\alpha$ is the join $\alpha(X) = \lub X$ and
concretisation $\gamma$ takes the lower cone
$\gamma(l) = \cone{l} \cap B_L$.
Then a system of equations over a complete lattice $L$ can be ``transferred''
to a system of equations over the powerset of the basis $\Pow{B_L}$
along such insertion, in a way that
the system in $L$ can be seen as a sound and complete abstraction of
the one in $\Pow{B_L}$.

\begin{theoremrep}[correctness and completeness]
  \label{th:game-corr-comp}
  Let $E$ be a system of $m$ equations over a complete lattice $L$ of
  the kind $\vec{x} =_{\vec{\eta}} \vec{f} (\vec{x})$ with solution
  $\vec{s}$. For all $b \in B_L$ and $i \in \interval{m}$,
  $b \sqsubseteq s_i$ iff $\exists$ has a winning strategy from
  position $(b,i)$.
\end{theoremrep}

\begin{proof}
  Define
  $\langle \alpha , \gamma \rangle : \Pow{B_L} \to L$,
  by letting $\alpha(X) = \lub X$ for $X \in \Pow{B_L}$ and
  $\gamma(l) = \cone{l} \cap B_L$ for $l \in L$.
  It is immediate to see that this is a Galois insertion: for all
  $X \in \Pow{B_L}$ we have
  $X \subseteq \gamma(\alpha(X)) = (\cone{\bigsqcup X}) \cap B_L$
  and, for $l \in L$ we have
  $l = \alpha(\gamma(l)) = \bigsqcup (\cone{l} \cap B_L)$.

  Below we abuse the notation and write $\downarrow$ and $\lub$ for
  the $m$-tuples where each function is $\downarrow$ and $\lub$
  applied componentwise, respectively.
  \begin{center}
    \begin{tikzcd}[column sep=huge]
      ((\Pow{B_L})^m,\subseteq) \arrow[loop below,"{\vec{f}^C}=\cone\, {\vec{f}} \lub" below]
      \arrow[r,  "\vec{\alpha}=\lub\_" below, shift right=1ex]
      &
      L^m \arrow[l,  "\vec{\gamma}=\cone{\,\,\_} \cap B_L" above, shift right=1ex]
      \arrow[loop below,"\vec{f}" below]
    \end{tikzcd}
  \end{center}
    
  Define a ``concrete'' system
  $\vec{x} =_{\vec{\eta}} \vec{f}^C(\vec{x})$ where
  $\vec{f}^C = \prd{\vec{\gamma}} \circ \vec{f} \circ
  \prd{\vec{\alpha}} : (\Pow{B_L})^m \to (\Pow{B_L})^m$. Then we can
  use Lemma~\ref{le:insertion-sys} to deduce that, if we denote by
  $\vec{S}^C$ the solution of the concrete system and by $\vec{s}$ the
  solution of the original system, we have
  $\vec{S}^C = \cone{\vec{s}} \cap B_L^m$.

  Now, $(\Pow{B_L}, \subseteq)$ is an algebraic, hence continuous
  lattice. Therefore, by~\cite[Theorem~4.8]{BKMP:FPCL}, the lattice
  game for the ``concrete'' system on $(\Pow{B_L})^m$ is sound and
  complete.

  It is immediate to realise that, if we fix as basis for
  $\Pow{B_L}$ the set of singletons, this corresponds exactly to what
  we called here the powerset game.
  In fact, the game aims to show that
  $\{ b \} \subseteq S_i^C = \cone{s_i}$, for some $b \in B_L$ and
  $i \in \interval{m}$, and this amounts to $b \sqsubseteq s_i$.
  Positions of $\exists$ are pairs $(\{b\}, i)$ where $b \in B_L$ and
  $i \in \interval{m}$, and she has to play some tuples
  $\vec{X} \in (\Pow{B_L})^m$ such that
  $\{b\} \subseteq f^C_i(\vec{X}) =
  \cone{f_C(\lub \vec{X})}$ which amounts to
  $b \sqsubseteq f_C(\lub \vec{X})$.
  Positions of $\forall$ are tuples $\vec{X} \in (\Pow{B_L})^m$ and he
  chooses some $j \in \interval{m}$ and $b' \in X_j$. This is exactly
  the powerset game, hence we conclude.
\end{proof}

\subsection{An Algorithmic View}
\label{ss:algorithmic-view}

The game theoretical characterisation can be the basis for the
development of algorithms, possibly integrating abstraction and up-to
techniques, for solving systems of equations.
Here we consider local
algorithms for the case of a single equation.
Our main focus is to provide a general procedure which transcends the verification problem at hand, and also takes advantage of heuristics based on abstractions and up-to techniques.
This allows us also to establish a link with
some recent work relating abstract interpretation and up-to
techniques~\cite{bggp:sound-up-to-complete-abstract} and exploiting
up-to techniques for computing language equivalence on
NFAs~\cite{BP:NFA}.
While not improving the complexity bounds, our algorithm is still in line with other local algorithms designed for specific settings, such as \cite{BP:NFA,h:proving-up-to,h:mise-oeuvre-preuves-bisim}, as they arise as proper instantiations.

An algorithm for general systems is considerably more difficult and
the description of such an algorithm will be postponed to
\S\ref{se:on-the-fly}.  We first focus on the special case of a single
(greatest) fixpoint equation $x =_\nu f(x)$.

\subsubsection{Selections}

For a practical use of the game it can be useful to observe
that the set of moves of the existential player can be suitably restricted
without affecting the completeness of the game, by introducing a
notion of selection, similarly to what is done in~\cite{BKMP:FPCL}.

Given a lattice $L$, define a preorder $\sqsubseteq_H$ on $\Pow{B_L}$
by letting, for $X, Y \in \Pow{B_L}$, $X \sqsubseteq_H Y$ if
$\bigsqcup X \sqsubseteq \bigsqcup Y$. (The subscript $H$ comes from the fact that
for completely distributive lattices, if $B_L$ is the set of
irreducible elements, then $\sqsubseteq_H$ is the
``Hoare preorder''~\cite{AJ:DT}, requiring that $\forall x \in X.\, \exists y \in Y.\, x \sqsubseteq y$.)
Observe that $\sqsubseteq_H$ is not antisymmetric. We write $\equiv_H$
for the corresponding equivalence, i.e., $X \equiv_H Y$ when
$X \sqsubseteq_H Y \sqsubseteq_H X$.

The moves of player $\exists$ can be ordered by the pointwise
extension of $\sqsubseteq_H$, thus leading to the following
definition. Since we deal with a single equation, we will omit the
indices from the positions of player $\exists$ and write $b$ instead
of $(b,1)$.

\begin{definition}[selection]
  \label{de:selection}
  Let
  $x =_\nu f(x)$ be an equation over a complete lattice $L$, with basis $B_L$. A \emph{selection} is a
  function $\sigma : B_L \to \Pow{\Pow{B_L}}$ such that for all
  $b \in B_L$ it holds
  $\filtersub{\sigma(b)}{H} = \Emoves{b}$,
  i.e.\ the set of moves of $\exists$ from position $b$,
  where $\filtersub{}{H}$ is the upward-closure with respect to
  $\sqsubseteq_H$.
\end{definition}

This is equivalent to requiring that $\sigma(b)\subseteq \Emoves{b}$
and for each $X\in\Emoves{b}$ there exists $Y \in \sigma(b)$ such that
$\bigsqcup Y\sqsubseteq \bigsqcup X$.

For the case of a single fixpoint equation it is easy to see that
Theorem~\ref{th:game-corr-comp} continues to hold if we restrict the
moves of player $\exists$ to those prescribed by a selection.

\begin{theoremrep}[game with selections]
  \label{th:game-corr-comp-selections}
  Let $x =_\nu f(x)$ be an equation over a complete lattice $L$ with
  solution $s$. For all $b \in B_L$, it holds that
  $b \sqsubseteq s$
  iff
  $\exists$ has a winning
  strategy from position $b$ in the game restricted to selections.
\end{theoremrep}

\begin{proof}
  Assume that $\exists$ has a winning strategy in the original game:
  given $b$ she would play $X$, where all $b'\in \Amoves{X}$ are
  winning positions.

  Instead, in the game restricted by selections, she might only be
  able to play $Y$ where $\bigsqcup Y \sqsubseteq \bigsqcup X$. Now
  $\forall$ picks $b'\in Y$.  By construction
  $b' \sqsubseteq \bigsqcup X$. Since all elements of $X$ are winning
  positions in the original game (and hence below the solution), $b'$
  is also a winning position and we can continue.  Now either
  $\exists$ wins directly or the game continues forever, giving us a
  winning strategy in the restricted game.
\end{proof}

\subsubsection{Local Algorithm for a Special Case}
\label{sec:on-the-fly-special}

In this section we assume that $f: L \to L$ is some fixed function that preserves non-empty
meets, i.e., for $X \neq \emptyset$,
$f(\bigsqcap X) = \bigsqcap f(X)$.
This is equivalent to
asking $f(x) = f^*(x) \sqcap c$ for some $c \in L$ (just take
$c=f(\top)$), with $f^*$ being a right adjoint of a map $f_*$, a
setting that has been studied also
in~\cite{bggp:sound-up-to-complete-abstract}.
We will call a function satisfying this assumption a \emph{deterministic function}.
Note that the adjunction $\langle f_*,f^*\rangle$ is completely
orthogonal to the adjunctions (Galois connections) studied so far.

\begin{example}
\label{ex:deterministic}
For a simple example adopted from \cite{BP:NFA}, consider a deterministic finite automaton
$A = (Q,\Sigma,\delta,F)$, where $Q$ is a finite set of states, $\Sigma$
is a finite alphabet, $\delta : Q\times \Sigma\to Q$ is the
transition function and $F\subseteq Q$ is the set of final states.
Since $A$ is deterministic, language equivalence coincides with
bisimilarity. Consider the
lattice of relations $L = (\Pow{Q \times Q}, s\sqsubseteq)$ with basis
$B_L = \{\{(q_1,q_2)\} \mid q_1,q_2\in Q\}$. The behaviour map, having
bisimilarity as largest fixpoint, is
$f : \Pow{Q \times Q} \to \Pow{Q \times Q}$ defined as
$f(R) = f^*(R)\cap C$ where
$f^*(R) = \{(q_1,q_2) \mid \forall a \in \Sigma.
\, (\delta(q_1,a),\delta(q_2,a))\in R\}$ with
$C = \{(q_1,q_2) \mid q_1\in F \iff q_2 \in F \}$. The left adjoint is $f_*(R) = \{(\delta(q_1,a),\delta(q_2,a)) \mid (q_1,q_2)\in R, a\in\Sigma\}$. 

Given two states $q_1, q_2 \in R$,  we want to decide whether $(q_1, q_2) \in S$, where $S$ is bisimilarity, the solution of the greatest fixpoint equation $R =_\nu f(R)$.
\end{example}

We first observe that for deterministic functions we can take
 a very simple selection.

\begin{lemmarep}[selection]
  \label{le:simple-sel}
  Let $L$ be a complete lattice with basis $B_L$, and let $f : L \to L$ be a deterministic function, i.e., $f(x) = f^*(x) \sqcap c$ for some $c \in
  L$ and $\langle f_*, f^* \rangle: L \to L$ a Galois connection.
  A selection $\sigma : B_L \to \Pow{\Pow{B_L}}$ for $x =_\nu f(x)$ can be defined, for
  $b \in B_L$, as:
  \begin{center}
    $\sigma(b) = \left\{
      \begin{array}{ll}
        \mbox{$\{ X \}$ with $X \subseteq B_L$ s.t. $X \equiv_H \cone{f_*(b)} \cap B_L$} & \mbox{when $b \sqsubseteq c$}\\
        \emptyset & \mbox{otherwise}
      \end{array}
    \right.
    $
  \end{center}
\end{lemmarep}

\begin{proof}
  In order to see that this is a selection, note that if
  $b \sqsubseteq c$ then given $X \subseteq B_L$ it holds that
  $X \in \Emoves{b}$ (i.e.,
  $b \sqsubseteq f(\bigsqcup X) = f^*(\bigsqcup X) \sqcap c$) iff
  $b \sqsubseteq f^*(\bigsqcup X)$ iff
  $f_*(b) \sqsubseteq \bigsqcup X$, where the last step is§ motivated
  by adjointness.
\end{proof}

Observe that there might be several choices for $X\subseteq B_L$: one
that always works is $X = \cone{f_*(b)} \cap B_L$, but subsets
$X\subseteq \cone{f_*(b)} \cap B_L$ are also feasible, as long as 
$\bigsqcup X = f_*(b)$.
In Example~\ref{ex:deterministic}, given $\{(q_1, q_2)\} \in B_L$, we
can define
$\sigma(\{(q_1,q_2)\}) = \{\{ \{ (q_1', q_2') \} \mid (q_1', q_2') \in
f_*(\{(q_1,q_2)\})\}\} = \{ \{ \{ (\delta(q_1,a), \delta(q_2,a)) \}
\mid a \in \Sigma\}\}$.

By Lemma~\ref{le:simple-sel}, in the game for $x =_\nu f(x)$, either
the existential player is stuck or she has a best move.
As a consequence, the game
in
\S\ref{se:game-characterization} can be
simplified.
Let $B_L$ be any basis for $L$ such that $\bot \notin B_L$.
The moves of player $\exists$ are deterministic, governed by
$\sigma$, and only player $\forall$ has choices when exploring the
elements included in such moves.

For checking whether $b \sqsubseteq \nu f$, for some $b \in
B_L$, the game starts from position $b$.
Then, at a generic position $b'$, we do the following:
\begin{enumerate}
\item if $b' \not\sqsubseteq c$ then $\sigma(b') = \emptyset$ and $\exists$ loses;
\item otherwise, $\exists$ has to play the only element in $\sigma(b') = \{ X \}$
  \begin{enumerate}
  \item if $f_*(b') = \bot$ then take $X = \emptyset$; hence $\exists$ wins since $\forall$ has no moves;
  \item if instead $f_*(b') \neq \bot$, we can take $X \equiv_H \cone{f_*(b')} \cap B_L$ and thus player $\forall$ can play any $b'' \in X$ and the game continues.
  \end{enumerate}
\end{enumerate}
Player $\exists$ wins the game iff no losing
position for her ($b' \not\sqsubseteq c$) is
encountered in the exploration. When a losing position for $\exists$ is encountered we immediately know that $\forall$ wins.

The game can be further simplified by observing that, if $W$ denotes
the set of positions already visited during the exploration,
whenever, at a
position $b'$, we have $b' \sqsubseteq \bigsqcup W$ then $\exists$
wins from $b'$ as long as she wins from all the positions in $W$.
This leads to the local algorithm outlined in
List.~\ref{fi:simple-game-general}, whose proof of correctness formalises the
arguments above.
The procedure $\texttt{Explore}$ allows to check if $b \sqsubseteq \nu f = \nu (f^* \sqcap c)$ by invoking \texttt{Explore($b$,$\emptyset$)}, which returns \texttt{true} if and only if player $\exists$ wins in the simplified game.

\begin{lstlisting}[caption={Local algorithm for the simplified game.}, label={fi:simple-game-general}, mathescape]
Explore($b'$,$W$):
  if $b' \not\sqsubseteq c$ then return false;
  else if $b' \sqsubseteq \bigsqcup W$ then return true;
  else take $X \subseteq B_L$ s.t. $X \equiv_H \cone{f_*(b')} \cap B_L$;
       return $\wedge_{b'' \in X}$ Explore($b''$,$W \cup \{b'\}$);
\end{lstlisting}

\begin{theoremrep}[correctness and completeness of the simplified game]
  Let $L$ be a complete lattice with basis $B_L \subseteq L \setminus \{\bot\}$, and let $f : L \to L$ be a deterministic function, i.e., $f(x) = f^*(x) \sqcap c$ for some $c \in
  L$ and $\langle f_*, f^* \rangle: L \to L$ a Galois connection.
  Then, for all $b \in B_L$, $b \sqsubseteq \nu f$ iff the invocation \textnormal{\texttt{Explore($b$,$\emptyset$)}} returns \textnormal{\texttt{true}}. 
\end{theoremrep}

\begin{proof}
  First we prove that if $b \sqsubseteq \nu f$, then the algorithm in
  Fig.~\ref{fi:simple-game-general} determines that $\exists$ wins
  the simplified game. Observe that by monotonicity of $f$,
  we have that $b \sqsubseteq \nu f = f(\nu f) \sqsubseteq f(\top)$,
  so $\exists$ does not immediately lose. Moreover, let $X \subseteq
  B_L$ such that $X \equiv_H \cone{f_*(b)} \cap B_L$, hence $\lub X =
  f_*(b)$. Since $b \sqsubseteq \nu f$, by monotonicity of $f_*$ we
  have $\lub X = f_*(b) \sqsubseteq f_*(\nu f) = f_*(f(\nu f)) =
  f_*(f^*(\nu f) \sqcap f(\top)) \sqsubseteq f_*(f^*(\nu f))
  \sqsubseteq \nu f$ because of the properties of the Galois connection. Since $\lub X \sqsubseteq \nu f$ we must have that $b' \sqsubseteq \nu f$ for all $b' \in X$, therefore the same argument as before holds on all $b' \in X$ as well, and so no position losing for $\exists$ can ever be reached, hence $\exists$ wins.

  Now we prove that if $\exists$ wins starting from $b_0 = b$, then $b
  \sqsubseteq \nu f$. Actually, we show that if $\exists$ wins the
  simplified game according to the local algorithm (Fig.~\ref{fi:simple-game-general}) then she wins also the general fixpoint game for the single fixpoint equation $x =_\nu f(x)$, and so by Theorem~\ref{th:game-corr-comp} we know that $b \sqsubseteq \nu f$. Since $\exists$ wins, for every path $(b_0,b_1,\ldots)$ in the tree of positions explored we have three possible cases:
  \begin{itemize}
  \item the path is infinite, thus for all $i$, $b_i \sqsubseteq f(\top)$ and $b_{i+1} \in X_i$ for some $X_i \subseteq B_L$ such that $X_i \equiv_H \cone{f_*(b_i)} \cap B_L$, hence $\lub X_i = f_*(b_i)$. Then, for all $i$, observe that by the Galois connection we have $b_i = b_i \sqcap f(\top) \sqsubseteq f^*(f_*(b_i)) \sqcap f(\top) = f^*(\lub X_i) \sqcap f(\top) = f(\lub X_i)$. This means that, for all $i$, $X_i \in \Emoves{b_i}$ is a valid move for player $\exists$ from position $b_i$ in the fixpoint game. Furthermore, $b_{i+1} \in \Amoves{X_i}$ is a valid move for player $\forall$ in the fixpoint game. Therefore, the infinite sequence $(b_0,X_0,b_1,X_1,\ldots)$ is an infinite play in the fixpoint game, which is won by $\exists$ since there is a single greatest fixpoint equation.
  \item the path is finite and the exploration has been stopped because at some point $b_i \sqsubseteq f(\top)$ and $f_*(b_i) = \bot$ thus the only possible $X_i \subseteq B_L$ such that $X_i \equiv_H \cone{\bot} \cap B_L$ is $X_i = \emptyset$. Similarly to before, for all $j < i$, we have $b_{j+1} \in X_j$ for some $X_j \subseteq B_L$ such that $X_j \equiv_H \cone{f_*(b_j)} \cap B_L$. Note that since the game reached the position $b_i$, for all $j < i$ we must have $b_j \sqsubseteq f(\top)$. For the same reasons in the previous case, $b_i \sqsubseteq f(\lub X_i) = f(\lub \emptyset)$, thus $\emptyset \in \Emoves{b_i}$, and the sequence $(b_0,X_0,b_1,X_1,\ldots,b_i,\emptyset)$ is a finite play in the fixpoint game leading to the position $\emptyset$ where player $\forall$ cannot move, hence $\exists$ wins.
  \item the path is finite and the exploration has been stopped
    because at some point it holds $b_i \sqsubseteq \lub W$ and
    $b_i \sqsubseteq f(\top)$. Again, for all $j < i$, we have
    $b_{j+1} \in X_j$ for some $X_j \subseteq B_L$ such that
    $X_j \equiv_H \cone{f_*(b_j)} \cap B_L$. Observe that since $W$ is
    the set of positions previously encountered, it contains every
    position previously explored, thus not losing for $\exists$,
    including all $b_j$ for $j < i$. Then, for all $b' \in W$ we must
    have $b' \sqsubseteq f(\top)$. Furthermore, note that positions
    are put in $W$ only when all their successors are going to be
    explored. Therefore, for all $b' \in W$, we have
    $f_*(b') \sqsubseteq f(\top)$, otherwise there would exists a
    successor $b'' \in X' \equiv_H \cone{f_*(b')} \cap B_L$ such that
    $b'' \not\sqsubseteq f(\top)$ contradicting the fact that
    $\exists$ wins the simplified game. Let $X_i \subseteq B_L$ such
    that $X_i \equiv_H \cone{f_*(b_i)} \cap B_L$. Then, we have that
    $\lub X_i = f_*(b_i) \sqsubseteq f_*(\lub W) = \bigsqcup_{b' \in
      W} f_*(b') \sqsubseteq f(\top)$ since $b_i \sqsubseteq \lub W$,
    $f_*$ as a left adjoint preserves non-empty joins and
    $f_*(b') \sqsubseteq f(\top)$ for all $b' \in W$. Then, for all
    $b' \in X_i$, this implies that $b' \sqsubseteq f(\top)$.
    Moreover, for the same reasoning used in the first case we have
    that $b_i \sqsubseteq f(\lub X_i)$, hence $X_i \in \Emoves{b_i}$.
    An inductive argument thus proves that every path continuing the
    exploration from a $b' \in X_i$ we will never go beyond
    $f(\top)$, and so for each of those paths there exists an
    infinite sequence $(b_0,X_0,b_1,X_1,\ldots)$ such that for all
    $j$, $X_j \in \Emoves{b_j}$. Then this is an infinite play of
    the fixpoint game won by $\exists$.
  \end{itemize}
  Since all the possible moves of player $\forall$ in every set $X$ are explored, and all the paths obtained in this way (divided in the three cases above) correspond to plays in the fixpoint game won by $\exists$, we can conclude that, indeed, $\exists$ wins the fixpoint game.
\end{proof}

For instance, for Example~\ref{ex:deterministic}, the local algorithm of List.~\ref{fi:simple-game-general} works as follows: for checking whether $\{(q_1, q_2)\}$ is dominated by the solution, i.e., states $q_1$ and $q_2$ are bisimilar, one starts from $\{(q_1, q_2)\}$. At position $\{(q_1', q_2')\}$, if one state is final and the other is not, $\exists$ loses. If the pair has been already explored, the branch is not considered. Otherwise,  the pairs arising as $a$-successors $\{(q_1', q_2')\}$ are explored. If no losing position is found, the exploration finishes (recall that there are finitely many states) and $\bigcup W$ is a bisimulation including $(q_1, q_2)$.

Observe that when the basis is $B_L = L \setminus \{\bot\}$, the game
becomes deterministic also for player $\forall$: in
List.~\ref{fi:simple-game-general}, when $f_*(b') \neq \bot$ one can
take $X=\{\{ f_*(b') \}\}$, otherwise $X = \emptyset$. Therefore,  since $f_*$ is a left adjoint and thus continuous, if we take the set $S$ of all the positions generated during the exploration (i.e., $W$ with the addition of the last position, for finite games) then $\bigsqcup S = \bigsqcup_i f_*^i(b)$ is the least fixpoint of $f_*$ above $b$, which in turn coincides with the least fixpoint of $f^* \sqcup b$.
This establishes a direct link with \cite{bggp:sound-up-to-complete-abstract} which shows that for $b\in L$ it holds that $\mu (f^*\sqcup b)\sqsubseteq c$ iff $b \sqsubseteq \nu(f^*\sqcap c) = \nu f$.

Furthermore, we can bring up-to techniques into the picture: given an
up-to function $u$ we can modify the procedure in
List.~\ref{fi:simple-game-general} by replacing the winning condition
for $\exists$, that is, $b' \sqsubseteq \bigsqcup W$, by
$b' \sqsubseteq u(\bigsqcup W)$. The procedure remains clearly
complete and it is also correct due to Theorem~\ref{th:up-to-sys}. This
allows us to cover the algorithm in~\cite{BP:NFA} which checks
language equivalence for non-deterministic automata. It performs
on-the-fly determinization and
constructs a bisimulation up-to congruence on the determinized
automaton. More concretely, it tries to construct a bisimulation
relation for the determinized automaton (along the lines of
Example~\ref{ex:deterministic}) and remembers pairs $(X_1,X_2)$ of
sets of states seen so far in a relation $W$ (as explained in the
algorithm in List.~\ref{fi:simple-game-general}). Once a pair
$(Y_1,Y_2)$ is encountered that is contained in the congruence closure
of $W$ (the least equivalence, closed under union, that contains $W$),
one can stop exploring this branch. A more detailed comparison can be
found in Appendix~\ref{sec:bonchi-pous}.

\subsection{Local Algorithm for Solving the Game in the General Case}
\label{se:on-the-fly}

We now extend the local algorithm to the general case of a system of
equations. This gives us a technique for determining whether a
lattice element is below a component of the solution.
As in the simpler case, the idea consists in computing only the
information needed for the local problem of interest, in the line
of other local algorithms developed for
bisimilarity~\cite{h:mise-oeuvre-preuves-bisim} and for $\mu$-calculus
model checking~\cite{ss:practical-modcheck-games}. In particular, our
algorithm arises as a natural generalisation of the one
in~\cite{ss:practical-modcheck-games} to the setting of fixpoint games
(see Definition~\ref{de:powerset-game}).

We fix some notation and conventions which will be useful for describing the algorithm.

\paragraph*{Notation}
For the rest of the section, $L$ denotes a complete lattice, with a
basis $B_L$, and $E$ is a system of $m$ fixpoint equations over
$L$ of the kind $\vec{x} =_{\vec{\eta}} \vec{f}(\vec{x})$, with
solution $\vec{s} \in L^m$.

A generic \emph{player}, that can be either $\exists$ or $\forall$, is
usually represented by the upper case letter $P$. The opponent of
player $P$ is denoted by $\oppo{P}$.
The set of all \emph{positions} of the game is denoted by $\confs = \Econfs \cup \Aconfs$, where $\Econfs = B_L \times \interval{m}$, ranged over by $(b,i)$ is the set of
positions controlled by $\exists$, and
$\Aconfs = (\Pow{B_L})^m$, ranged over by $\vec{X}$ is the set of
positions controlled by $\forall$.
A generic position is usually denoted by
the upper case letter $C$ and we write $\owner{C}$ for the player
controlling the position $C$.

Given a position $C \in \confs$, the possible moves for player
$\owner{C}$ are indicated by $\moves{C} \subseteq \confs$. In
particular, if $C \in \Econfs$ then $\moves{C} \subseteq \Aconfs$,
otherwise $\moves{C} \subseteq \Econfs$. A function
$\mathsf{i} : \confs \to (\interval{m} \cup \{0\})$ maps every
position to a \emph{priority}, which, for positions $(b,i)$ of player
$\exists$ is the index $i$, while it is $0$ for positions of
$\forall$. With this notation, the winning condition can be expressed
as follows:
\begin{itemize}
\item Every
  finite play is won by the player who moved last. 

\item Every infinite play, seen as a sequence of positions
  $(C_1,C_2,\ldots)$, is won by player $\exists$ (resp.\ $\forall$) if
  there exists a priority $h \in \interval{m}$ s.t.\ $\eta_h = \nu$
  (resp.\ $\mu$), the set $\{j\,\mid\,\priori{C_j} = h\}$ is infinite
  and the set $\{j\,\mid\,\priori{C_j} > h\}$ is finite.
\end{itemize}

Note that there cannot be
an infinite sequence of positions with priority $0$ since only
positions of player $\forall$ have priority $0$ and players alternate
during the game.

\subsubsection{The Algorithm}

Given an element of the basis $b \in B_L$ and some index
$i \in \interval{m}$, the algorithm checks whether $b$ is below the
solution of the $i$-th fixpoint equation of the system, i.e.,
$b \sqsubseteq s_i$.
According to Theorem~\ref{th:game-corr-comp}, this corresponds to establish which of the players has a winning strategy in the fixpoint game starting from the position $(b,i)$.
The procedure roughly consists in a depth-first exploration of the
tree of plays arising as unfolding of the
game graph starting from the initial position $(b,i)$.
The algorithm optimises the search by making assumptions on particular
subtrees, which are thus pruned. Assumptions can be later confirmed or
invalidated, and thus withdrawn. The algorithm is split into three
different functions (see Fig.~\ref{fi:alg}).

\begin{itemize}

\item Function \fnname{Explore} explores the tree of plays of the
  game, trying different moves from each node in order to determine
  the player who has a winning strategy from such node.

\item Function \fnname{Backtrack} allows to backtrack from a node
  after the algorithm has established who was the winner from it,
  transmitting the information backwards.
  
\item Sometimes the algorithm makes erroneous assumptions when pruning
  the search in some position, this leads it to incorrectly designate
  a player as the winner from that position. However, the algorithm is
  able to detect this fact and correct its decisions. The correction
  is performed by the function \fnname{Forget}.
\end{itemize}

The algorithm uses the following data structures:

\begin{itemize}
\item The \emph{counter} $\vec{k}$,
  i.e., an $m$-tuple of natural numbers, which associates each
  non-zero priority with the number of times the priority has been
  encountered in the play since an higher priority was last
  encountered (the current positions is not included).
  After any move, the counter is updated taking into account the
  priority of the current position. More precisely, the update of a
  counter $\vec{k}$ when moving from a position with priority $i$,
  denoted $\mathit{next}(\vec{k},i)$, is defined as follows:
  $\mathit{next}(\vec{k},i)_j = 0$ for all $j < i$,
  $\mathit{next}(\vec{k},i)_i = k_i + 1$, and
  $\mathit{next}(\vec{k},i)_j = k_j$ for all $j > i$. Note that, in
  particular, $\mathit{next}(\vec{k},0) = \vec{k}$, i.e., moves from a
  position with priority $0$, which are the moves of $\forall$, do not
  change $\vec{k}$.
  We also define two total orders $<_\exists$ and $<_\forall$ on
  counters, that intuitively measure how good the current advancement
  of the game is for the two players. We let
  $\vec{k} <_\exists \vec{k}'$ when the largest $i$ s.t.\
  $k_i \neq k'_i$ is the index of a greatest fixpoint equation and
  $k_i < k'_i$, or it is the index of a least fixpoint and
  $k_i > k'_i$. The other order $<_\forall$ is the reverse of
  $<_\exists$, that is $\vec{k} <_\forall \vec{k}'$ iff
  $\vec{k}' <_\exists \vec{k}$.
  For each player $P$, we write $\vec{k} \leq_P \vec{k}'$ for
  $\vec{k} <_P \vec{k}'$ or $\vec{k} = \vec{k}'$.
  Notice that the update function $\mathit{next}$ is monotone
  on the counter, that is, given a priority $i$, for every
  player $P$, if $\vec{k} \leq_P \vec{k}'$, then
  $\mathit{next}(\vec{k},i) \leq_P \mathit{next}(\vec{k}',i)$.

\item The \emph{playlist} $\rho$, i.e., a list of the positions
  encountered from the root to the current node (empty if the current
  node is the root), each with the corresponding counter $\vec{k}$ and
  the indication of the alternative moves which have not been explored
  (exploration is performed depth-first). Thus, $\rho$ is a list of
  triples $(C,\vec{k},\pi)$, where $C$ is a position, $\vec{k}$ is a
  counter and $\pi \subseteq \confs$ is the set of the unexplored
  moves from that position.
  
\item The \emph{assumptions} for players $\exists$ and $\forall$,
  i.e., a pair of sets
  $\Gamma = (\Gamma_\exists, \Gamma_\forall)$.
  A position $C$ is assumed to be winning for some player when it is
  encountered for the second time in the current playlist $\rho$. This
  reveals the presence of a loop in the game graph which can be
  unfolded into an infinite play. Position $C$ is assumed to be
  winning for the player who would win such an infinite play. In
  detail, if $\vec{k}$ is the current counter and $\vec{k}'$ is the
  counter of the previous occurrence of $C$, then the winner $P$ is
  the player such that $\vec{k}' <_P \vec{k}$.  In fact, this ensures
  that the highest priority in the loop is the index of a least
  fixpoint if $P = \forall$ and of a greatest fixpoint if $P=\exists$.
  The assumption is stored with the
  corresponding counter, i.e., $\Gamma_P$ contains pairs of the kind
  $(C, \vec{k})$.
  Since other possible paths branching from the loop are possibly
  unexplored, assumptions can still be falsified afterwards.

\item The \emph{decisions} for player $\exists$ and $\forall$, i.e., a
  pair of sets $\Delta = (\Delta_\exists,
  \Delta_\forall)$. Intuitively, a decision for a player $P$ is a
  position $C$ of the game such that we established that $P$ has a
  winning strategy from $C$. The decision is stored with the
  corresponding counter, i.e., $\Delta_P$ contains pairs of the kind
  $(C, \vec{k})$.  When a new decision is added, we also record its
  \emph{justification}, i.e., the assumptions and decisions we relied
  on for deriving the new decision, if any.
\end{itemize}

For checking whether $b \sqsubseteq s_i$
for $b \in B_L$ and $i \in \interval{m}$, we call the
function \fnname{Explore}($(b,i)$, $\vec{0}$, $[]$,
$(\emptyset,\emptyset)$, $(\emptyset,\emptyset)$), where $\vec{0}$ is
the everywhere-zero counter.
This returns the (only) player $P$ having a winning strategy from position
$(b,i)$, and, by Theorem~\ref{th:game-corr-comp}, $P = \exists$ if and only if $b \sqsubseteq s_i$.

\begin{figure*}[ht]
  \footnotesize
\begin{tcbraster}[raster columns=2,raster equal height, raster force size=false, raster column skip=0mm]
  \begin{tcolorbox}[width=.47\textwidth, left=-1mm, right=0mm,colframe=white]
  \begin{algorithmic}
  \Function{Explore}{$C$, $\vec{k}$, $\rho$, $\Gamma$, $\Delta$}
    \If{$\moves{C} = \emptyset$}
    \State $\Delta_{\oppo{\owner{C}}} \coloneqq \Delta_{\oppo{\owner{C}}} \cup \{(C,\vec{k})\}$;
    \State \fnname{Backtrack}($\oppo{\owner{C}}$, $C$, $\rho$, $\Gamma$, $\Delta$);
    \ElsIf{there is $(C,\vec{k}') \in \Delta_P$ s.t.\ $\vec{k}' \leq_P \vec{k}$}
      \State \fnname{Backtrack}($P$, $C$, $\rho$, $\Gamma$, $\Delta$);
    \ElsIf{there is $(C,\vec{k}',\pi) \in \rho$}
      \State let $P$ s.t.\ $\vec{k}' <_P \vec{k}$;
      \State $\Gamma_P \coloneqq \Gamma_P \cup \{(C, \vec{k}')\}$;
      \State \fnname{Backtrack}($P$, $C$, $\rho$, $\Gamma$, $\Delta$);
    \Else
      \State pick $C' \in \moves{C}$;
      \State $\vec{k}' \coloneqq \mathit{next}(\vec{k},\priori{C})$;
      \State $\pi \coloneqq (\moves{C} \smallsetminus \{C'\}) \times \{\vec{k}'\}$;
      \State \fnname{Explore}($C'$, $\vec{k}'$, $((C,\vec{k},\pi)::\rho)$, $\Gamma$, $\Delta$);
    \EndIf
  \EndFunction
\end{algorithmic}
\end{tcolorbox}
\begin{tcolorbox}[width=.56\textwidth, left=-1mm, right=0mm, colframe=white]
\begin{algorithmic}
  \Function{Backtrack}{$P$, $C$, $\rho$, $\Gamma$, $\Delta$}
	  \If{$\rho = []$}
	    \State $P$;
	  \ElsIf{$\rho = ((C',\vec{k}',\pi)::t)$}
	    \If{$\owner{C'} \neq P$ and $\pi \neq \emptyset$}
	      \State pick $(C'',\vec{k}'') \in \pi$;
	      \State $\pi' \coloneqq \pi \smallsetminus \{(C'',\vec{k}'')\}$;
	      \State \fnname{Explore}($C''$, $\vec{k}''$, $((C',\vec{k}',\pi')::t)$, $\Gamma$, $\Delta$);
	    \Else
	      \If{$\owner{C'} = P$}
	        \State $\Delta_P \coloneqq \Delta_P \cup \{(C',\vec{k}')\}$ justified by $C$;
	      \Else
	        \State $\Delta_P \coloneqq \Delta_P \cup \{(C',\vec{k}')\}$ justified by $\moves{C'}$;
	      \EndIf
	      \State $\Gamma_P \coloneqq \Gamma_P \smallsetminus \{(C',\vec{k}')\}$;
	      \If{there is $(C',\vec{k}') \in \Gamma_{\oppo{P}}$}
	        \State $\Delta_{\oppo{P}} \coloneqq$ \fnname{Forget}($\Delta_{\oppo{P}}$, $\Gamma_{\oppo{P}}$, $(C',\vec{k}')$);
	        \State $\Gamma_{\oppo{P}} \coloneqq \Gamma_{\oppo{P}} \smallsetminus \{(C',\vec{k}')\}$;
	      \EndIf
	      \State \fnname{Backtrack}($P$, $C'$, $t$, $\Gamma$, $\Delta$);
	    \EndIf
	  \EndIf
  \EndFunction
  \end{algorithmic}
\end{tcolorbox}
\end{tcbraster}

\caption{The general local algorithm.}
  \label{fi:alg}
\end{figure*}

Given the current position $C$, the corresponding counter $\vec{k}$,
the playlist $\rho$ describing the path that led to $C$, and the sets
of assumptions $\Gamma$ and decisions $\Delta$, function
\fnname{Explore}($C$, $\vec{k}$, $\rho$, $\Gamma$, $\Delta$) checks if
one of the following three conditions holds, each one corresponding to
a different \cmd{if} branch.
\begin{itemize}
\item If $\moves{C} = \emptyset$, then the controller 
  $\owner{C}$ of position $C$ cannot move and its opponent $\oppo{\owner{C}}$
  wins. Therefore, a new decision for the current position is added
  for the opponent, and we backtrack.
  A decision of this kind, with empty justification is called a \emph{truth}.
  
\item If there is already a decision for a player $P$ for the current
  position $C$, that is, $(C,\vec{k}') \in \Delta_P$ and
  $\vec{k}' \leq_P \vec{k}$, then we can reuse that information to
  assert that $P$ would win from the current position as well. The
  requirement $\vec{k}' \leq_P \vec{k}$ intuitively ensures that we
  arrived to the current position $C$ with a play that is at least as
  good for $P$ as the play which lead to the previous decision
  $(C,\vec{k}')$.
  
\item If the current position $C$ was already encountered in the play,
  i.e., $(C,\vec{k}',\pi) \in \rho$ for some $\vec{k}'$ and $\pi$,
  then $C$ becomes an assumption for the the player $P$ for which the
  counter got strictly better, that is, $\vec{k}' <_P \vec{k}$. Then we
  backtrack.

\item If none of the conditions above holds, the exploration continues
  from $C$. A move $C' \in \moves{C}$ is chosen to be explored. The
  playlist is thus extended by adding $(C, \vec{k}, \pi)$ where $\pi$
  records the remaining moves to be explored. The counter $\vec{k}$ is
  updated according to the priority of the now past position $C$.
\end{itemize}

Function \fnname{Backtrack}($P$, $C$, $\rho$, $\Gamma$, $\Delta$) is
used to backtrack from a position $C$, reached via the
playlist $\rho$, after assuming or deciding that player $P$ would win
from such position.
\begin{itemize}
\item
  If $\rho = []$ we are back at the root, the position from where the
  computation started, and the exploration is concluded. The algorithm
  decides that player $P$ is the winner from such a position.

\item Otherwise, the head $(C', \vec{k}, \pi)$ of the playlist $\rho$
  is popped and the status of position $C'$ is investigated.

  \begin{itemize}
  \item If $C'$
  is controlled by the opponent of $P$ ($\owner{C'} \neq P$) and
  there are still unexplored moves ($\pi \neq \emptyset$), we must
  explore such moves before deciding the winner from $C'$. Then, a new
  move is extracted from $\pi$ and explored.
  
\item If instead the controller of $C'$ is $P$ ($\owner{C'} = P$) then
  $P$ wins also from $C'$. Hence $C'$ is inserted in $\Delta_P$,
  justified by the move $C$ from where we backtracked.
  Similarly, if the controller of $C'$ is the opponent of $P$
  ($\owner{C'} \neq P$), we already explored all possible moves from
  $C'$ ($\pi=\emptyset$) and all turn out to be winning for $P$, again
  we decide that $P$ wins from $C'$, which is inserted in $\Delta_P$,
  justified by all possible moves from $C'$. Since we decided that $P$
  would win from $C'$ we can now continue to backtrack. However,
  before backtracking we must discard all assumptions for the opponent
  of $P$ in conflict with the newly taken decision, and this must be
  propagated to the decisions depending on such assumptions. This is
  done by the invocation \fnname{Forget}($\Delta_{\oppo{P}}$,
  $\Gamma_{\oppo{P}}$, $(C',\vec{k}')$).
\end{itemize}
\end{itemize}

In general the choice of moves to explore, performed by the action
``pick'' in the pseudocode, is random. However,
we observed in \S\ref{se:game-characterization}, that
for  player $\exists$
it can be shown that it is sufficient to explore the minimal moves.
Furthermore, it is usually convenient to give priority to moves which
are immediately reducible to valid decisions or assumptions for the
player who is moving.
A practical way to do this is to check if there is a decision for a
position $C'$, with a valid counter wrt.\ the current one, such that
either the current position $C = (b,i)$, $C' = (b',i)$ and
$b \sqsubseteq b'$, or $C = \vec{X}$, $C' = \vec{X}'$ and
$\vec{X}' \subseteq \vec{X}$. Then, the move to pick is the one
justifying such decision, which by those features is guaranteed to be
a move also from the current position $C$.

The function \fnname{Forget} is not given explicitly. The precise
definition of the property that function \fnname{Forget} must satisfy
in order to ensure the correctness of the algorithm is quite technical
(it can be found in the appendix provided as extra material).
Intuitively, when an assumption in $\Gamma_P$ fails and is withdrawn,
then it must remove from $\Delta_P$ at least all the decisions
depending on such assumption. It is possible that decisions
taken on the base of the deleted assumption remain valid because
they can be justified by other decisions or assumptions, possibly
introduced later.
Different sound realisations of \fnname{Forget} are then possible
(see~\cite{ss:practical-modcheck-games}) and, experimentally, it can
been seen that those removing only the least possible set of decisions
can be practically inefficient. A simple sound implementation, which,
at least in the setting of the $\mu$-calculus, resulted to be the most
efficient is based on a temporal criterion: when an assumption fails,
all decisions which have been taken after that assumption are
deleted. This can be implemented by associating timestamps with
decisions and assumptions, and avoiding the complex management of
justifications.

\begin{toappendix}
\begin{definition}[sound forget]
  \label{de:sound-forget}
  Whenever function \textup{\fnname{Forget}}($\Delta_P$, $\Gamma_P$,
  $(C,\vec{k})$) is invoked, returning $\Delta'_P$, for every decision
  $(C',\vec{k}') \in \Delta'_P$, for every position $C''$ justifying
  that decision, there exists $(C'',\vec{k}'') \in \Delta'_P$ such
  that $\vec{k}'' \leq_P \mathit{next}(\vec{k}',\priori{C'})$ or there
  exists $(C'',\vec{k}'') \in \Gamma_P \smallsetminus \{(C,\vec{k})\}$
  such that $\vec{k}'' <_P \mathit{next}(\vec{k}',\priori{C'})$.
\end{definition}
\end{toappendix}

\begin{example}[model-checking $\mu$-calculus]
  \label{ex:mc-mu}
  Consider the transition system $T = (\mathbb{S}, \to)$ in
  Fig.~\ref{fi:running-ts} and the $\mu$-calculus formula
  $\phi = \mu x_2.((\nu x_1.(p \land\Box x_1))\lor\Diamond x_2)$
  discussed in Example~\ref{ex:mu}. As already discussed, the formula
  $\phi$ interpreted over $T$ leads to the system $E$ in
  Fig.~\ref{fi:running-eq} over the lattice
  $\Pow{\mathbb{S}}$.

  Suppose that we want to verify whether the state
  $a \in \mathbb{S}$ satisfies the formula $\phi$. This requires to determine the winner of the fixpoint game from position $(a,2)$, which can be done by invoking \fnname{Explore}($(a,2)$, $\vec{0}$, $[]$,
  $(\emptyset, \emptyset)$, $(\emptyset, \emptyset)$).
  A computation performed by the algorithm is schematised in
  Fig.~\ref{fi:alg-execution}, where we only consider minimal
  moves. Since the choice of moves is non-deterministic, other search
  sequences are possible. In the diagram, positions of player
  $\exists$ are represented as diamonds, while those of $\forall$ are
  represented as boxes, the counters associated with the positions is
  on their lefthand side

  \begin{figure}[ht]
    \begin{center}
    \begin{tikzpicture}[node distance=9mm,>=stealth',y=6mm, x=8mm]
      \node (a2) at (4,16) [anode]{$(a,2)$};
      \node (k0) at (12,16) {$(0,0)$};
      \node (a-0) at (-0.5,14) [enode]{$(\{a\},\emptyset)$};
      \node (0-a) at (2.5,14) [enode]{$(\emptyset,\{a\})$};
      \node (0-b) at (5.5,14) [enode]{$(\emptyset,\{b\})$};
      \node (0-c) at (8.5,14) [enode]{$(\emptyset,\{c\})$};
      \node (k1) at (12,14) {$(0,1)$};
      \node (b2) at (4,12) [anode]{$(b,2)$};
      \node (k2) at (12,12) {$(0,1)$};
      \node (0-d) at (0,10) [enode]{$(\emptyset,\{d\})$};
      \node (b-0) at (4,10) [enode]{$(\{b\},\emptyset)$};
      \node (0-e) at (8,10) [enode]{$(\emptyset,\{e\})$};
      \node (k3) at (12,10) {$(0,2)$};
      \node (b1) at (4,8) [anode]{$(b,1)$};
      \node (k4) at (12,8) {$(0,2)$};
      \node (de-0) at (4,6) [enode]{$(\{d,e\},\emptyset)$};
      \node (k5) at (12,6) {$(1,2)$};
      \node (d1) at (1,4) [anode]{$(d,1)$};
      \node (e1) at (7,4) [anode]{$(e,1)$};
      \node (k6) at (12,4) {$(1,2)$};
      \node (d-0) at (1,2) [enode]{$(\{d\},\emptyset)$};
      \node (e-0) at (7,2) [enode]{$(\{e\},\emptyset)$};
      \node (k7) at (12,2) {$(2,2)$};
      \node (d1') at (1,0) [anode]{$(d,1)$};
      \node (e1') at (7,0) [anode]{$(e,1)$};
      \node (k8) at (12,0) {$(2,2)$};
      \draw  [->] (a2) to node {} (a-0);
      \draw  [->] (a2) to node {} (0-a);
      \draw  [->,very thick] (a2) to node {} (0-b);
      \draw  [->] (a2) to node {} (0-c);
      \draw  [->,very thick] (0-b) to node {} (b2);
      \draw  [->] (b2) to node {} (0-d);
      \draw  [->,very thick] (b2) to node {} (b-0);
      \draw  [->] (b2) to node {} (0-e);
      \draw  [->,very thick] (b-0) to node {} (b1);
      \draw  [->,very thick] (b1) to node {} (de-0);
      \draw  [->,very thick] (de-0) to node {} (d1);
      \draw  [->,very thick] (de-0) to node {} (e1);
      \draw  [->,very thick] (d1) to node {} (d-0);
      \draw  [->,very thick] (e1) to node {} (e-0);
      \draw  [->,very thick] (d-0) to node {} (d1');
      \draw  [->,very thick] (e-0) to node {} (e1');
    \end{tikzpicture}
    \end{center}
    \caption{An execution of the local algorithm.}
    \label{fi:alg-execution}
  \end{figure}

  Recall that the second equation is
  $x_2 =_{\mu} x_1 \cup \semdia_T x_2$.  Then, from the initial
  position $(a,2)$, with counter $(0,0)$, there are four available
  minimal moves, i.e., $(\{a\},\emptyset)$, $(\emptyset,\{a\})$,
  $(\emptyset,\{b\})$ and $(\emptyset,\{c\})$, represented by the four
  outgoing edges from position $(a,2)$ in the diagram, all four will
  have counter $(0,1) = \mathit{next}((0,0),2)$. Indeed, it is easy to
  see that
  $a \in \{a\} \cup \semdia_T \emptyset = \emptyset \cup \semdia_T
  \{a\} = \emptyset \cup \semdia_T \{b\} = \{a\} \subseteq \emptyset
  \cup \semdia_T \{c\} = \{a,c\}$. Suppose that the algorithm chooses
  to explore the move $(\emptyset,\{b\})$, as highlighted by the bold
  arrow. Even though not shown in the diagram, the other moves are
  stored in the set of unexplored moves $\pi$ associated with the
  position $(a,2)$ in the playlist $\rho$. The search proceeds in this
  way along the moves
  \begin{align*}
    (a,2) \stackrel{\exists}{\leadsto} (\emptyset,\{b\})
      \stackrel{\forall}{\leadsto} (b,2) \stackrel{\exists}{\leadsto}
      (\{b\},\emptyset) \stackrel{\forall}{\leadsto} (b,1) 
      \stackrel{\exists}{\leadsto} (\{d,e\},\emptyset)
      \stackrel{\forall}{\leadsto} (d,1) \stackrel{\exists}{\leadsto}
      (\{d\},\emptyset) \stackrel{\forall}{\leadsto}
  \end{align*}
  until position $(d,1)$ occurs again, with counter $(2,2)$. Since the counter associated with the first occurrence of $(d,1)$ was $(1,2)$ and $(1,2) <_\exists (2,2)$, then the pair position and counter $((d,1),(1,2))$ is added as an assumption for player $\exists$ and the algorithm starts backtracking. While backtracking it generates a decision for $\exists$, which is $((\{d\},\emptyset),(2,2))$ justified by the only possible move $(d,1)$ of player $\forall$. When it comes back to the first occurrence of $(d,1)$, since it is a position controlled by $\exists$, the procedure transforms the assumption $((d,1),(1,2))$ into a decision for $\exists$ justified by the move $(\{d\},\emptyset)$. Then, it backtracks to position $(\{d,e\},\emptyset)$, which is controlled by player $\forall$ and there is still an unexplored move $(e,1)$. Therefore, the algorithm starts exploring again from $(e,1)$, and does so similarly to the previous branch of $(d,1)$. After making decisions for those positions as well, the algorithm resumes backtracking from $(\{d,e\},\emptyset)$, since all possible moves have been explored, making decisions for player $\exists$ along the way back. This goes on up until the root is reached again. The last invocation \fnname{Backtrack}($\exists$, (a,2), [], $\Gamma$, $\Delta$) terminates since $\rho = []$, and returns player $\exists$. Indeed, $\exists$ wins starting from position $(a,2)$ since the state $a$ satisfies the formula $\phi$.
\end{example}

\subsubsection{Correctness}

We show that, when the lattice is finite, the algorithm
terminates. Moreover, when it terminates (which could happen also on
infinite lattices), it provides a correct answer.

Termination on finite lattices can be proved by observing that the set of positions (which are either elements of the basis or tuples of sets of elements of the basis) is finite. The length of playlists is bounded by the number of positions, since, whenever a position repeats in a playlist, it necessarily becomes an assumption and backtracking starts. Finally, one can observe that it is not possible to cycle indefinitely between two positions, so that termination immediately follows.

\begin{toappendix}

\begin{lemma}[assumptions and plays]
  \label{le:alg-assumpt}
  Given a fixpoint game, whenever functions
  \textup{\fnname{Explore}}($\cdot$, $\cdot$, $\rho$, $\Gamma$,
  $\Delta$) and \textup{\fnname{Backtrack}}($\cdot$, $\cdot$, $\rho$,
  $\Gamma$, $\Delta$) are invoked, for every player $P$, for all
  $(C,\vec{k}) \in \Gamma_P$ it holds $(C,\vec{k},\pi) \in \rho$ for
  some $\pi$. 
\end{lemma}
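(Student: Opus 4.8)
The statement is really an invariant maintained throughout the execution of the algorithm, so the plan is to prove it by induction on the computation. Since every invocation of \fnname{Explore} or \fnname{Backtrack} triggers \emph{at most one} further invocation (each branch of Fig.~\ref{fi:alg} makes exactly one recursive call, except the $\rho=[]$ case of \fnname{Backtrack}, which returns), the invocations form a linear sequence and we can induct on its length. For the base case, the computation starts with \fnname{Explore}($(b,i)$, $\vec{0}$, $[]$, $(\emptyset,\emptyset)$, $(\emptyset,\emptyset)$), where $\Gamma_\exists=\Gamma_\forall=\emptyset$, so the property holds vacuously. For the inductive step, I would assume the property at the current invocation and check each point in the pseudocode where a new invocation is spawned, verifying that the pair $(\rho,\Gamma)$ passed to it still satisfies the property.

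\textbf{The routine cases.} Most call sites are immediate. In the first two branches of \fnname{Explore} the next call is \fnname{Backtrack} with $\rho$ and $\Gamma$ unchanged, so the invariant is inherited verbatim. In the last branch of \fnname{Explore} the playlist only grows ($(C,\vec{k},\pi)::\rho$) and $\Gamma$ is unchanged, so any witness $(D,\vec{k}_D,\pi_D)\in\rho$ for an element of $\Gamma_P$ is still a witness in the larger playlist. The third branch of \fnname{Explore} is the one that \emph{adds} a pair $(C,\vec{k}')$ to $\Gamma_P$, but this branch is guarded precisely by the condition $(C,\vec{k}',\pi)\in\rho$, so the newly inserted pair has a witness by construction, while $\rho$ itself is unchanged. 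Finally, in the sub-branch of \fnname{Backtrack} with $\owner{C'}\neq P$ and $\pi\neq\emptyset$, the playlist changes only by replacing the head $(C',\vec{k}',\pi)$ with $(C',\vec{k}',\pi')$ for some $\pi'\subseteq\pi$, and $\Gamma$ is unchanged; since the property asks only for the \emph{existence} of a third component, any witness that was the old head is replaced by the new head, and all others survive in the tail.

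\textbf{The main obstacle.} The delicate case is the remaining sub-branch of \fnname{Backtrack} (where $\owner{C'}=P$ or $\pi=\emptyset$), since there $\rho=(C',\vec{k}',\pi)::t$ is truncated to $t$ before the recursive \fnname{Backtrack}($P$,$C'$,$t$,$\Gamma$,$\Delta$) call, so a $\Gamma$-element whose only recorded witness was the disappearing head $(C',\vec{k}',\pi)$ would violate the invariant. The key observation that makes this work is that, before that recursive call, the code removes the pair $(C',\vec{k}')$ from $\Gamma_P$ and, precisely in the case where $(C',\vec{k}')\in\Gamma_{\oppo{P}}$, from $\Gamma_{\oppo{P}}$ as well (the \fnname{Forget} invocation touches only $\Delta_{\oppo{P}}$, not $\Gamma$); adding pairs to $\Delta$ is irrelevant for $\Gamma$. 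Hence after the updates $(C',\vec{k}')\notin\Gamma_\exists\cup\Gamma_\forall$, so every pair $(D,\vec{k}_D)$ still in $\Gamma_\exists$ or $\Gamma_\forall$ differs from $(C',\vec{k}')$ as a pair; by the induction hypothesis it had a witness $(D,\vec{k}_D,\pi_D)\in\rho=(C',\vec{k}',\pi)::t$, and since $(D,\vec{k}_D)\neq(C',\vec{k}')$ this triple cannot be the head, so it lies in $t$, which is exactly what is needed. (One could alternatively invoke the fact that, thanks to branch three of \fnname{Explore}, no position is pushed onto a playlist twice, so positions in a playlist are pairwise distinct; but the argument above does not require it.) This completes the induction and hence the proof.
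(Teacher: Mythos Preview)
Your proof is correct and follows essentially the same approach as the paper's: an invariant argument by inspection of the code, checking that assumptions are only inserted when the corresponding triple is already in $\rho$, and that before the playlist head $(C',\vec{k}',\pi)$ is dropped in \fnname{Backtrack}, the pair $(C',\vec{k}')$ has been purged from both $\Gamma_P$ and $\Gamma_{\oppo{P}}$. Your write-up is in fact more thorough than the paper's (which simply says ``Easily proved by an inspection of the code'' and sketches the two relevant code points), and your observation that the argument goes through without needing pairwise distinctness of playlist entries is a nice touch.
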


\begin{proof}
  Easily proved by an inspection of the code. Initially, on the call
  \fnname{Explore}($C_0$, $\vec{0}$, $[]$, $(\emptyset,\emptyset)$,
  $(\emptyset,\emptyset)$), the property vacuously holds since both
  $\Gamma_\exists$ and $\Gamma_\forall$ are empty. Now, the only way
  that could make the property fail are by adding new assumptions or
  backtracking, hence shortening the playlist $\rho$. The only
  position in the code where new assumptions are added is in the
  function \fnname{Explore}. A new assumption $(C,\vec{k}')$ is added
  only if $(C,\vec{k}',\pi) \in \rho$, for some $\pi$, thus the
  property still holds. On the other hand, the only place where the
  backtracking really happens, that is, $\rho$ is effectively shorten,
  is at the end of the backtracking function, when
  \fnname{Backtrack}($P$, $C'$, $t$, $\Gamma$, $\Delta$) is
  invoked. More precisely, the head $(C',\vec{k}',\pi)$ is removed
  from the playlist $\rho$. However, before the aforementioned
  invocation, $(C',\vec{k}')$ was already removed from $\Gamma_P$ and
  from $\Gamma_{\oppo{P}}$, if it were in $\Gamma_{\oppo{P}}$. And so
  again the property still holds.
\end{proof}
\end{toappendix}

\begin{lemmarep}[termination]
  \label{le:alg-termination}
  Given a fixpoint game on a finite lattice, any call \textup{\fnname{Explore}}($C_0$, $\vec{0}$, $[]$, $(\emptyset,\emptyset)$, $(\emptyset,\emptyset)$) terminates, hence at some point \textup{\fnname{Backtrack}}($P$, $C_0$, $[]$, $(\emptyset, \emptyset)$, $\Delta$) is invoked, for some player $P$ and pairs of sets $\Gamma$ and $\Delta$.
\end{lemmarep}

\begin{proof}
  Consider the sequence $\sigma$ of invocations to functions \fnname{Explore} and \fnname{Backtrack} in the order they happen, originating from a call \fnname{Explore}($C_0$, $\vec{0}$, $[]$, $(\emptyset,\emptyset)$, $(\emptyset,\emptyset)$). Let $\tau$ be the subsequence of $\sigma$ obtained removing all calls to \fnname{Backtrack}. We show that such sequence is finite. First, since the lattice is finite, hence $\confs$ is finite, the set of playlists $\rho$ in the invocations in $\tau$ is also finite. Actually, this is not true in general for any set of playlists, but it holds for the set of lists we obtain during any computation. Indeed, this can be seen inductively, showing that every playlist $\rho$ has length bounded by $|\confs|$. At the beginning we have the empty list $[]$ which is clearly bounded by $|\confs|$. Then, by inspecting the code it can be seen that the only function which increases the size of $\rho$ is \fnname{Explore}, and it happens only if the current position $C$, with counter $\vec{k}$, is not already contained in $\rho$ with a counter $\vec{k}'$ s.t.\ $\vec{k}' <_P \vec{k}$ for some player $P$. But whenever a position $C$ already in $\rho$ is encountered again it must be with a counter strictly larger for one of the players. The only case where this could possibly fail is when the subsequence of $\rho$ between the two occurrences of $C$ contains only positions with priority $0$. But, as already mentioned, this cannot happen because players alternate during the game and only $\forall$ has positions with priority $0$. Thus, every time a position recurs, the playlist is not extended any more. So, the size of the playlist is necessarily bounded by the size of $\confs$. Furthermore, the set of playlists of length bounded by $|\confs|$ is finite because every $\pi$ in them is bounded as well, since $\pi \subseteq \confs$, and the same happens for the counters $\vec{k}$ since they are computed starting from $\vec{0}$ and increased at most by $1$ in some component only when the list is extended. Therefore, $\tau$ must contain only a finite number of different playlists $\rho$, possibly with repetitions. Now, in order to show that $\tau$ is finite, we define a partial order $\leq$ over the playlists in $\tau$ as follows, $\forall \rho,\rho',\rho'',C,\vec{k},\pi,\pi'$:
  \begin{itemize}
    \item $\rho'\rho \leq \rho$
    \item if $\pi \subsetneq \pi'$, then $\rho''((C,\vec{k},\pi)::\rho) \leq \rho'((C,\vec{k},\pi')::\rho)$.
  \end{itemize}
  It is easy to see that such order is reflexive, antisymmetric, and transitive. Since the set of playlists in $\tau$ is finite, so is the corresponding poset with the given partial order. By an inspection of the code it can be seen that for every two playlists $\rho,\rho'$ in consecutive invocations of \fnname{Explore} in $\tau$, we have that $\rho' < \rho$, since:
  \begin{itemize}
    \item function \fnname{Explore} extends the playlist $\rho$ until function \fnname{Backtrack} is invoked
    \item function \fnname{Backtrack} shortens the playlist $\rho$ until it is empty or function \fnname{Explore} is invoked, after shortening the set of unexplored moves $\pi$ in $\rho$.
  \end{itemize}
  So the playlists in $\tau$ form a strictly descending chain in a finite poset, thus $\tau$ must be finite. And this immediately proves that $\sigma$ is finite as well, because otherwise from a certain point on we would have infinitely many calls to \fnname{Backtrack} only, which would shorten the playlist infinitely many times. And so we can conclude that any computation originating from a call \fnname{Explore}($C_0$, $\vec{0}$, $[]$, $(\emptyset,\emptyset)$, $(\emptyset,\emptyset)$) must terminate. Finally, since the only instruction returning a value (hence terminating the execution) is in the function \fnname{Backtrack} and it is reached only when $\rho = []$, then \fnname{Backtrack}($P$, $C$, $[]$, $\Gamma$, $\Delta$) must have been invoked on some $P$, $C$, $\Gamma$, $\Delta$. Furthermore, $C = C_0$ because $\rho = []$ is the list of positions from the root $C_0$ to the current node $C$.

  We immediately conclude that $\Gamma=(\emptyset,\emptyset)$ by exploting Lemma~\ref{le:alg-assumpt}.
\end{proof}

The proof of correctness is long and technical. The underlying idea is
to prove that, at any invocation of \fnname{Explore}($\cdot$, $\cdot$,
$\rho$, $\Gamma$, $\Delta$) and \fnname{Backtrack}($\cdot$, $\cdot$,
$\rho$, $\Gamma$, $\Delta$), the justifications for the decisions
$\Delta_P$, can be interpreted as a winning strategy for player $P$
from the positions $C \in \Delta_P$, in a modified game where $P$
immediately wins on the assumptions $\Gamma_P$. Since at termination,
the set of assumptions is empty, the modified game coincides with the
original one and thus we conclude.

\begin{toappendix}
\begin{lemma}[backtracking position]
  \label{le:alg-backtrack-pos}
  Given a fixpoint game, whenever function \textup{\fnname{Backtrack}}($P$, $C$, $\rho$, $\Gamma$, $\Delta$) is invoked, it holds $(C,\vec{k}) \in \Delta_P \cup \Gamma_P$ for some $\vec{k}$.
\end{lemma}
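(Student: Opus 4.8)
The plan is to prove the invariant by a direct and exhaustive inspection of the pseudocode of Fig.~\ref{fi:alg}: I would enumerate every program location from which \fnname{Backtrack} is invoked and verify that, with the arguments $P$, $C$, $\Gamma$, $\Delta$ actually passed there, one has $(C,\vec{k})\in\Delta_P\cup\Gamma_P$ for some counter $\vec{k}$. Since the initial call is always to \fnname{Explore} and never to \fnname{Backtrack}, it suffices to examine the call sites inside \fnname{Explore} together with the single recursive call inside \fnname{Backtrack} itself.

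First I would dispatch the three calls to \fnname{Backtrack} occurring in \fnname{Explore}($C$, $\vec{k}$, $\rho$, $\Gamma$, $\Delta$). In the branch $\moves{C}=\emptyset$, the code first sets $\Delta_{\oppo{\owner{C}}}\coloneqq\Delta_{\oppo{\owner{C}}}\cup\{(C,\vec{k})\}$ and then calls \fnname{Backtrack}($\oppo{\owner{C}}$, $C$, $\rho$, $\Gamma$, $\Delta$); so here $P=\oppo{\owner{C}}$ and $(C,\vec{k})\in\Delta_P$. In the branch guarded by ``there is $(C,\vec{k}')\in\Delta_P$ with $\vec{k}'\leq_P\vec{k}$'', the call \fnname{Backtrack}($P$, $C$, $\rho$, $\Gamma$, $\Delta$) is performed with precisely this $\Delta$, so $(C,\vec{k}')\in\Delta_P$ is the witness. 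In the branch guarded by ``there is $(C,\vec{k}',\pi)\in\rho$'', the code selects $P$ with $\vec{k}'<_P\vec{k}$, sets $\Gamma_P\coloneqq\Gamma_P\cup\{(C,\vec{k}')\}$, and only afterwards calls \fnname{Backtrack}($P$, $C$, $\rho$, $\Gamma$, $\Delta$); hence $(C,\vec{k}')\in\Gamma_P$ at the moment of the call.

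It remains to treat the unique recursive call inside \fnname{Backtrack}($P$, $C$, $((C',\vec{k}',\pi)::t)$, $\Gamma$, $\Delta$), namely the tail call \fnname{Backtrack}($P$, $C'$, $t$, $\Gamma$, $\Delta$), which is reached only in the \cmd{else} branch, that is, when $\owner{C'}=P$, or when $\owner{C'}\neq P$ and $\pi=\emptyset$. In either sub-case the pair $(C',\vec{k}')$ is added to $\Delta_P$ immediately before the recursive call (justified by $C$ when $\owner{C'}=P$, and by $\moves{C'}$ otherwise). The statements executed afterwards and before the call only remove $(C',\vec{k}')$ from $\Gamma_P$ and possibly invoke \fnname{Forget} on $\Delta_{\oppo{P}}$ and shrink $\Gamma_{\oppo{P}}$; none of these alters $\Delta_P$, so $(C',\vec{k}')\in\Delta_P$ still holds when \fnname{Backtrack}($P$, $C'$, $t$, $\Gamma$, $\Delta$) is invoked.

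Since these exhaust all call sites of \fnname{Backtrack}, the invariant follows. The only point requiring a little care, and hence the main (though mild) obstacle, is the last one: one must check that the bookkeeping carried out after a decision for $P$ is recorded --- namely clearing the matching assumption of $P$ and, via \fnname{Forget}, pruning the opponent's decisions together with the conflicting assumption of $\oppo{P}$ --- never deletes the freshly inserted $(C',\vec{k}')$ from $\Delta_P$ itself. Inspecting the code shows that \fnname{Forget} operates on $\Delta_{\oppo{P}}$ only and the assumption removals touch $\Gamma_P$ and $\Gamma_{\oppo{P}}$, so the property is preserved right up to the recursive call.
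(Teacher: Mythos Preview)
Your proof is correct and follows exactly the approach the paper indicates: the paper's own proof is the single sentence ``Immediate by inspecting the invocations of \fnname{Backtrack} in the code,'' and what you have written is precisely that inspection carried out explicitly for each call site.
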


\begin{proof}
  Immediate by inspecting the invocations of \fnname{Backtrack} in the code.
\end{proof}

\begin{lemma}[uncontrolled decisions]
  \label{le:unown-dec}
  Given a fixpoint game, whenever functions \textup{\fnname{Explore}}($\cdot$, $\cdot$, $\cdot$, $\Gamma$, $\Delta$) and \textup{\fnname{Backtrack}}($\cdot$, $\cdot$, $\cdot$, $\Gamma$, $\Delta$) are invoked, for every player $P$, for all $(C,\vec{k}) \in \Delta_P$, if $\owner{C} \neq P$, then for all $C' \in \moves{C}$ it holds $(C',\vec{k}') \in \Delta_P \cup \Gamma_P$ for some $\vec{k}'$.
\end{lemma}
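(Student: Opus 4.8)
The assertion is an invariant of the algorithm, so the plan is to prove it by induction on the number of invocations of \fnname{Explore} and \fnname{Backtrack} along a run started by \fnname{Explore}$(C_0,\vec{0},[],(\emptyset,\emptyset),(\emptyset,\emptyset))$. In the base case the property holds vacuously, since initially $\Delta_\exists=\Delta_\forall=\emptyset$. For the inductive step it suffices to inspect the few program points where $\Delta_P$ or $\Gamma_P$ is modified between one invocation and the next and to check that the invariant survives each such modification. In parallel I would carry along an auxiliary \emph{playlist invariant} $J$: whenever \fnname{Backtrack}$(P,C,((C',\vec{k}',\pi)::t),\Gamma,\Delta)$ is invoked with $\owner{C'}\neq P$, then for every $C''\in\moves{C'}$ that does not occur in $\pi$ there is $\vec{k}''$ with $(C'',\vec{k}'')\in\Delta_P\cup\Gamma_P$. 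Lemmas~\ref{le:alg-assumpt} and~\ref{le:alg-backtrack-pos} are used freely.

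The routine cases are quick. Any addition to $\Gamma_P$ (third branch of \fnname{Explore}) preserves both invariants, since their conclusions only demand membership in the union $\Delta_P\cup\Gamma_P$, which only grows. Inserting $(C,\vec{k})$ into $\Delta_{\oppo{\owner{C}}}$ in the first branch of \fnname{Explore} is harmless because there $\moves{C}=\emptyset$ makes the conclusion vacuous; inserting $(C',\vec{k}')$ into $\Delta_P$ in \fnname{Backtrack} when $\owner{C'}=P$ makes the hypothesis $\owner{C'}\neq P$ fail for the new pair, while the old pairs are unaffected. The removal $\Gamma_P\leftarrow\Gamma_P\setminus\{(C',\vec{k}')\}$ performed immediately after $(C',\vec{k}')$ has been put into $\Delta_P$ merely relocates that pair, so $\Delta_P\cup\Gamma_P$ still covers position $C'$.

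Two cases are substantive. The first is the decision step of \fnname{Backtrack} that adds $(C',\vec{k}')$ to $\Delta_P$ justified by $\moves{C'}$; this is reached only when $\owner{C'}\neq P$ and the pending-moves component $\pi$ of the playlist head is empty, so all moves of $C'$ have already been explored. An analysis of the control flow of \fnname{Backtrack} shows that each such move must have been returned from in a \fnname{Backtrack} call with player $P$ itself — a return with the controller $\oppo{P}$ of $C'$ would have produced a decision for $\oppo{P}$ at $C'$ and stopped the exploration of $C'$. Hence, by Lemma~\ref{le:alg-backtrack-pos}, each such move lay in $\Delta_P\cup\Gamma_P$ at the moment of that return, and the playlist invariant $J$ — which tracks exactly the already-closed moves of a pending playlist entry — keeps it there; so the new decision $(C',\vec{k}')$ has all of $\moves{C'}$ covered, and it is created together with the justification $\moves{C'}$, matching the form the invariant asks for. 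The second substantive case is the block of \fnname{Backtrack} that replaces $\Delta_{\oppo{P}}$ by \fnname{Forget}$(\Delta_{\oppo{P}},\Gamma_{\oppo{P}},(C',\vec{k}'))$ and then removes $(C',\vec{k}')$ from $\Gamma_{\oppo{P}}$. Here the sound-forget property of Definition~\ref{de:sound-forget} is precisely what is needed: every decision surviving in the returned set still has each of its justifying positions in $\Delta'_{\oppo{P}}\cup(\Gamma_{\oppo{P}}\setminus\{(C',\vec{k}')\})$, which is exactly the state of $\Delta_{\oppo{P}}\cup\Gamma_{\oppo{P}}$ after both operations; since for a surviving decision $(D,\vec{m})$ with $\owner{D}\neq\oppo{P}$ the recorded justification is $\moves{D}$, this re-establishes the invariant for $\oppo{P}$, while $\Delta_P$ and $\Gamma_P$ are untouched.

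The main obstacle lies in the playlist invariant $J$: one must argue that a decision supporting a still-pending playlist entry is not lost before that entry is resolved. A \fnname{Forget} call could a priori delete such a decision, but it is triggered only by an assumption that, by Lemma~\ref{le:alg-assumpt}, sits on the current playlist, and the counter side-conditions in Definition~\ref{de:sound-forget} are exactly what force the ensuing backtracking to carry a full decision back through that entry rather than leaving it unsupported. Making this precise is the technical heart of the correctness argument; once it is available, the checks sketched above reduce to mechanical inspection of the code.
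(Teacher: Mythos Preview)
Your approach coincides with the paper's in spirit, but you carry out at length what the paper compresses into a few sentences of code inspection. The paper simply notes the two places where an uncontrolled decision is created --- the $\moves{C}=\emptyset$ branch of \fnname{Explore} (vacuous) and the $\owner{C'}\neq P$, $\pi=\emptyset$ branch of \fnname{Backtrack}, where all of $\moves{C'}$ have been explored and hence ``became decisions or assumptions for $P$'' --- and then observes that, since such a decision is justified by $\moves{C'}$, sound \fnname{Forget} will delete the decision itself whenever any justifying move is later deleted. No induction on invocations and no auxiliary invariant like your $J$ are made explicit.

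Your playlist invariant $J$ is precisely what the paper's phrase ``already explored all possible moves \ldots\ and they all became decisions or assumptions'' leaves implicit, and you correctly isolate the one delicate point: while later siblings are still being explored, a \fnname{Forget} call could in principle drop an earlier sibling's decision \emph{before} the parent's uncontrolled decision is recorded, so sound-forget (which constrains only surviving decisions) does not yet protect it. The paper glosses over this; your own sketch --- invoking the counter side-conditions of Definition~\ref{de:sound-forget} to argue that backtracking ``carries a full decision back through that entry'' --- is suggestive but not a proof either, since those side-conditions say nothing about what an over-eager \fnname{Forget} may remove. So both arguments sit at the same informal level on this point; the difference is that the paper passes over it in a line while you expand it and flag the difficulty honestly.
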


\begin{proof}
  By inspecting the code it is easy to see that every time we add a new decision $(C,\vec{k})$ for a player $P$ that is not the owner of $C$, either:
  \begin{itemize}
    \item $\moves{C} = \emptyset$, thus the property vacuously holds, or
    \item the procedure already explored all possible moves $\moves{C}$ and they all became decisions or assumptions for $P$, since we are in the case where $\owner{C} \neq P$ and $\pi = \emptyset$.
  \end{itemize}
  Furthermore, such a decision $(C,\vec{k})$ is justified by $\moves{C}$. Therefore, if one of those moves were to be deleted from the assumptions or decisions of $P$ at some point, the function \fnname{Forget} would delete $(C,\vec{k})$ as well.
\end{proof}

For the next results we make use of fixpoint games suitably modified for a set of assumptions for a player. For a set $S$ of decisions or assumptions we denote by $\mathsf{C}(S)$ its first projection, that is, the set of positions appearing as first component in the elements of $S$.

\begin{definition}[game with assumptions]
  Given a fixpoint game $G$ and a player $P$, the corresponding \emph{game with assumptions $\Gamma_P$} is a parity game $G(\Gamma_P)$ obtained from $G$ where for all $C \in \confs$, if $C \in \mathsf{C}(\Gamma_P)$, then $\owner{C} = \oppo{P}$ and $\moves{C} = \emptyset$, otherwise they are the same as in $G$.
\end{definition}

Notice that when the set of assumptions is empty $\Gamma_P = \emptyset$, the modified game is the same of the original one.

Then, we define a kind of strategies based on decisions and assumptions for a player, which fit the modified games above. Such strategies are history-free partial strategies. Indeed they only prescribe moves from decisions.

\begin{definition}[strategy with assumptions]
  \label{de:strat-w-assumpt}
  Let $G$ be a fixpoint game. Given a player $P$, a \emph{strategy with assumptions $\Gamma_P$ from decisions $\Delta_P$} for $P$ is a function $s_P : \mathsf{C}(\Delta_P \cup \Gamma_P) \to \Pow{\mathsf{C}(\Delta_P \cup \Gamma_P)}$ where for all $C \in \mathsf{C}(\Gamma_P)$, $s_P(C) = \emptyset$, and for all $C \in \mathsf{C}(\Delta_P) \smallsetminus \mathsf{C}(\Gamma_P)$, $s_P(C)$ is the set of positions, possibly empty, justifying the decision $(C,\min_{\leq_P} \{\vec{k}\ \mid\ (C,\vec{k}) \in \Delta_P\})$. Given a position $C \in \mathsf{C}(\Delta_P)$, we denote by $d_P(C) = \min_{\leq_P} \{\vec{k}\ \mid\ (C,\vec{k}) \in \Delta_P\}$ the counter that was associated with $C$.

  We say that the strategy $s_P$ is \emph{winning} when it is winning in the modified game $G(\Gamma_P)$, that is, every play in $G(\Gamma_P)$ following $s_P$ starting from a position in $\mathsf{C}(\Delta_P)$ is won by player $P$.
\end{definition}

The definition above is well given since by Lemmata~\ref{le:alg-backtrack-pos}~and~\ref{le:unown-dec} we know that when we add a new decision justified by some other, those are already included in the decisions or assumptions for the same player. Moreover, notice that the minimum of $\{\vec{k}\ \mid\ (C,\vec{k}) \in \Delta_P\}$ is guaranteed to be in the set itself because $\leq_P$ is a total order and the set is never empty since $C \in \mathsf{C}(\Delta_P)$.

In the modified game $G(\Gamma_P)$, given the strategy $s_P$ with assumptions $\Gamma_P$ from decisions $\Delta_P$, for each position $C \in \mathsf{C}(\Delta_P)$ we can build a tree including all the plays starting from $C$ where player $P$ follows the strategy $s_P$.

\begin{definition}[tree of plays]
  Let $G$ be a fixpoint game. Given a player $P$ and the strategy $s_P$ with assumptions $\Gamma_P$ from decisions $\Delta_P$, for each position $C \in \mathsf{C}(\Delta_P)$, the \emph{tree of the plays following $s_P$ starting from $C$} is the tree $\tau^{C}_{s_P}$ rooted in $C$, where every node $C'$ in it has successors $s_P(C')$.
\end{definition}

Such trees can contain both finite and infinite paths. Finite complete
paths terminate in assumptions or truths, infinite ones contain only decisions. By construction and definition of strategy with assumptions every node is either a decision or an assumption for $P$. More precisely, every inner node is a position in $\mathsf{C}(\Delta_P)$, and every leaf corresponds to either a truth in $\Delta_P$ or an assumption in $\Gamma_P$. It is easy to see that a tree $\tau^{C}_{s_P}$ includes all the possible plays from $C$ following $s_P$ since the successors of inner nodes owned by the opponent are all the possible moves from those positions (decisions controlled by the opponent are justified by all the possible opponent's moves, Lemma~\ref{le:unown-dec}).

The trees defined above are all we need to show that a strategy with assumptions is winning. Indeed, it is enough to show that every complete path in each of those trees corresponds to a play won by the player. To this end, first we observe some key properties of the paths in the trees.

\begin{lemmarep}[priorities in strategy paths]
  \label{le:high-prior-path}
  Given a fixpoint game, whenever functions \textup{\fnname{Explore}}($\cdot$, $\cdot$, $\cdot$, $\Gamma$, $\Delta$) and \textup{\fnname{Backtrack}}($\cdot$, $\cdot$, $\cdot$, $\Gamma$, $\Delta$) are invoked, for every player $P$, given the strategy $s_P$ with assumptions $\Gamma_P$ from decisions $\Delta_P$, for all $\hat{C} \in \mathsf{C}(\Delta_P)$, the tree of plays $\tau^{\hat{C}}_{s_P}$ satisfies the following properties
  \begin{enumerate}
    \item for every pair of inner nodes $C,C'$ in $\tau^{\hat{C}}_{s_P}$ s.t.\ $C'$ is a successor of $C$, it holds $d_P(C') \leq_P \mathit{next}(d_P(C),\priori{C})$
    \item for every non-empty inner path $C_1,\ldots,C_n$ in $\tau^{\hat{C}}_{s_P}$, if $d_P(C_1) <_P \mathit{next}(d_P(C_n),\priori{C_n})$, then $P = \exists$ iff $\eta_h = \nu$, where $h$ is the highest priority occurring along the path.
  \end{enumerate}
\end{lemmarep}

\begin{proof}
  We prove the two properties separately.
  \begin{enumerate}
    \item Observe that we must have $C' \in s_P(C)$ by definition of $\tau^{\hat{C}}_{s_P}$. This means that there exists a decision $(C,d_P(C)) \in \Delta_P$ justified by the position $C'$. Then $(C,d_P(C))$ must have been added by a call to \fnname{Backtrack}. By inspecting the code it is easy to see that we were backtracking either after adding a new decision $(C',\mathit{next}(d_P(C),\priori{C}))$ or because there was already a decision $(C',\vec{k}')$ s.t.\ $\vec{k}' \leq_P \mathit{next}(d_P(C),\priori{C})$. Since $d_P(C') = \min_{\leq_P} \{\vec{k}\ \mid\ (C',\vec{k}) \in \Delta_P\}$, in both cases we can immediately conclude that $d_P(C') \leq_P \mathit{next}(d_P(C),\priori{C})$.

    \item We assume that $d_P(C_1) <_P \mathit{next}(d_P(C_{n}),\mathsf{i}(C_{n}))$ and $P = \exists$, and we prove that $\eta_h = \nu$, where $h$ is the highest priority occurring along the path. A dual reasoning holds for $P = \forall$. Let $\mathit{next}^{j}$ be a function that computes the counter after a subsequence of positions $C_1,\ldots,C_j$ in the path $C_1,\ldots,C_n$, for $j \in \interval{n}$. The function is inductively defined by $\mathit{next}^{j}(\vec{k}) = \mathit{next}(\mathit{next}^{j-1}(\vec{k}),\priori{C_j})$ for all $j \in \interval{n}$, and $\mathit{next}^{0}(\vec{k}) = \vec{k}$. The inductive computation just repeatedly applies the function $\mathit{next}$ for each position encountered along the sequence starting from a given counter $\vec{k}$. We observe that the function satisfies the property $d_\exists(C_j) \leq_\exists \mathit{next}^{j-1}(d_\exists(C_1))$ for all $j \in \interval{n}$. We show this by induction on $j$. Clearly it holds for $j = 1$, since by definition $\mathit{next}^{0}(d_\exists(C_1)) = d_\exists(C_1)$. Then, assuming it holds for $j$, we prove it for $j+1$. Since we know that $\mathit{next}$ is monotone wrt.\ the input counter, by inductive hypothesis we obtain that $\mathit{next}(d_\exists(C_j),\mathsf{i}(C_{j})) \leq_P \mathit{next}(\mathit{next}^{j-1}(d_\exists(C_1)),\mathsf{i}(C_{j})) = \mathit{next}^{j}(d_\exists(C_1))$, where the last equality holds by definition of $\mathit{next}^{j}$. Furthermore, we know that $d_\exists(C_{j+1}) \leq_\exists \mathit{next}(d_\exists(C_{j}),\mathsf{i}(C_{j}))$ by (a) above, since $C_{j+1}$ is a successor of $C_j$. And so we can immediately deduce that indeed $d_\exists(C_{j+1}) \leq_\exists \mathit{next}^{j}(d_\exists(C_1))$. From this and the initial assumptions we have that $d_\exists(C_1) <_\exists \mathit{next}(d_\exists(C_{n}),\mathsf{i}(C_{n})) \leq_\exists \mathit{next}^{n}(d_\exists(C_1))$, where the last inequality holds by definition of $\mathit{next}^{n}$ and monotonicity of $\mathit{next}$. Observe that since $\mathit{next}^{n}$ just recursively applies the function $\mathit{next}$ on the positions $C_1,\ldots,C_n$, the final result and the initial counter $d_\exists(C_1)$ can only differ on priorities among those of the positions $C_1,\ldots,C_n$ and lower ones (which could have been zeroed). Therefore, the highest priority on which $d_\exists(C_1)$ and $\mathit{next}^{n}(d_\exists(C_1))$ do not coincide must be the highest priority $h$ appearing along the path. Furthermore, we must have $d_\exists(C_1)_h < \mathit{next}^{n}(d_\exists(C_1))_h$, because values can only increase or become zero, when a higher priority is encountered (and its value increased), but this would contradict the fact that $h$ is the highest. Now we can easily conclude since by hypothesis $d_\exists(C_1) <_\exists \mathit{next}^{n}(d_\exists(C_1))$, and so by definition of the order $<_\exists$ we must have that $\eta_h = \nu$.
  \end{enumerate}
\end{proof}

We observe that winning strategies with assumptions are preserved by a sound function \fnname{Forget} after removing an assumption and the related decisions.

\begin{lemmarep}[strategies and forget]
  \label{le:forget-strat}
  Given a fixpoint game, whenever \textup{\fnname{Forget}}($\Delta_P$, $\Gamma_P$, $(C,\vec{k})$) is invoked, returning $\Delta'_P$, if the strategy with assumptions $\Gamma_P$ from decisions $\Delta_P$ is winning in the modified game with assumptions $\Gamma_P$, then the strategy with assumptions $\Gamma_P \smallsetminus \{(C,\vec{k})\}$ from decisions $\Delta'_P$ is winning in the modified game with assumptions $\Gamma_P \smallsetminus \{(C,\vec{k})\}$.
\end{lemmarep}

\begin{proof}
  It follows immediately from Definitions~\ref{de:sound-forget}~and~\ref{de:strat-w-assumpt}.
\end{proof}

\begin{lemma}[winning strategy from decisions]
  \label{le:alg-inv}
  Given a fixpoint game, whenever functions \textup{\fnname{Explore}}($\cdot$, $\cdot$, $\cdot$, $\Gamma$, $\Delta$) and \textup{\fnname{Backtrack}}($\cdot$, $\cdot$, $\cdot$, $\Gamma$, $\Delta$) are invoked, for every player $P$, the strategy with assumptions $\Gamma_P$ from decisions $\Delta_P$ is winning in the modified game with assumptions $\Gamma_P$.
\end{lemma}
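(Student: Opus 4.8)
The plan is to establish the statement as an \emph{invariant} of the algorithm: I would prove, by induction on the sequence of successive invocations of \fnname{Explore} and \fnname{Backtrack} generated by a top-level call \fnname{Explore}($C_0$, $\vec{0}$, $[]$, $(\emptyset,\emptyset)$, $(\emptyset,\emptyset)$), that at each such invocation and for each player $P \in \{\exists,\forall\}$ the strategy with assumptions $\Gamma_P$ from decisions $\Delta_P$ (Definition~\ref{de:strat-w-assumpt}) is winning in the modified game $G(\Gamma_P)$. The base case is immediate, since initially $\Gamma_P = \Delta_P = \emptyset$ and there is nothing to win. For the inductive step I would do a case analysis on which branch of \fnname{Explore} or \fnname{Backtrack} fires and on how the pair $(\Gamma,\Delta)$ changes; the key observation is that a modification of one player's sets typically leaves the other player's modified game and strategy untouched, so in most cases the invariant need only be rechecked for the player whose sets changed, the only exception being the \fnname{Forget} step.

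The cases go as follows. When no decision or assumption is added (the two ``reuse'' branches of \fnname{Explore} and the plain backtracking step), the invariant is preserved verbatim. When a \emph{truth} $(C,\vec{k})$ is added to $\Delta_{\oppo{\owner{C}}}$ because $\moves{C} = \emptyset$, the updated strategy sends $C$ to $\emptyset$; as $C$ has no successors, $\owner{C}$ cannot move and $\oppo{\owner{C}}$ wins the finite play, so together with the induction hypothesis the invariant holds. When an \emph{assumption} $(C,\vec{k}')$ with $\vec{k}' <_P \vec{k}$ is added to $\Gamma_P$, passing from $G(\Gamma_P)$ to $G(\Gamma_P \cup \{(C,\vec{k}')\})$ only turns the extra position $C$ into an immediate win for $P$; hence every play previously winning for $P$ is still winning, $\Delta_P$ is unchanged, and $G(\Gamma_{\oppo{P}})$ is unaffected, so the invariant survives. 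The genuinely new decisions are created inside \fnname{Backtrack}: when $\owner{C'} = P$, the decision $(C',\vec{k}')$ is justified by the single move $C$ we backtracked from, and Lemma~\ref{le:alg-backtrack-pos} gives $(C,\cdot) \in \Delta_P \cup \Gamma_P$, so by induction the continuation from $C$ is won by $P$; when $\owner{C'} = \oppo{P}$ and $\pi = \emptyset$, the decision is justified by all of $\moves{C'}$, and Lemma~\ref{le:unown-dec} ensures each such move already lies in $\Delta_P \cup \Gamma_P$, again winning for $P$ by induction. In both sub-cases every finite play from $C'$ following the updated strategy ends in a truth or an assumption of $P$, hence is won by $P$; for infinite plays one has to verify the parity condition, which is precisely where Lemma~\ref{le:high-prior-path} enters.

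Finally, the step inside \fnname{Backtrack} that discards $(C',\vec{k}')$ from $\Gamma_{\oppo{P}}$ and calls \fnname{Forget}($\Delta_{\oppo{P}}$, $\Gamma_{\oppo{P}}$, $(C',\vec{k}')$) is handled by Lemma~\ref{le:forget-strat}: soundness of \fnname{Forget} (Definition~\ref{de:sound-forget}) is exactly what guarantees that the strategy with assumptions $\Gamma_{\oppo{P}} \smallsetminus \{(C',\vec{k}')\}$ from the pruned decisions remains winning in the smaller modified game. Assembling the cases closes the induction. I expect the main obstacle to be the infinite-play analysis in the two backtracking cases: one must show that any infinite play following the synthesised strategy $s_P$ inside a tree of plays $\tau^{\hat{C}}_{s_P}$ satisfies the parity winning condition for $P$. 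The argument proceeds through the counters recorded alongside decisions — along such a play the highest priority $h$ occurring infinitely often forces the corresponding counter component to grow without bound (lower components being reset, higher ones seen only finitely often), so between suitably chosen positions the counter strictly improves for $P$, and part~(2) of Lemma~\ref{le:high-prior-path} then pins $\eta_h$ down to the type that makes $P$ win. Making this comparison of counters precise, and keeping it consistent with the fact that \fnname{Forget} may have rewritten parts of $\Delta_P$ along the way, is the delicate point of the proof.
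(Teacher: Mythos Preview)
Your overall plan matches the paper's: induction on the sequence of calls, the same case split, and the same auxiliary lemmas. The base case and all the ``easy'' branches (truths, reuse of existing decisions, addition of assumptions, and the \fnname{Forget} step for $\oppo{P}$ via Lemma~\ref{le:forget-strat}) are handled essentially as in the paper.

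The gap is in your infinite-play argument, where the intuition is off. You write that ``the highest priority $h$ occurring infinitely often forces the corresponding counter component to grow without bound'', but the counters $d_P(\cdot)$ stored in the decisions do \emph{not} grow along a play: Lemma~\ref{le:high-prior-path}(1) only gives $d_P(C_{j+1}) \leq_P \mathit{next}(d_P(C_j),\priori{C_j})$, an upper bound, not a lower one, so nothing prevents the $d_P$-values from staying constant. The paper's argument is instead local to the \emph{newly added} decision $(C',\vec{k}')$. One first shows $d_P(C') = \vec{k}'$, i.e.\ no earlier decision for $C'$ can have a $\leq_P$-smaller counter (by inspecting when the second and third branches of \fnname{Explore} would have fired during the exploration). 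Then, for any occurrence of $C'$ in a play following $s_P$ with predecessor $C''$, one argues case-by-case --- according to whether $C' \in s'_P(C'')$ arose from an assumption $(C',\vec{k}')\in\Gamma_P$ or from an old decision with strictly larger counter $d'_P(C')$ --- that $d_P(C') <_P \mathit{next}(d_P(C''),\priori{C''})$. This is precisely the hypothesis of Lemma~\ref{le:high-prior-path}(2) applied to each segment between two consecutive occurrences of $C'$, yielding that $\eta_h$ is favourable to $P$ on every such segment and hence on the whole infinite play. The cases where $C'$ occurs only finitely often (or not at all) are handled separately, by reducing to the inductive hypothesis: beyond the last occurrence of $C'$ the strategies $s_P$ and $s'_P$ coincide, and $s'_P$ was already winning. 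Finally, your worry about \fnname{Forget} having rewritten parts of $\Delta_P$ is a red herring in this step: \fnname{Forget} only touches $\Delta_{\oppo{P}}$ here, so the analysis for $P$ concerns solely the insertion of $(C',\vec{k}')$ and the removal of that pair from $\Gamma_P$.
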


\begin{proof}
  We prove this by induction on the sequence of functions calls. Initially, on the first call \fnname{Explore}($C$, $\vec{0}$, $[]$, $(\emptyset,\emptyset)$, $(\emptyset,\emptyset)$), the property vacuously holds since $\Delta_\exists = \Delta_\forall = \emptyset$. Now, assuming that the property holds when a function is called, we show that it holds also on every invocation performed by such function.

  Assume that the property holds when \fnname{Explore}($C$, $\vec{k}$, $\rho$, $\Gamma$, $\Delta$) is called. The only invocation where the property could possibly fail is \fnname{Backtrack}($\oppo{\owner{C}}$, $C$, $\rho$, $\Gamma$, $\Delta$) after $(C,\vec{k})$ has been added to the decisions for $\oppo{\owner{C}}$, when $\moves{C} = \emptyset$. However we can immediately see that $\oppo{\owner{C}}$ wins from $C$ since the opponent $\owner{C}$ cannot move (the strategy is always winning from $C$). On all the other calls the property is preserved since all decisions are unchanged and no assumption has been removed.

  Assume that the property holds when \fnname{Backtrack}($P$, $C$, $\rho$, $\Gamma$, $\Delta$) is called. There are only two invocations to check. Clearly the property is preserved on the first one, i.e., \fnname{Explore}($C''$, $\vec{k}''$, $\rho$, $\Gamma$, $\Delta$), since all decisions and assumptions are unchanged.
  The second case is instead more complex. This is when the function \fnname{Backtrack}($P$, $C'$, $t$, $\Gamma$, $\Delta$) is invoked. Let us analyse the strategy for one player at a time. First, consider the opponent $\oppo{P}$. Even though the assumption $(C',\vec{k}')$ might have been removed from $\Gamma_{\oppo{P}}$, all decisions in $\Delta_{\oppo{P}}$ depending on such assumption have been removed as well via the function \fnname{Forget}($\Delta_{\oppo{P}}$, $\Gamma_{\oppo{P}}$, $(C',\vec{k}')$). Let $\Delta'_{\oppo{P}}$ be the remaining decisions. By Lemma~\ref{le:forget-strat} we know that the strategy with assumptions $\Gamma_{\oppo{P}} \smallsetminus \{(C',\vec{k}')\}$ from decisions $\Delta'_{\oppo{P}}$ is winning as long as the strategy with assumptions $\Gamma_{\oppo{P}}$ from decisions $\Delta_{\oppo{P}}$ was winning. Then by inductive hypothesis the property still holds for ${\oppo{P}}$.
  Now we need to prove the property for player $P$ as well. That is, the strategy $s_P$ with assumptions $\Gamma_P \smallsetminus \{(C',\vec{k}')\}$ from decisions $\Delta_P \cup \{(C',\vec{k}')\}$ is winning in the modified game with assumptions $\Gamma_P \smallsetminus \{(C',\vec{k}')\}$. To do this we just need to show that for every position $\hat{C} \in \mathsf{C}(\Delta_P \cup \{(C',\vec{k}')\})$, every complete path in the tree of plays $\tau^{\hat{C}}_{s_P}$ is a play won by $P$. First, recall that every finite complete path in $\tau^{\hat{C}}_{s_P}$ terminates in a position of an assumption or a truth. In both cases such a finite play is always won by $P$ since in the modified game assumptions and truths correspond to positions owned by the opponent with no available moves.
  By inductive hypothesis we know that the strategy $s'_P$ with assumptions $\Gamma_P$ from decisions $\Delta_P$ was winning in the modified game with assumptions $\Gamma_P$. Notice that the two strategies can only differ on the position $C'$ of the new decision $(C',\vec{k}')$. It may be that $s'_P$ was not defined on $C'$, if there was no decision or assumption for such position before now. Anyway, this means that if $C'$ never occurs along the path, then the play must be won by $P$ since $s_P$ and $s'_P$ coincide on all the positions in the path and $s'_P$ was winning by inductive hypothesis.
  Therefore we just need to check those paths containing $C'$. If $C'$ appears just finitely many times along the path, consider the subpath starting from the successor $C''$ of the last occurrence of $C'$. Such subpath does not contain $C'$ and it is still infinite. Recalling that all positions in infinite paths must come from decisions and $C'' \neq C'$, then the subpath must be one of the complete paths in the tree of plays $\tau^{C''}_{s'_P}$. Thus, by inductive hypothesis the subpath, as well as the initial one, must be a play won by $P$.
  Otherwise, $C'$ appears infinitely many times along the path. Consider every subpath between two consecutive occurrences of $C'$, including only the first one. In such subpath let $C'' \neq C'$ be the last position, which is the predecessor of the second occurrence of $C'$. Observe that no decision $(C',\vec{k})$ could have been added after exploring $(C',\vec{k}')$ and before now, because we would necessarily have either $\vec{k} <_P \vec{k}'$ or $\vec{k} <_{\oppo{P}} \vec{k}'$, thus satisfying the condition of the third \cmd{if} branch of function \fnname{Explore}, in which case the exploration would have stopped and $(C',\vec{k})$ would have never been added as a decision. Furthermore, any decision $(C',\vec{k})$ added before exploring $(C',\vec{k}')$ must be such that $\vec{k}' < \vec{k}$, because otherwise the exploration would have stopped satisfying the second \cmd{if} branch of function \fnname{Explore} and $(C',\vec{k}')$ would have never been added as a decision. Therefore we must have $d_P(C') = \vec{k}'$ and, if $C' \in \mathsf{C}(\Delta_P) \smallsetminus \mathsf{C}(\Gamma_P)$ hence $s'_P$ is defined on $C'$, $d_P(C') <_P d'_P(C')$ since $d'_P(C')$ is the minimum $\vec{k}$ among the decisions for $C'$ added before $(C',\vec{k}')$. Moreover, in the latter case, by Lemma~\ref{le:high-prior-path}(a) we obtain that $d_P(C') <_P d'_P(C') \leq_P \mathit{next}(d_P(C''),\priori{C''})$ since $C'$ succeeds $C''$. If instead $C' \notin \mathsf{C}(\Delta_P) \smallsetminus \mathsf{C}(\Gamma_P)$, then we must have that $(C',\vec{k'}) \in \Gamma_P$, since $C' \in s_P(C'') = s'_P(C'') \subseteq \mathsf{C}(\Delta_P \cup \Gamma_P)$ and $C' \in \mathsf{C}(\Delta_P \cup \{(C',\vec{k}')\}) \smallsetminus \mathsf{C}(\Gamma_P \smallsetminus \{(C',\vec{k}')\})$ because $s_P(C') \neq \emptyset$. In fact, by inspecting the code it can be seen that $C'$ must have been added as an assumption after exploring $C''$, which then became a decision $(C'',d_P(C''))$, and it must have held $\vec{k}' <_P \mathit{next}(d_P(C''),\priori{C''})$ as required by the third \cmd{if} branch in the function \fnname{Explore}. Thus, in both cases we have $\vec{k}' = d_P(C') <_P \mathit{next}(d_P(C''),\priori{C''})$. And so by Lemma~\ref{le:high-prior-path}(b) we know that $P = \exists$ iff $\eta_h = \nu$, where $h$ is the highest priority appearing along the subpath.
  For now assume $P = \exists$. Since this holds for all subpaths between two consecutive occurrences of $C'$, and there are infinitely many of them, which sequenced form the initial infinite path, then there must exist a priority $h$ s.t.\ $\eta_h = \nu$ and it is the highest priority appearing infinitely many times along the complete path. A dual reasoning holds for $P = \forall$. Recalling that an infinite play is won by player $\exists$ (resp.\ $\forall$) if the highest priority $h \in \interval{m}$ appearing infinitely often is s.t.\ $\eta_h = \nu$ (resp.\ $\mu$), we deduce that the path is won by $P$, whoever $P$ is. And so we conclude that $s_P$ is indeed winning in the modified game with assumptions $\Gamma_P \smallsetminus \{(C',\vec{k}')\}$.
\end{proof}

Now we can finally present the correctness result.
\end{toappendix}

\begin{theoremrep}[correctness]
  \label{th:alg-correctness}
  Given a fixpoint game, if a call \textup{\fnname{Explore}}($C$,
  $\vec{0}$, $[]$, $(\emptyset,\emptyset)$, $(\emptyset,\emptyset)$)
  returns a player $P$, then $P$ wins the game from $C$.
\end{theoremrep}

\begin{proof}
  Assume that the call \fnname{Explore}($C$, $\vec{0}$, $[]$,
  $(\emptyset,\emptyset)$, $(\emptyset,\emptyset)$) returns some
  player $P$. Since the only instruction returning a value is in the
  function \fnname{Backtrack} and it is reached only when $\rho = []$,
  then \fnname{Backtrack}($P$, $C'$, $[]$, $\Gamma$, $\Delta$) must
  have been invoked for some $\Gamma$ and $\Delta$. Furthermore,
  $C' = C$ because $\rho = []$ is the list of positions from the root
  $C$ to the current node $C'$. Also, by Lemma~\ref{le:alg-assumpt} we
  have that $\Gamma_P = \emptyset$. Thus, by
  Lemma~\ref{le:alg-backtrack-pos} we have that
  $(C,\vec{k}) \in \Delta_P$ for some counter $\vec{k}$. And so by
  Lemma~\ref{le:alg-inv} we can immediately conclude that $P$ wins the
  game from $C$, since the modified game with no assumptions coincides
  with the original one.
\end{proof}

Notice that it is unnecessary to prove the converse implication, that is, if $P$ wins the game from $C$, then the call \fnname{Explore}($C$, $\vec{0}$, $[]$, $(\emptyset,\emptyset)$, $(\emptyset,\emptyset)$) returns $P$. Indeed, since the game can never result in a draw, this is equivalent to show that if the call \fnname{Explore}($C$, $\vec{0}$, $[]$, $(\emptyset,\emptyset)$, $(\emptyset,\emptyset)$) returns $\oppo{P}$, then $\oppo{P}$ wins the game from $C$. And this already holds by Theorem~\ref{th:alg-correctness}.

\subsubsection{Using Up-To Techniques in the Algorithm}
\label{sec:up-to-algorithm}

In the literature about bisimilarity checking, up-to techniques have
been fruitfully integrated with local checking algorithm for speeding
up the computation (see,
e.g.,~\cite{h:mise-oeuvre-preuves-bisim}).
Here we show that a similar idea can be developed for our local algorithm for general systems of fixpoint equations.

Let $E$ be a system of $m$ equations of the kind
$\vec{x} =_{\vec{\eta}} \vec{f} (\vec{x})$ over a complete lattice $L$
and let $\vec{u}$ be a compatible tuple of up-to functions for $E$.
By Theorem~\ref{th:up-to-sys} we have that the system $E\bar{\vec{u}}$
with equations
$\vec{x} =_{\vec{\eta}} \vec{f}(\bar{\vec{u}} \cdot \vec{x})$ has the
same solution as $E$. Now, since $\bar{\vec{u}}$ is a tuple of functions obtained as least fixpoints (see Definition~\ref{de:lcu}), the system $E\bar{\vec{u}}$ can be ``equivalently'' written as the system of $2m$ equations that we denote by
$\sysupto{E}{\vec{u}}$, defined as follows: 
\begin{align*}
  \vec{y} & =_{\vec{\mu}} (\vec{u} \cdot \vec{y}) \sqcup \vec{x}\\
  \vec{x} & =_{\vec{\eta}} \vec{f}(\vec{y})
\end{align*}

More precisely, we can show the following result.

\begin{theoremrep}[preserving solutions with up-to]
  \label{th:up-to-sys-sol}
  Let $E$ be a system of $m$ equations of the kind
  $\vec{x} =_{\vec{\eta}} \vec{f} (\vec{x})$ over a complete lattice $L$.
  Let $\vec{u}$ be a $m$-tuple of up-to functions compatible for $E$
  (Definition~\ref{de:compatible-systems}). The solution of the system $\sysupto{E}{\vec{u}}$
  is $\sol{\sysupto{E}{\vec{u}}} = (\sol{E},\sol{E})$.
\end{theoremrep}

\begin{proof}
  We proceed by induction on the length $m$ of the original
  system. The base case is vacuously true since, for $m = 0$, both
  systems have empty solution. Then, for $m > 0$, assume that the
  property holds for systems of size $m-1$. By definition of solution
  we have that the solution of $x_m$ is
  \begin{eqnarray*}
    \sol[2m]{\sysupto{E}{\vec{u}}} & = & \eta_m (\lambda x.\, f_m(\sol[1,m]{\subst{\sysupto{E}{\vec{u}}}{x_m}{x}}))
  \end{eqnarray*}
  and the parametric solution of $y_m$ is the function $s' : L^m \to L$
  \begin{eqnarray*}
    s'(\vec{x}') & = & \sol[m]{\subst{\sysupto{E}{\vec{u}}}{\vec{x}}{\vec{x}'}} = \mu (\lambda y.\, u_m(y) \sqcup x'_m).
  \end{eqnarray*}
  Observe that since $s'(\vec{x}')$ depends only on $x'_m$, we can define the parametric solution of $y_m$ using just a function $s : L \to L$ instead of $s'$
  \begin{eqnarray*}
    s(x) & = & \mu (\lambda y.\, u_m(y) \sqcup x).
  \end{eqnarray*}
  Substituting the parametric solution of $y_m$ in the solution of $x_m$ we obtain
  \begin{eqnarray*}
    \sol[2m]{\sysupto{E}{\vec{u}}} & = & \eta_m (\lambda x.\, f_m(\sol[1,m-1]{\subst{\subst{\sysupto{E}{\vec{u}}}{x_m}{x}}{y_m}{s(x)}},s(x))).
  \end{eqnarray*}
  Let $h(x) =
  f_m(\sol[1,m-1]{\subst{\subst{\sysupto{E}{\vec{u}}}{x_m}{x}}{y_m}{s(x)}},s(x))$
  and $g_x(y) = u_m(y) \sqcup x$, so that
  $\sol[2m]{\sysupto{E}{\vec{u}}} = \eta_m (h)$ and $s(x) = \mu
  (g_x)$. Clearly $h$ and $g$ are both monotone (hence $s$ as
  well). The former because the solutions of a system (see
  \cite{BKMP:FPCL}) and $\vec{f}$ are monotone, the latter because
  both $u_m$ and the supremum are. Also notice that $s$ is an
  extensive function, i.e., $x \sqsubseteq s(x)$ for all $x$. In fact,
  since $s$ computes a (least) fixpoint we have that $s(x) = u_m(s(x))
  \sqcup x$, and clearly $x \sqsubseteq u_m(s(x)) \sqcup x$ by
  definition of supremum. Furthermore, we can prove that $s$ is
  compatible (wrt.\ $h$, i.e., $s(h(x)) \sqsubseteq h(s(x))$ for all
  $x$), continuous, and strict, whenever $u_m$ satisfies those
  conditions, respectively. First, if $u_m$ is continuous, then so is
  $g$ in both variables, since $\sqcup$ is continuous. Then, since
  $s(x)$ is the least fixpoint of $g_x$, it is immediate that $s$ is
  continuous as well. Recalling that $s(x) = g^\alpha_x(\bot)$ for some ordinal $\alpha$, both remaining properties can be proved by transfinite induction on $g^\alpha_x(\bot)$ for every $\alpha$. First we show that for all $x$, $g^\alpha_{h(x)}(\bot) \sqsubseteq h(s(x))$ for every ordinal $\alpha$ (hence $s(h(x)) \sqsubseteq h(s(x))$). For $\alpha = 0$, we have $g^0_{h(x)}(\bot) = \bot \sqsubseteq h(s(x))$. For a successor ordinal $\alpha = \beta + 1$, we have $g^{\beta+1}_{h(x)}(\bot) = g_{h(x)}(g^\beta_{h(x)}(\bot))$, and by inductive hypothesis we know that $g^\beta_{h(x)}(\bot) \sqsubseteq h(s(x))$. Then
  \begin{align*}
    &g_{h(x)}(g^\beta_{h(x)}(\bot)) \\
    \sqsubseteq &\qquad\mbox{[since $g$ is monotone]}\\
    &g_{h(x)}(h(s(x))) \\
    = &\qquad \mbox{[by definition of $g$]}\\
    &u_m(h(s(x))) \sqcup h(x) \\
    = &\qquad \mbox{[by definition of $h$]}\\
    &u_m(f_m(\sol[1,m-1]{\subst{\subst{\sysupto{E}{\vec{u}}}{x_m}{s(x)}}{y_m}{s^2(x)}},s^2(x))) \sqcup h(x) \\
    \sqsubseteq &\qquad\mbox{[by compatibility of $\vec{u}$]} \\
    &f_m(\vec{u} \cdot (\sol[1,m-1]{\subst{\subst{\sysupto{E}{\vec{u}}}{x_m}{s(x)}}{y_m}{s^2(x)}},s^2(x))) \sqcup h(x) 
  \end{align*}
  Observe that $u_m(s(z)) \sqsubseteq s(z) = g_{z}(s(z)) = u_m(s(z)) \sqcup z$ for all $z$. A similar reasoning applies to the other solutions as well, obtaining that $u_i(\sol[i]{\subst{\subst{\sysupto{E}{\vec{u}}}{x_m}{s(x)}}{y_m}{s^2(x)}}) \sqsubseteq \sol[i]{\subst{\subst{\sysupto{E}{\vec{u}}}{x_m}{s(x)}}{y_m}{s^2(x)}}$ for all $i \in \interval{m-1}$. Therefore we have
  \begin{align*}
    & f_m(\vec{u} \cdot (\sol[1,m-1]{\subst{\subst{\sysupto{E}{\vec{u}}}{x_m}{s(x)}}{y_m}{s^2(x)}},s^2(x))) \sqcup h(x) & \\
    \sqsubseteq & \qquad\mbox{[since $f_m$ is monotone]}\\
    & f_m(\sol[1,m-1]{\subst{\subst{\sysupto{E}{\vec{u}}}{x_m}{s(x)}}{y_m}{s^2(x)}},s^2(x)) \sqcup h(x) \\
    = & \qquad \mbox{[by definition of $h$]}\\
    & h(s(x)) \sqcup h(x) \\ 
    \sqsubseteq & \qquad \mbox{[since $h$ is monotone]}\\
    & h(s(x) \sqcup x) \\
    = & \qquad \mbox{[since $s$ is extensive]} \\
    & h(s(x)) 
  \end{align*}
  And so we established that $g^{\beta+1}_{h(x)}(\bot) \sqsubseteq h(s(x))$. For $\alpha$ limit ordinal, by inductive hypothesis we immediately have that $g^\alpha_{h(x)}(\bot) = \lub\limits_{\beta < \alpha} g^\beta_{h(x)}(\bot) \sqsubseteq \lub h(s(x)) = h(s(x))$. Now we show that $g^\alpha_\bot(\bot) = \bot$ for every ordinal $\alpha$. For $\alpha = 0$, we have $g^0_\bot(\bot) = \bot$. For $\alpha = \beta + 1$, by inductive hypothesis we have that $g^{\beta+1}_\bot(\bot) = g_\bot(g^\beta_\bot(\bot)) = g_\bot(\bot)$. And in turn, $g_\bot(\bot) = u_m(\bot) \sqcup \bot = \bot$, since $u_m$ is strict. For $\alpha$ limit ordinal, by inductive hypothesis we obtain that $g^\alpha_\bot(\bot) = \lub\limits_{\beta < \alpha} g^\beta_\bot(\bot) = \lub \bot = \bot$. Now we have two different cases depending on $\eta_m$.
  \begin{itemize}
    \item $\eta_m = \nu$\\
    In this case $\sol[2m]{\sysupto{E}{\vec{u}}} = h^\alpha(\top)$ for some ordinal $\alpha$. Here we show that actually $s(h^\alpha(\top)) = h^\alpha(\top)$ for every ordinal $\alpha$. Since as we mentioned above $s$ is extensive, we just need to prove that $s(h^\alpha(\top)) \sqsubseteq h^\alpha(\top)$ for every ordinal $\alpha$. We proceed by transfinite induction on $\alpha$. For $\alpha = 0$, we have $s(h^0(\top)) \sqsubseteq \top = h^0(\top)$. If $\alpha$ is a successor ordinal $\beta+1$, assuming the property holds for $\beta$, we show that $s(h^{\beta+1}(\top)) \sqsubseteq h^{\beta+1}(\top)$. Since $h$ is monotone, by inductive hypothesis we have that $h(s(h^\beta(\top))) \sqsubseteq h(h^\beta(\top)) = h^{\beta+1}(\top)$. Recalling that $s(h(x)) \sqsubseteq h(s(x))$ for all $x$, we also have that $s(h^{\beta+1}(\top)) = s(h(h^{\beta}(\top))) \sqsubseteq h(s(h^\beta(\top)))$. When $\alpha$ is a limit ordinal we have that $h^\alpha(\top) = \glb\limits_{\beta < \alpha} h^\beta(\top)$. Since $s$ is monotone, we have that $s(h^\alpha(\top)) = s(\glb\limits_{\beta < \alpha} h^\beta(\top)) \sqsubseteq \glb\limits_{\beta < \alpha} s(h^\beta(\top))$. And since by inductive hypothesis $s(h^\beta(\top)) \sqsubseteq h^\beta(\top)$ for all $\beta < \alpha$, we conclude also that $\glb\limits_{\beta < \alpha} s(h^\beta(\top)) \sqsubseteq \glb\limits_{\beta < \alpha} h^\beta(\top)$.
    \item $\eta_m = \mu$\\
    In this case $\sol[2m]{\sysupto{E}{\vec{u}}} = h^\alpha(\bot)$ for some ordinal $\alpha$. Recall also that since $\eta_m = \mu$, by hypothesis we know that $u_m$ is continuous and strict. In such case, as shown above, $s$ is continuous and strict as well. Again, we already know that $s$ is extensive, so we just prove by transfinite induction that $s(h^\alpha(\bot)) \sqsubseteq h^\alpha(\bot)$ for every ordinal $\alpha$. For $\alpha = 0$, we have $s(h^0(\bot)) = s(\bot) = \bot$, since $s$ is strict. If $\alpha$ is a successor ordinal $\beta+1$, assuming the property holds for $\beta$, we show that $s(h^{\beta+1}(\bot)) \sqsubseteq h^{\beta+1}(\bot)$. Since $h$ is monotone, by inductive hypothesis we have that $h(s(h^\beta(\bot))) \sqsubseteq h(h^\beta(\bot)) = h^{\beta+1}(\bot)$. Recalling that $s(h(x)) \sqsubseteq h(s(x))$ for all $x$, we also have that $s(h^{\beta+1}(\bot)) = s(h(h^{\beta}(\bot))) \sqsubseteq h(s(h^\beta(\bot)))$. When $\alpha$ is a limit ordinal we have that $h^\alpha(\bot) = \lub\limits_{\beta < \alpha} h^\beta(\bot)$. Since $s$ is continuous, we have that $s(h^\alpha(\bot)) = s(\lub\limits_{\beta < \alpha} h^\beta(\bot)) = \lub\limits_{\beta < \alpha} s(h^\beta(\bot))$. And since by inductive hypothesis $s(h^\beta(\bot)) \sqsubseteq h^\beta(\bot)$ for all $\beta < \alpha$, we conclude also that $\lub\limits_{\beta < \alpha} s(h^\beta(\bot)) \sqsubseteq \lub\limits_{\beta < \alpha} h^\beta(\bot)$.
  \end{itemize}
  So in both cases we have $s(h^\alpha(\top)) = h^\alpha(\top)$ or $s(h^\alpha(\bot)) = h^\alpha(\bot)$), respectively, for every ordinal $\alpha$. Consider the function $h'(x) = f_m(\sol[1,m-1]{\subst{\subst{\sysupto{E}{\vec{u}}}{x_m}{x}}{y_m}{x}},x)$. The previous fact implies that actually $\eta_m (h') = \eta_m (h) = \sol[2m]{\sysupto{E}{\vec{u}}}$. Furthermore, for the same reason we have that $s(\sol[2m]{\sysupto{E}{\vec{u}}}) = \sol[2m]{\sysupto{E}{\vec{u}}}$. Since $\sol[2m]{\sysupto{E}{\vec{u}}}$ is the solution of $x_m$ and by definition of solution $s(\sol[2m]{\sysupto{E}{\vec{u}}}) = \sol[m]{\sysupto{E}{\vec{u}}}$ is that of $y_m$, this means that $x_m$ and $y_m$ have the same solution in $\sysupto{E}{\vec{u}}$.
  So we can rewrite the solutions of $x_m$ and $y_m$ as $\eta_m (h')$, that is
  \begin{eqnarray*}
    \sol[2m]{\sysupto{E}{\vec{u}}} = \sol[m]{\sysupto{E}{\vec{u}}} & = & \eta_m (\lambda x.\, f_m(\sol[1,m-1]{\subst{\subst{\sysupto{E}{\vec{u}}}{x_m}{x}}{y_m}{x}},x)).
  \end{eqnarray*}
  Now, observe that the system
  $\subst{\subst{\sysupto{E}{\vec{u}}}{x_m}{x}}{y_m}{x}$ is actually
  $\sysupto{\subst{E}{x_m}{x}}{\vec{u}_{1,m-1}}$. Therefore, since
  $\subst{E}{x_m}{x}$ has size $m-1$, by inductive hypothesis we know
  that
  \begin{align*}
    \sol[1,m-1]{\subst{\subst{\sysupto{E}{\vec{u}}}{x_m}{x}}{y_m}{x}}
    &=
    \sol[m,2m-2]{\subst{\subst{\sysupto{E}{\vec{u}}}{x_m}{x}}{y_m}{x}} \\
    &= \sol{\subst{E}{x_m}{x}}.
  \end{align*}
  Thus, substituting these solutions in those of $x_m$ and $y_m$
  above, we obtain
  \begin{eqnarray*}
    \sol[2m]{\sysupto{E}{\vec{u}}} = \sol[m]{\sysupto{E}{\vec{u}}} & = & \eta_m (\lambda x.\, f_m(\sol{\subst{E}{x_m}{x}},x))
  \end{eqnarray*}
  which is also the definition of the solution of $x_m$ in $E$. Which means that $\sol[2m]{\sysupto{E}{\vec{u}}} = \sol[m]{\sysupto{E}{\vec{u}}} = \sol[m]{E}$. Then, the remaining solutions are
  \begin{align*}
    & (\sol[1,m-1]{\sysupto{E}{\vec{u}}}, \sol[m+1,2m-1]{\sysupto{E}{\vec{u}}}) & \\
    & = \sol{\subst{\subst{\sysupto{E}{\vec{u}}}{x_m}{\sol[2m]{\sysupto{E}{\vec{u}}}}}{y_m}{\sol[m]{\sysupto{E}{\vec{u}}}}} & \mbox{[by definition of solution]}\\
    & = \sol{\subst{\subst{\sysupto{E}{\vec{u}}}{x_m}{\sol[m]{E}}}{y_m}{\sol[m]{E}}} & \\
    & = (\sol{\subst{E}{x_m}{\sol[m]{E}}}, \sol{\subst{E}{x_m}{\sol[m]{E}}}) & \mbox{[by inductive hypothesis]}\\
    & = (\sol[1,m-1]{E}, \sol[1,m-1]{E}) & \mbox{[by definition of solution]}
  \end{align*}
  This and the previous fact allow us to conclude that
  \[ \sol{\sysupto{E}{\vec{u}}} = (\sol[1,m-1]{E}, \sol[m]{E},
    \sol[1,m-1]{E}, \sol[m]{E}), \] that is indeed
  $\sol{\sysupto{E}{\vec{u}}} = (\sol{E}, \sol{E})$.
\end{proof}

By relying on Theorem~\ref{th:up-to-sys-sol} we can derive an
algorithm that exploits the up-to function $\vec{u}$. It is
obtained by instantiating the general algorithm discussed before to
the system $\sysupto{E}{\vec{u}}$ and suitably restricting the moves
considered in the exploration. Roughly, the idea is to allow the use of the up-to function only when it leads immediately to an assumption or a decision.
This is in some sense similar to what is done for bisimilarity
checking in~\cite{h:mise-oeuvre-preuves-bisim}, where the up-to
function is used only to enlarge the set of states which are
considered bisimilar.
More precisely, when the exploration is in a position $(b,i)$
corresponding to one of the added equations
$y_i =_\mu u_i(y_i) \sqcup x_i$, according to the definition of the
game, a possible move would be any $2m$-tuple of sets
$(\vec{Y},\vec{X})$ such that
$b \sqsubseteq u_i(\bigsqcup Y_i) \sqcup \bigsqcup X_i$. First of all,
since only the $i$-th and $(m+i)$-th components $Y_i$ and $X_i$ play a
role and we can restrict to minimal moves (see
\S\ref{se:game-characterization}), we can assume
$X_j=Y_j=\emptyset$ for $j \neq i$. Moreover, for $X_i$ and $Y_i$, we
only allow two types of moves:
\begin{enumerate}

\item  $X_i=\{b\}$ and $Y_i=\emptyset$, which means that we keep the focus on element $b$ and just jump to the ``original'' equation $x_i=_{\eta_i}f(y_i)$, or

\item $X_i=\emptyset$ and all positions in $Y_i$ will immediately
  become assumptions or decisions when explored.
\end{enumerate}

At the level of the pseudocode, this only means that the action
``pick'' needs to be refined. Instead of
simply choosing randomly a move in $\moves{C}$, in some cases it has
to perform a constrained choice. This is made precise below.

\begin{definition}[up-to algorithm]
  \label{de:alg-upto}
  Let $E$ be a system of $m$ fixpoint equations over the complete
  lattice $L$ and let $\vec{u}$ be a compatible tuple of up-to
  function for $E$.
  The up-to algorithm for $E$ based on $\vec{u}$ is just the algorithm
  in Fig.~\ref{fi:alg} applied to the system $\sysupto{E}{\vec{u}}$,
  where, in function \fnname{Explore}($C$, $\vec{k}$, $\rho$,
  $\Gamma$, $\Delta$), when $C= (b,i)$ with $i \in \interval{m}$,
  the action ``pick'' can select only moves $C' = (\vec{Y},\vec{X})$ such that
  $Y_j=X_j =\emptyset$ for $j \neq i$ and $X_i, Y_i$ complying with
  either of the following conditions
  \begin{enumerate}
  \item
    \label{alg-upto-cond-a}
    $Y_i = \emptyset$ and $X_{i} = \{b\}$ or
  \item
    \label{alg-upto-cond-b}
    $X_i = \emptyset$ and for all $b' \in Y_i$ it
    holds
    \begin{enumerate}
    \item
      \label{alg-upto-cond-b1}
      $((b',i),\vec{k}') \in \Delta_\exists$ with
      $\vec{k}' \leq_\exists \mathit{next}(\vec{k},i)$ or
    \item
      \label{alg-upto-cond-b2}
      $((b',i),\vec{k}',\pi) \in \rho$ with
      $\vec{k}' <_\exists \mathit{next}(\vec{k},i)$.
    \end{enumerate}
  \end{enumerate}
\end{definition}

Condition (\ref{alg-upto-cond-a}) has been already clarified above. Condition (\ref{alg-upto-cond-b}) is a formal translation of the fact that $Y_i$ can contain only positions for which there are usable decisions (case (\ref{alg-upto-cond-b1})) or that will immediately become assumptions (case \ref{alg-upto-cond-b2})).


Clearly the modification does not affect termination on finite
lattices (in fact, we just restrict the possible moves of a procedure
which is known to be terminating). We next show that the up-to
algorithm is also correct.

\begin{theoremrep}[correctness with up-to]
  \label{th:alg-up-to-correct}
  Let $E$ be a system of $m$ equations of the kind
  $\vec{x} =_{\vec{\eta}} \vec{f} (\vec{x})$ over a complete lattice
  $L$. Let $\vec{u}$ a compatible $m$-tuple of up-to functions for
  $E$. Then the up-to algorithm associated with the system
  $\sysupto{E}{\vec{u}}$ as given in Definition~\ref{de:alg-upto} is
  correct, i.e., if a call \textup{\fnname{Explore}}($C$, $\vec{0}$,
  $[]$, $(\emptyset,\emptyset)$, $(\emptyset,\emptyset)$) returns a
  player $P$, then $P$ wins the game from $C$.
\end{theoremrep}

\begin{proof}
  Let $G$ be the fixpoint game associated with the initial system $E$,
  $G_u$ be the one associated with the modified system
  $\sysupto{E}{\vec{u}}$, and $G'_u$ be the game obtained from $G_u$
  by restricting the moves of player $\exists$ from positions
  associated with variables $y_i$ to only those satisfying either
  condition~(\ref{alg-upto-cond-a})
  or~~(\ref{alg-upto-cond-b}). Observe that the moves from every
  position controlled by player $\exists$ of $G$ are included in the
  moves from the corresponding position in $G'_u$ since they satisfy
  condition~(\ref{alg-upto-cond-a}), since in $E$ there are no up-to
  functions. Therefore, every winning strategy for $\exists$ in $G$
  can be easily converted into a winning strategy for the same player
  in $G'_u$. So the winning positions of player $\exists$ in $G$ are
  necessarily included in those of $G'_u$. Furthermore, the same
  clearly happens between $G'_u$ and $G_u$ since the moves of
  $\exists$ in $G'_u$ are defined as a restriction of those in
  $G_u$. Then, calling $W_\exists(G)$ the set of winning positions of
  player $\exists$ in the corresponding $G$, we have that
  $W_\exists(G) \subseteq W_\exists(G'_u) \subseteq W_\exists(G_u) =
  W_\exists(G)$, where the last equality holds by
  Theorem~\ref{th:up-to-sys-sol}. Since in our case
  every position not winning for $\exists$ is necessarily winning for
  $\forall$, this means that even if we restrict certain moves of
  player $\exists$, thus playing in the game $G'_u$, we still have the
  same exact winning positions for both players.
\end{proof}

The proof is based on the observation that any winning strategy for player  $\exists$
in the game associated with the original system $E$ can be replicated in the game associated with the modified system $\sysupto{E}{\vec{u}}$, even when the moves are restricted as in Definition~\ref{de:alg-upto}. This is done by choosing always moves corresponding to case (\ref{alg-upto-cond-a}) in Definition~\ref{de:alg-upto}.
Then strategies in the constrained game for 
$\sysupto{E}{\vec{u}}$ are also valid in the unconstrained game.  We conclude since, by Theorem~\ref{th:up-to-sys-sol}, we know that winning positions for  player $\exists$ are the same in the game for $E$ and in the game for $\sysupto{E}{\vec{u}}$.

Further optimizations of the up-to algorithm are possible by
exploiting the fact that a variable $y_i$ has the same solution of the
corresponding $x_i$ in the system $\sysupto{E}{\vec{u}}$. Intuitively,
decisions and assumptions for positions associated with a variables
$y_i$ could be used as decisions and assumptions for the corresponding
positions of variable $x_i$, and the other way around.

\begin{example}[model-checking $\mu$-calculus up-to bisimilarity]
  \label{ex:mc-mu-bisim}
  In Example~\ref{ex:mc-mu} we showed how the algorithm would solve a
  model-checking problem by exploring the corresponding fixpoint
  game. Suppose that this time we also want to use up-to bisimilarity
  as an up-to technique to answer the same question, that is, whether
  the state $a \in \mathbb{S}$ satisfies the formula $\phi = \mu
  x_2.((\nu x_1.(p \land\Box x_1))\lor\Diamond x_2)$. In
  Example~\ref{ex:up-to-bisim} we presented the up-to function $u_\sim
  : \Pow{\mathbb{S}} \to \Pow{\mathbb{S}}$ corresponding to up-to
  bisimilarity defined as $u_\sim(X) = \{s \in \mathbb{S}\,\mid\,s
  \sim_T s'\ \land\ s' \in X\}$. In order to apply the procedure
  described above, first we need to build the system
  $\sysupto{E}{(u_\sim,u_\sim)}$, which is
  \begin{align*}
    y_1 & =_{\mu} u_\sim(y_1) \cup x_1 & \qquad
    x_1 & =_{\nu} \{b, d, e\} \cap \sembox_T y_1 \\
    y_2 & =_{\mu} u_\sim(y_2) \cup x_2 & \qquad
    x_2 & =_{\mu} y_1 \cup \semdia_T y_2
  \end{align*}
  Then, to check whether the state $a$ satisfies the formula $\phi$ we invoke the function \fnname{Explore}($(a,4)$, $\vec{0}$, $[]$, $(\emptyset, \emptyset)$, $(\emptyset, \emptyset)$), where the index $4$ is that of the variable $x_2$ in the system $\sysupto{E}{(u_\sim,u_\sim)}$. Then, the algorithm behaves in similar fashion to what described in Example~\ref{ex:mc-mu}. However, this time the exploration of position $(d,1)$ with counter $(0,0,1,2)$ is pruned by using the up-to function. Recalling that position $(b,1)$ occurred in the past, hence it is included in the playlist, with counter $(0,0,0,2)$, we have that condition~(\ref{alg-upto-cond-b}) above holds here for the move $(\{b\},\emptyset,\emptyset,\emptyset)$ since $d \sim b$, hence $d \in u_\sim(\{b\}) \cup \emptyset$, and $(0,0,0,2) <_\exists \mathit{next}((0,0,1,2),1) = (1,0,1,2)$. This leads to making an assumption for $(b,1)$ and then backtracking up to the root. The same happens when exploring the other branch, that is position $(e,1)$, since also $e \sim b$. Similarly to the previous example, the last invocation \fnname{Backtrack}($\exists$, (a,4), [], $\Gamma$, $\Delta$) returns player $\exists$. Indeed, $\exists$ wins starting from position $(a,4)$ since the state $a$ satisfies the formula $\phi$.
\end{example}

\section{Conclusion}
\label{se:conclusion}

Our contribution is based on the notion of
approximation as formalised in abstract
interpretation~\cite{cc:ai-unified-lattice-model,CC:SDPAF}.
Due to the intimate connection of Galois connections and closure
functions, there is a close correspondence with up-to techniques for
enhancing coinduction proofs~\cite{p:complete-lattices-up-to,ps:enhancements-coinductive},
originally developed for CCS~\cite{Mil:CC}. However, as far
as we know, recent research has only started to explore this
connection: \cite{bggp:sound-up-to-complete-abstract} explains the
relation between sound up-to techniques and complete abstract
domains in the setting where the semantic function has an
adjoint. This adjunction or Galois connection plays a different role
than the abstractions: it gives the existential player a unique best
move, a concept explored in \S\ref{sec:on-the-fly-special}.

Fixpoint equation systems largely derive their interest from
$\mu$-calculus model-checking~\cite{bw:mu-calculus-modcheck}.
Evaluating $\mu$-calculus formulae on a transition system can be
reduced to solving a parity game and the exact complexity of this task
is still open. Progress measures, introduced
in~\cite{j:progress-measures-parity}, allow one to solve parity games
with a complexity which is polynomial in the number of states and
exponential in (half of) the alternation depth of the
formula. Recently quasi-polynomial algorithms for parity
games~\cite{CJKLS:DPGQPT,jl:success-progress-measures,l:modal-mu-parity-quasi-polynomial}
have been devised.  Instead of improving the complexity bounds, our
aim here is to introduce heuristics, based on a local algorithm
and up-to functions that are known to achieve good efficiency in
practice, in particular we explained how to combine the up-to
technique with $\mu$-calculus model-checking algorithm.

Many papers deal with abstraction in the setting of $\mu$-calculus
model checking. We noted that the results on simulation-based abstraction in~\cite{lgsbb:property-preserving-abstractions} can be obtained as an instance of our framework. The abstraction of the $\mu$-calculus along a Galois connection and its soundness is discussed in~\cite{BG:CBAS}.
A general framework for abstract interpretation of temporal calculi and logics is developed in~\cite{CC:TLA}. In particular, an abstract calculus for expressing nested fixpoint expressions is studied,  parametric with respect to the basic operators. The calculus is interpreted over complete boolean lattices and conditions ensuring the soundness and the completeness of the abstraction along a Galois connection are singled out. Such results are closely related to those in Section~\ref{se:abstractions}. The main differences reside in the fact that we work with general complete lattices, rather than with boolean lattices. In addition, we treat separately soundness and completeness, and, in order to establish a connection with up-to techniques, we distinguish two forms of completeness (for the abstraction and for the concretisation).

We showed -- for a special case -- how local algorithms
inspired by
\cite{BP:NFA,bggp:sound-up-to-complete-abstract,h:proving-up-to,h:mise-oeuvre-preuves-bisim}
for a single (greatest) fixpoint equation can be adapted to the case
of general lattices. For the general case of arbitrary fixpoint
equation systems a considerably more complex generalisation along the
lines of \cite{ss:practical-modcheck-games} is possible, but omitted due to lack of space.

The use of assumptions as stopping conditions in the algorithm is
reminiscent of parameterized coinduction
\cite{sm:precongruences-param-coinduction,HNDV:PPCP}, closely
related to up-to-techniques, as spelled out in
\cite{Pou:CAWU}.

\smallskip

The notion of progress measures that has been
studied in \cite{BKMP:FPCL} can be adapted to the game for arbitrary
complete (rather than just continuous) lattices, introduced in this
paper. A natural question
is whether the local
algorithm arises as an instance of the single equation algorithm
instantiated with the progress measure fixpoint equation.

With respect to the applications, we believe that our case study on
abstractions respectively simulations for $\mu$-calculus
model-checking can also be generalised to modal respectively
mixed transition systems
\cite{s:relations-abstraction-refinement,dgg:ai-reactive,lt:modal-process-logic}
or to abstraction for the full $\mu$-calculus as studied in
\cite{glls:abstraction-full-mu-calculus} by combining both under-
and over-approximations.
Furthermore, we plan to further study over-approximations for
fixpoint equations over the reals, closely connected to
probabilistic logics. In particular, we will investigate under
which circumstances one can obtain guarantees to be close to the
exact solution or to compute the exact solution directly.
Another interesting area is the use of up-to techniques for
behavioural metrics \cite{bkp:up-to-behavioural-metrics-fibrations}.

\begin{toappendix}
  \section{Comparison to the Bonchi/Pous Algorithm}
  \label{sec:bonchi-pous}

  In a seminal paper \cite{BP:NFA} Bonchi and Pous revisited the
  question of checking language equivalence for non-deterministic
  automata and presented an algorithm based on an up-to congruence
  technique that behaves very well in practice.

  We will here give a short description of this algorithm and then
  explain how it arises as a special case of the algorithm developed
  in \S\ref{sec:on-the-fly-special}.

  We are given a non-deterministic finite automaton
  $(Q,\Sigma,\delta,F)$, where $Q$ is the finite set of states,
  $\Sigma$ is the finite alphabet,
  $\delta\colon Q\times \Sigma\to \Pow{Q}$ is the transition function
  and $F\subseteq Q$ is the set of final states.  Note that we omit
  initial states. Given $a\in \Sigma$, $X\subseteq Q$ we define
  $\delta_a(X) = \bigcup_{q\in X} \delta(q,a)$.

  Given $q_1,q_2\in Q$, the aim is to show whether $q_1,q_2$ accept
  the same language (in the standard sense).

  In order to do this, the algorithm performs an on-the-fly
  determinization and constructs a bisimulation relation
  $R\subseteq \Pow{Q}\times \Pow{Q}$ on the determinized
  automaton. This relation has to satisfy the following properties:
  \begin{itemize}
  \item $\{q_1\}\,R\,\{q_2\}$
  \item Whenever $X_1\,R\,X_2$, then
    \begin{itemize}
    \item $\delta_a(X_1)\,R\,\delta_a(X_2)$ for all $a\in\Sigma$
      (\emph{transfer property})
    \item and $X_1\cap F\neq\emptyset\iff X_2\cap F\neq\emptyset$
      (\emph{one set is accepting iff the other is accepting})
    \end{itemize}
  \end{itemize}
  Due to the up-to technique there is no need to fully enumerate $R$.
  Instead in the second item above, it suffices to show that
  $\delta_a(X_1)\,c(R)\,\delta_a(X_2)$ where $c(R)$ is the congruence
  closure of $R$, i.e., the least relation $R'$ containing $R$ that is
  an equivalence and satisfies that $X_1\,R\,X_2$ implies
  $X_1\cup X\,R\,X_2\cup X$ (for $X_1,X_2,X\subseteq Q$). A major
  contribution of \cite{BP:NFA} is an algorithm for efficiently
  checking whether two given sets are in the congruence closure of a
  given relation. Here we will simply assume that this procedure is
  given and use it as a black box.

  \smallskip

  We will now translate this into our setting: the lattice is
  $L = \Pow{\Pow{Q}\times \Pow{Q}}$ (the lattice of all relations over
  the powerset of states) with inclusion as partial order. The basis
  $B$ consists of all singletons $\{(X_1,X_2)\}$ where
  $X_1,X_2\subseteq Q$. That is, we consider the setting of of
  \S\ref{sec:on-the-fly-special}.

  The behaviour map $f$ is given as follows: $f(R) = f^*(R)\cap C$
  where
  \begin{eqnarray*}
    f^*(R) & = & \{(X_1,X_2) \mid (\delta_a(X_1),\delta_a(X_2))\in R
    \text{ for all $a\in \Sigma$}\} \\
    C & = & \{(X_1,X_2) \mid X_1\cap F=\emptyset \iff X_2\cap F =
    \emptyset \}
  \end{eqnarray*}
  We want to solve a single fixpoint equation $R =_\nu f(R)$ where we
  are interested in the greatest fixpoint. In particular, we want to
  check whether $(Q_1,Q_2)\in R$ (where $Q_1 = \{q_1\}$,
  $Q_2 = \{q_2\}$) or alternatively $I = \{(Q_1,Q_2)\}\subseteq R$.

  Since we have determinized the automaton, $f^*$ has a left adjoint
  $f_*$, given as
  \[ f_*(R) = \{(\delta_a(X_1),\delta_a(X_2))\mid (X_1,X_2)\in R,
    a\in\Sigma\}. \] Now we can start exploring the game
  positions. Starting with $I = \{(Q_1,Q_2)\}\subseteq F$, the only
  move of $\exists$ is to play
  $\{\{(X_1,X_2)\} \mid (X_1,X_2)\in f_*(I)\}$, then it is the turn of
  $\forall$ who can choose any singleton set $\{(X_1,X_2)\}$ and one
  has to explore all those singletons. This continues until one
  encounters a singleton $\{(X_1,X_2)\}\not\subseteq C$ (which implies
  that $\exists$ has no move and loses) or one finds a set
  $\{(X_1,X_2)\}$ where one can cut off a branch due to the up-to
  technique -- more concretely $(X_1,X_2)\in c(W)$ where $W$ is the
  collection of all pairs visited so far on all paths and $c(W)$ is
  its congruence closure. One can conclude that $\exists$ wins if all
  encountered pairs are in $C$. This is a straightforward instance of
  the more general algorithm, enriched with an up-to technique, as
  explained in \S\ref{sec:on-the-fly-special}.
\end{toappendix}

\bibliography{references}

\appendix

\end{document}